\definecolor{bluegray}{rgb}{0.4, 0.6, 0.8}
\definecolor{turquoise}{rgb}{0.2, 0.7, 0.6}
\title{Topological methods for studying contextuality:\\
{\Large $N$-cycle scenarios and beyond}
}
\author{Aziz Kharoof\footnote{aziz.kharoof@bilkent.edu.tr}}
\author{Selman Ipek\footnote{selman.ipek@bilkent.edu.tr}}
\author{Cihan Okay\footnote{cihan.okay@bilkent.edu.tr}}
\affil{{\small{Department of Mathematics, Bilkent University, Ankara, Turkey}}}
\date{\today}
\begin{document}
  \maketitle  
  
\begin{abstract}
{
Simplicial distributions are combinatorial models describing distributions on spaces of measurements and outcomes that generalize non-signaling distributions on contextuality scenarios.
This paper studies simplicial distributions on $2$-dimensional measurement spaces by introducing new topological methods.
Two key ingredients are a geometric interpretation of Fourier--Motzkin elimination and a technique based on collapsing of measurement spaces. 
Using the first one, we provide a new proof of Fine's theorem characterizing non-contextual distributions on $N$-cycle scenarios. 
Our approach goes beyond these scenarios and can describe non-contextual distributions on scenarios obtained by gluing cycle scenarios of various sizes. 
The second technique is used for detecting contextual vertices and deriving new Bell inequalities. Combined with these methods, we explore a monoid structure on simplicial distributions. 
}
\end{abstract}

\tableofcontents

\section{Introduction}

{Quantum contextuality is a fundamental feature of collections of probability distributions obtained from quantum measurements.}
{In a classical setting, experimental statistics are derivable from a joint probability distribution. Measurements of quantum observables, however, do not satisfy this principle, leading to violations of Bell inequalities, or more generally, non-contextual inequalities, which serve as a witness of this quintessentially non-classical phenomenon. That such violations were \textit{necessary} was first discovered by Bell \cite{bell64}. Later, Fine \cite{fine1982hidden,fine1982joint} showed that such inequalities were also \textit{sufficient} for recovering a classical description in the well-known Clauser--Horne---Shimony--Holt (CHSH) scenario \cite{chsh69}.}

A systematic study of contextuality scenarios using sheaf theory 
{was}
introduced by Abramsky--Brandenburger in \cite{abramsky2011sheaf}.
Later topological ideas from group cohomology 
{were}
introduced to the study of contextuality \cite{Coho}, {with an emphasis on} investigating quantum advantage in measurement-based quantum computation. {More recently, a unified framework for the study of contextuality was introduced, based on combinatorial representations of topological spaces known as simplicial sets \cite{okay2022simplicial}.}
The basic objects in this theory are called simplicial distributions. {This theory} subsumes the theory of non-signaling distributions and goes beyond by formulating the notion of distributions on spaces rather than sets.
{Contextuality can be formulated in this generality.}
 
\begin{figure}[h!]
\centering
\includegraphics[width=.4\linewidth]{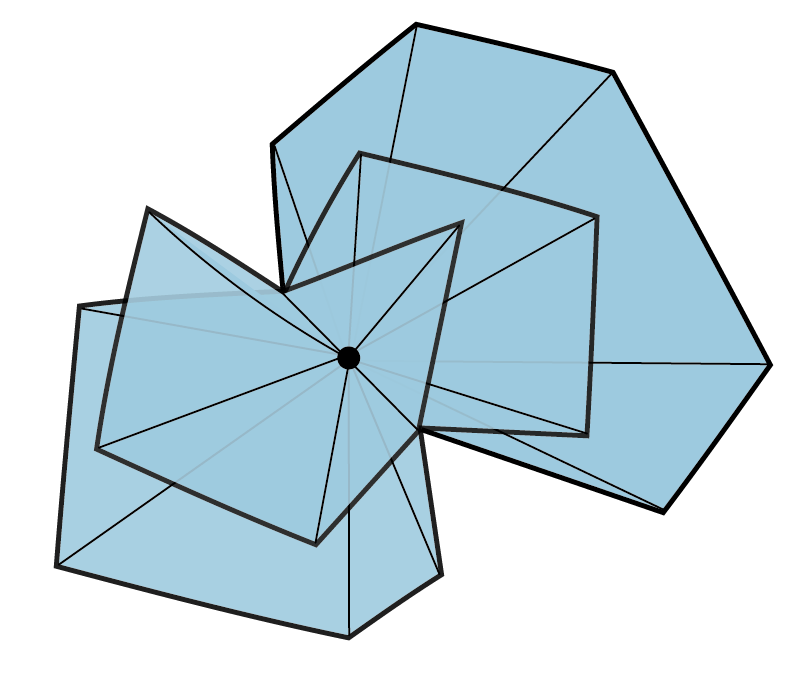}
\caption{{Flower scenario.}
}
\label{fig:flower}
\end{figure}  
 
{Initial applications of simplicial distributions in \cite{okay2022simplicial} included a new topological proof of Fine's theorem for the CHSH scenario. A novel feature of this approach is its flexibility in realizing measurement scenarios as topological spaces. Such expressiveness allows for contextuality to be characterized topologically in multiple ways. For instance, one realization of the CHSH scenario is topologically equivalent to a disk consisting of four triangles, while another realization, also appearing in \cite{okay2022mermin}, is given by a punctured torus. While the former allows for an analysis similar in spirit to that of Fine, the latter work supplies an alternative proof by the classifying the extreme distributions on the torus. }%
In this paper, we go beyond these examples and consider a generalization of $N$-cycle scenarios
{\cite{araujo2013all,braunstein1990wringing}} 
 which we call {\it flower scenarios}.
The flower scenario is obtained by gluing various cycle scenarios of arbitrary size as in Fig.~(\ref{fig:flower}).
{
This scenario is a particular example of a class of 
$2$-dimensional measurement spaces.
Given  
a $1$-dimensional simplicial set, i.e., a graph, 
{the cone construction produces a $2$-dimensional simplicial set.}
This construction introduces a new vertex and a set of
triangles connecting each edge on the graph to the new vertex.
For a $1$-dimensional space $X$, we will write $\Cone(X)$ for the cone space. We {will} write $L_N$ for the space obtained by gluing $N$ edges in the shape on a line.
}

\begin{customthm} {\ref{thm:bouquet-of-lines}}
{Let $\Cone(X)$  denote the flower scenario  (Fig.~(\ref{fig:flower})), {the cone of $X$ obtained by gluing the lines $L_{N_1},\cdots,L_{N_k}$ at their end points}.}
A simplicial distribution $p\in \sDist(\Cone(X))$ is non-contextual if and only if for every $N$-{circle}   $C$ on $X$ the restriction {$p|_C$} satisfies the $N$-circle inequalities\footnote{{In the literature {what we refer to as} $N$-circle inequalities are known as $N$-cycle inequalities. We diverge in terminology by emphasizing the underlying topological space, which is a circle.}}.
\end{customthm}

The primary technique that goes into the proof of this result is the Fourier--Motzkin (FM) elimination \cite{chvatal1983linear}, a version of Gaussian elimination for inequalities.
In Section \ref{sec:FM} we present a topological interpretation of FM elimination.
A measurement space is represented by a simplicial set whose simplices correspond to measurements. In this paper, we will restrict our attention to $2$-dimensional simplicial sets, that is, those obtained by gluing triangles. Our outcome space will be fixed to a canonical choice obtained from $\ZZ_2=\set{0,1}$ (known as the nerve space) so that the measurements labeling the edges have binary outcomes. 
In this setting, non-contextuality is characterized by Bell inequalities consisting of variables corresponding to probabilities of {measurements on the edges.} 
For our topological proof of Fine's theorem we consider  {a particular} triangulation of the disk, which we refer to as a {\it classical $N$-disk}. On these disks any simplicial distribution turns out to be non-contextual, hence the name classical.
If we start from a distribution on the boundary of a disk, the $N$-{circle inequalities} appear as the sufficient and necessary condition for extending such a distribution from the boundary to the entire disk (Proposition \ref{pro:n-cycle-ineq-FM}).
Now, given two such classical disks glued at a common edge, the topological interpretation of FM elimination is that the boundary of the new space is formed by taking the union of the boundaries of the disks and omitting the common edge; {see Fig.~(\ref{fig:bouquet})}. 
The elimination of the edge is the geometric interpretation of removing the variable by FM elimination.
This key idea allows us to characterize the extension condition from the boundary of a bouquet of classical disks, i.e., a collection of disks glued at a common edge, by a collection of {circle inequalities} (Corollary \ref{cor:generalized-extension}). This extension result is the main ingredient of the proof of Theorem \ref{thm:bouquet-of-lines} that characterizes non-contextual distributions on the flower scenario {Fig.~(\ref{fig:flower})}. Note that this scenario generalizes bipartite Bell scenarios where Alice performs $2$ measurements, and Bob performs $m$ measurements, {and} all measurements have binary outcomes {\cite{collins2004relevant}}. 

Our next main contribution is the collapsing of measurement spaces to detect contextual vertices of simplicial distributions (Section \ref{sec:collapsing-measurement-spaces}).
{To study simplicial distributions on the cone space we introduce a technique based on collapsing edges. Let $\pi:X\to X/\sigma$ denote the map that collapses an edge $\sigma$ of the graph.
Applying the cone construction gives a map $\Cone\pi:  {\Cone(X)\to \Cone(X/\sigma)} $ between the cone spaces.
A simplicial distribution on the cone of the collapsed measurement space can be extended via $\Cone\pi$ to give a simplicial distribution on the cone of the original measurement space. We denote this map by 
$${(\Cone\pi)^*:\sDist(\Cone(X/\sigma))\to \sDist(\Cone (X))}.$$ In Theorem \ref{thm:collapsing} we show that for a simplicial distribution $p\in \sDist(\Cone(X/\sigma))$ and its image  $q=(\Cone \pi)^*(p)$ the following holds:
\begin{enumerate} 
\item $p$ is a vertex if and only if $q$ is a vertex.
\item $p$ is contextual if and only if $q$ is contextual.
\item $p$ is strongly contextual  if and only if $q$ is strongly contextual.
\item $p$ is a deterministic distribution if and only if $q$ is a deterministic distribution.
\end{enumerate} }
\noindent
In particular, parts (1) and (2) imply that contextual vertices map to contextual vertices under the collapsing map.
This method is very powerful in detecting vertices.
Let $n_X$ denote the number of generators of the fundamental group of the graph $X$.
Then the number of contextual vertices in $\sDist(\Cone(X))$ is lower bounded by $(2^{n_X}-1)2^{|X_0|-1}$ where $|X_0|$ denotes the number of vertices of the graph (Theorem \ref{thm:GenrVert}). 
In addition, we use this method to derive new Bell inequalities from known ones. For example, the Froissart inequalities \cite{froissart1981constructive} of the scenario given by the cone of the bipartite graph $K_{3,3}$ produce new Bell inequalities for the {cone of the} graph obtained by collapsing one of the edges (Section \ref{sec:application-vell-ineq}).

Finally, we explore a new algebraic feature of simplicial distributions first introduced in \cite{kharoof2022simplicial}.
{T}he set of simplicial distributions $\sDist(X)$ has a monoid structure. Together with its polytope structure, this gives a convex monoid.
The restriction of the monoid structure to deterministic distributions gives a group structure, and this group acts on  simplicial distributions. 
Using this action, we can generate more vertices from those obtained from the collapsing technique.
Our other contributions are as follows: 
(1) For a $2$-dimensional measurement space $X$ 
we show that $\sDist(X)$ is a convex polytope (Proposition \ref{pro:fConvex}) and provide the $H$-description {(Corollary \ref{cor:H-description})}. 
(2) We describe the monoid structure on $\sDist(X)$ (Section \ref{sec:prod}) and describe the action of the set of deterministic distributions on Bell inequalities and contextual vertices (Example \ref{ex:chsh}).
(3) The $1$-cycle scenario obtained as the cone of a circle {(Fig.~(\ref{fig:D-S-vert1}))} is a new scenario that cannot be realized in the conventional non-signaling picture. {More generally,} we describe the polytope of simplicial distributions on the cone of the wedge $\vee_{i=1}^n C_1$ of $1$-{circle}s (Proposition \ref{pro:ConeofCircCube}).

\section{Simplicial distributions}

The theory of simplicial distributions is introduced in \cite{okay2022simplicial}. 
A simplicial distribution is defined for a space of measurements and outcomes.
In this formalism spaces are represented by combinatorial objects known as simplicial sets.
More formally, a {\it simplicial set} $X$ consists of a sequence of sets $X_n$ for $n\geq 0$ and the simplicial structure maps: 
\begin{itemize}
\item Face maps $d_i:X_n\to X_{n-1}$ for $0\leq i \leq n$.
\item Degeneracy maps $s_j:X_n\to X_{n+1}$ for $0\leq j\leq n$.
\end{itemize} 
These maps are subject to simplicial relations (see, e.g., \cite{friedman2008elementary}).
An $n$-simplex is called  {\it degenerate} if it  lies in the image of a degeneracy map, otherwise it is called  {\it non-degenerate}. Geometrically only the non-degenerate simplices are relevant.
Among the non-degenerate simplices there are ones that are not a face of another non-degenerate simplex. 
Those simplices we will refer to as {\it generating simplices}. 
{Throughout the paper when we refer to an edge ($1$-simplex) or a triangle ($2$-simplex) of a simplicial set we mean a non-degenerate one.}

In this paper we will focus on spaces obtained by gluing triangles.

\Ex{\label{ex:triangle} 
{\rm
The triangle, denoted by $\Delta^2$, is the simplicial set with simplices
$$
(\Delta^2)_n = \set{\sigma^{a_0\cdots a_n}: \, 0\leq a_0\leq \cdots a_n \leq 2,\, a_i\in \ZZ  }.
$$ 
The $i$-th face map deletes the $i$-th index: $d_i(\sigma^{a_0\cdots a_n}) = \sigma^{a_0\cdots a_{i-1}a_{i+1}\cdots a_n}$, and the $j$-th degeneracy map copies the $j$-th index: $s_j(\sigma^{a_0\cdots a_n}) = \sigma^{a_0\cdots a_ja_j\cdots a_n}$. The simplex $\sigma^{012}$ is 
the generating simplex. Any other simplex can be obtained by  applying   a sequence of face and degeneracy maps. {In general, we can define  $\Delta^d$ consisting of $n$-simplices of the form $\sigma^{a_0\cdots a_n}$ where $0\leq a_0\leq \cdots \leq a_n\leq d$. This simplicial set represents the topological $d$-simplex. Of particular interest are $\Delta^0$ and $\Delta^1$  representing a point and an edge, respectively.
}
}
}

The gluing operation can be specified by introducing relations between the generating simplices. The simplest example is obtained by gluing two triangles along a face.

\Ex{\label{ex:diamond} 
{\rm
The diamond space $D$ is defined as follows:
\begin{itemize}
\item Generating simplices: $\sigma^{012}_A$ and $\sigma_B^{012}$.
\item Identifying relation: $d_1\sigma^{012}_A = d_1 \sigma^{012}_B$.
\end{itemize}
We can define other versions by changing the faces. We will write $D_{ij}$ for the diamond whose identifying relation is $d_i\sigma^{012}_A = d_j \sigma^{012}_B$.
}
}

Next we introduce the notion of maps between simplicial sets.
A {\it map $f:X\to Y$ of simplicial sets} consists of a sequence $f_n:X_n\to Y_n$ of functions that respect the face and the degeneracy maps.
Given a simplex $\sigma\in X_n$ we will write $f_\sigma$ for $f_n(\sigma)\in Y_n$. With this notation the compatibility conditions   are given by
$$
d_i f_\sigma = f_{d_i\sigma} \;\; \text{ and }\;\; s_j f_\sigma = f_{s_j\sigma}.
$$
A simplicial set map $f:\Delta^2\to Y$ is determined by the image of the generating simplex, that is, by an arbitrary $2$-simplex $f_{\sigma^{012}}\in Y_2$. Therefore these maps are in bijective correspondence with the elements of $Y_2$. 
In the case of the diamond space a simplicial set map $f:D_{ij}\to Y$ is determined by  $f_{\sigma_A^{012}}$ and $f_{\sigma_B^{012}}$ satisfying 
$$
d_if_{\sigma_A^{012}} =  f_{d_i\sigma_A^{012}} = f_{d_j\sigma_B^{012}} = d_j f_{\sigma_B^{012}}.
$$

Given a {simplicial set}
$Y$ we will construct another simplicial set that represents the space of distributions on $Y$. For this we need the distribution monad $D_R$ defined for a commutative semiring $R$ \cite{jacobs2010convexity}. 
Throughout the paper we will take $R$ to be $\RR_{\geq 0}$. 
A distribution on a set $U$ is defined to be a function $p:U\to R$ of finite support such that $\sum_{u\in U} p(u)=1$.
{
The delta distribution at $u\in U$ is defined by
$$
\delta^u(u') = \left\lbrace
\begin{array}{ll}
1 & u'=u \\
0 & \text{otherwise.}
\end{array}
\right.
$$
Any distribution can be expressed as a sum of delta distributions: $p=\sum_{u \in U}p(u)\delta^u$.
}
For a function $f:U\to V$ we will write $D_Rf$ for the function $D_R(U)\to D_R(V)$ defined by
$$
D_Rf(p)(v) = \sum_{u\in f^{-1}(v)} p(u).
$$
{The space of distributions on $Y$} is represented by the simplicial set $D_R(Y)$ whose $n$-simplices are given by $D_R(Y_n)$. The face and the degeneracy maps are given by $D_Rd_i$ and $D_R s_j$.
There is a canonical simplicial set map $\delta:Y\to D_R(Y)$ defined by sending a simplex $\sigma$ to the delta distribution {$\delta^\sigma$}.

\Def{\label{def:simplicial-distribution}
{A {\it simplicial scenario} consists of a pair $(X,Y)$ of simplicial sets where $X$ represents the space of measurements and $Y$ represents the space of outcomes.}
A {\it simplicial distribution} {on $(X,Y)$} is a simplicial set map $p:X\to D_R(Y)$.
A  simplicial set map of the form $s:X\to Y$ 
is called an {\it outcome assignment}. 
The associated distribution $\delta^s:X\to D_R(Y)$ is defined to be the composite $\delta \circ s$ is called a {\it deterministic distribution}.
We will write $\sDist(X,Y)$ and $\dDist(X,Y)$ for the set of simplicial and deterministic distributions.
}

There is a canonical map
$$
\Theta: D_R(\dDist(X,Y))\to \sDist(X,Y)
$$
defined by sending $d=\sum_{s} d(s) \delta^s$ to the simplicial distribution $\Theta(p)$ defined by
$$
\Theta(p)_\sigma = \sum_{s} d(s) \delta^{s_\sigma}.
$$

\Def{\label{def:contextual}
A simplicial distribution $p:X\to D_RY$ is called {\it non-contextual} if $p$ is in the image of $\Theta$. Otherwise, it is called {\it contextual}.
}

There is a stronger version of contextuality whose definition relies on the notion of support.
The  {\it support} of a simplicial distribution $p:X\to D_RY$  is defined by
\begin{equation}\label{eq:support-simp-dist}
\supp(p)=\set{s:X\to Y\,:\, p_{\sigma}(s_{\sigma})>0\;\forall \sigma\in X_n,\;n\geq 0}.
\end{equation}

\Def{\label{def:StronglyContextual}
A simplicial distribution $p$ on   $(X,Y)$ is called  {\it strongly contextual} if its  support $\supp(p)$ is empty.
}

\subsection{Two-dimensional distributions with binary outcomes}\label{sec:two-dimensional}

Throughout the paper 
we will work concretely with binary outcome measurements in $\ZZ_{2}$.
In effect this means that our outcome space will be the {\it nerve space} of $\ZZ_2$.
This simplicial set is denoted by $N\ZZ_2$ and is defined as follows:
\begin{itemize}
\item The set of $n$-simplices is $\ZZ_2^n$,
\item The face maps are given by
$$
d_i(a_1,\cdots,a_n) = \left\lbrace
\begin{array}{ll}
(a_2,\cdots,a_n) & i=0\\
(a_1,\cdots,a_i+a_{i+1},\cdots, a_n) & 0<i<n \\
(a_1,\cdots,a_{n-1}) & i=n
\end{array}
\right.
$$
and the degeneracy maps are given by
$$
s_j(a_1,\cdots,a_n) = (a_1,\cdots,a_{{j}},0,a_{{j+1}},\cdots, a_n).
$$
\end{itemize} 

Our measurement spaces will be obtained by gluing triangles. 
A simplicial set is {\it $d$-dimensional} if all its non-degenerate simplices are in dimension $n\leq d$.
In this paper we will restrict ourselves to simplicial scenarios of the form $(X,N\ZZ_2)$ where $X$ is $2$-dimensional.
We will study simplicial distributions on such scenarios.
For simplicity of notation we will write $\sDist(X)$ omitting the outcome space when it is fixed to $N\ZZ_2${, and denote the simplicial scenario only by the measurement space $X$.}

Let us look more closely to simplicial distributions on the triangle.
Consider a triangle $\Delta^2$ with the 
generating $2$-simplex  $\sigma=\sigma^{012}$. 
A simplicial distribution is given by a simplicial set map
$$
p:\Delta^2 \to D_R(N\ZZ_2)
$$
which is determined by the distribution $p_\sigma$ on $\ZZ_2^2$. 
We will write $p_\sigma^{ab}$ for the probability of obtaining the outcome $(a,b)\in \ZZ_{2}^{2}$ when we measure $\sigma$.
The three edges bounding $\sigma$ are given by the face maps as follows: $\sigma^{01} = d_{2}\sigma^{012}$, $\sigma^{02} = d_{1}\sigma^{012}$, $\sigma^{12} = d_{0}\sigma^{012}$. 
For simplicity of notation we will write $x=\sigma^{01}$, $y=\sigma^{12}$ and $z=\sigma^{02}$.
The corresponding marginal distribution  $p_{x}:\ZZ_{2}\to\mathbb{R}_{\geq 0}$ at edge $x$ 
can be identified with $(p_{x}^{0},p_{x}^{1})$. Since $p_{x}^{0}+p_{x}^{1}=1$, it suffices just to keep $p_{x}^{0}$. 
Similarly for edges $y$ and $z$.
Compatibility with face maps requires that:
\begin{eqnarray}
p_{x}^{0} &=& p_{\sigma}^{00}+p_{\sigma}^{01},\notag\\
p_{y}^{0} &=& p_{\sigma}^{00}+p_{\sigma}^{10},\notag\\
p_{z}^{0} &=& p_{\sigma}^{00}+p_{\sigma}^{11}.\notag
\end{eqnarray}
Since $p_{\sigma}^{ab}$ is also normalized, it can be expressed by three parameters. Without loss of generality we can take these three parameters to be the marginal distributions corresponding to the edges on the boundary. 
Conversely, given the marginals on the edges  we have that
\begin{eqnarray}\label{eq:p-ab}
p_{\sigma}^{ab} &=& \frac{1}{2}\left (p_{x}^{a}+p_{y}^{b}-p_{z}^{a+b+1}\right ).
\end{eqnarray}
{Therefore a simplicial distribution on the triangle is determined by its restriction to the boundary. This observation generalizes to every $2$-dimensional simplicial set. As we will observe in Proposition \ref{pro:fConvex} for such measurement spaces restriction of a simplicial distribution to the $1$-dimensional simplicial subset consisting of all the edges determines the distribution. Alternatively, we can use the expectation coordinates instead of the probability coordinates.
For an edge $\tau\in X_{1}$, let us define its expectation value by
\begin{eqnarray}
\bar{\tau}   
= p_{\tau}^{0}-p_{\tau}^{1}.\label{eq:edge_exp}
\end{eqnarray}
Using this we can rewrite $p_{\sigma}^{ab}$, which takes the form
\begin{eqnarray}
p_{\sigma}^{ab} = \frac{1}{4}\left (1+(-1)^{a}\bar{x}+(-1)^{b}\bar{y}+(-1)^{a+b}\bar{z}\right ).\label{eq:p_in_exp}
\end{eqnarray}
}

Next we describe non-contextual distributions on $\Delta^2$. Let us start with outcome assignments. An outcome assignment $s:\Delta^2\to N\ZZ_2$ is determined by a pair of bits $s_\sigma\in \ZZ_2^2$. The corresponding deterministic distribution is ${\delta^{s}}$. For simplicity of notation we will write $\delta^{ab}$ for the deterministic distribution corresponding to the outcome assignment $s_\sigma=(a,b)$.  

\Pro{\label{pro:Delta2}
Every simplicial distribution on $\Delta^2$ is non-contextual.
}
\Proof{
Given a simplicial distribution $p:\Delta^2\to D(N\ZZ_2)$ described by $\set{p_\sigma^{ab}}_{{a,b\in\zz_2}}$. {Then} the classical distribution 
$$
d = \sum_{a,b} {d}(ab) \delta^{ab}\; \text{ with } {d}(ab)=p_\sigma^{ab}
$$
satisfies $\Theta(d)=p$.
}

{In this paper we are interested in cones of $1$-dimensional simplicial sets. For instance, the $N$-cycle scenario (Definition \ref{def:N-cycle}) is of this form.}
Given a simplicial set $X$ we will construct a new simplicial set denoted by $\Cone(X)$ which represents the topological construction of adding a new vertex and joining every $n$-simplex of $X$  to this vertex to create an $(n+1)$-simplex.
The new vertex is represented by $\Delta^0$, the simplicial set representing a point. This simplicial set is defined by
\begin{itemize}
\item {$(\Delta^0)_n=\set{\sigma^{0\cdots 0}}$,}
\item {the face and the degeneracy maps are given by deleting and copying; see Example \ref{ex:triangle}.}
\end{itemize} 
{For notational convenience we will write $c_n$ for the simplex $\sigma^{0\cdots 0}$ in dimension $n$. With this notation a face map sends $c_n\mapsto c_{n-1}$ and a degeneracy map send $c_n\mapsto c_{n+1}$.}


\begin{figure}[h!]
\centering
\begin{subfigure}{.49\textwidth}
  \centering
  \includegraphics[width=.6\linewidth]{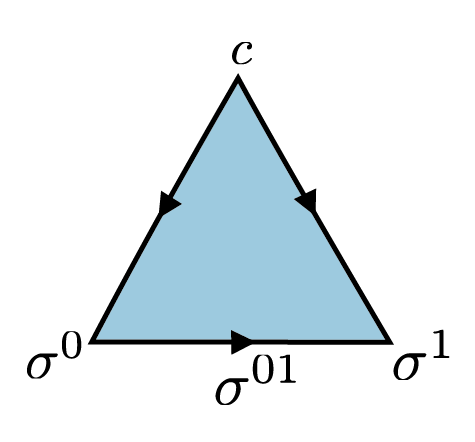}
  \caption{}
  \label{fig:cone-delta1}
\end{subfigure}%
\begin{subfigure}{.49\textwidth}
  \centering
  \includegraphics[width=.6\linewidth]{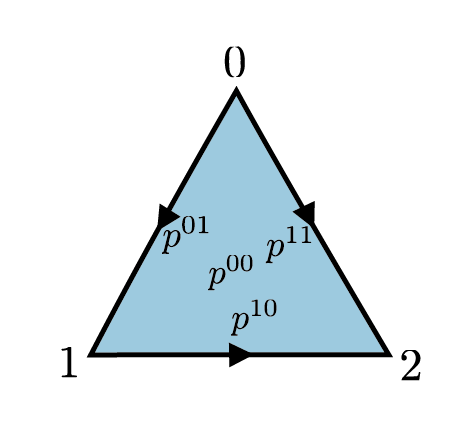}
  \caption{}
  \label{fig:dist-cone}
\end{subfigure}%
\caption{(a) A triangle can be considered as the cone of an edge. The generating $2$-simplex is given by $(c,\sigma^{01})$ whose  faces are $(c,\sigma^0)$, $(c,\sigma^1)$ and $\sigma^{01}$, where $c=c_0$. {(b) A simplicial distribution on the triangle.}
}
\label{fig:FM-topology}
\end{figure}

\Def{\label{def:cone}
The {\it cone} $\Cone(X)$ is the simplicial set given as follows:
\begin{itemize}
\item $(\Cone(X))_n = \set{c_n} \sqcup X_n \sqcup \left( \sqcup_{k+1+l=n} \set{c_k} \times X_l \right)$.
\item For $(c_k,\sigma)\in \set{c_k}\times X_l$
$$
d_i(c_k,\sigma) = \left\lbrace
\begin{array}{ll}
(c_{k-1},\sigma) & i\leq k\\
(c_k,d_{i-1-k}\sigma) & i>k
\end{array}
\right.
$$
and
$$
s_j(c_k,\sigma) = \left\lbrace
\begin{array}{ll}
(c_{k+1},\sigma) & j\leq k\\
(c_k,s_{{j}-1-k}\sigma) & {j}>k.
\end{array}
\right.
$$
Otherwise,  the face and the degeneracy maps on the $\set{c_n}$ and $X_n$ factors act the same as in $\Delta^0$ and $X$.
\end{itemize}
}

This construction is a special case of the join construction $Z\ast X$ defined for a pair of simplicial sets \cite[Chapter 17.1]{riehl2017category}. In the cone construction $Z=\Delta^0$. 
We will use the cone construction to obtain $2$-dimensional measurement spaces.

\Rem{\label{rem:nondeginCone}
{\rm
For $n \geq 1$, {the} {non-degenerate} $n$-simplices {of} $\Cone{(X)}$ {are} of the form $(c_0,\sigma)$ where $\sigma$ is a non-degenerate $(n-1)$-simplex {of} $X$. {We will usually write $c=c_0$.}
}}


\subsection{Gluing and extending distributions}
\label{sec:gluing-extending} 
 
Fundamental 
tools in the study of simplicial distributions are the extension and the gluing lemmas. They will be crucial for the proof of Fine's theorem for the $N$-cycle and the flower scenarios in Section \ref{sec:topological-proof-of-Fine}. Given a simplicial set map $f:Z\to X$ we will write 
$$
f^*:\sDist(X) \to \sDist(Z)
$$
for the map that sends a simplicial distribution $p$ on $X$ to the simplicial distribution defined by the composition $f^*p: Z\xrightarrow{f} X \xrightarrow{p} D_{{R}}(N\ZZ_2)$.  
Similarly {there is a }
map between the deterministic distributions, which is also denoted by $f^*:\dDist(X)\to \dDist(Z)$. 
In this case a deterministic distribution $\delta^s$ is sent to $\delta^{s\circ f}$.
{There is a commutative diagram}
\begin{equation}\label{dia:Theta-f*}
\begin{tikzcd}
D_R(\dDist(X)) \arrow[r,"\Theta"] \arrow[d,"D_Rf^*"] & \sDist(X) \arrow[d,"f^*"] \\
D_R(\dDist(Z)) \arrow[r,"\Theta"] & \sDist(Z)
\end{tikzcd}
\end{equation}

\Pro{\label{pro:pro4-1}
If $q\in \sDist(X)$ is non-contextual then $p=f^*(q)$ is also non-contextual.
}
\Proof{
{Let $d\in D_R(\dDist(X))$ such that $\Theta(d)=q$. Then $e=D_Rf^*(d)$ satisfies $\Theta(e)=p$ by the commutativity of Diagram (\ref{dia:Theta-f*}).
}
}

Let $A$ be a simplicial subset of $X$ and let us write $i:A\to X$ for the inclusion map. This means that each $A_n$ is a subset of $X_n$ and the simplicial structure of $A$ is compatible with that of $X$. 
Given $p\in \sDist(X)$ we will write   $p|_A$ for the distribution $i^*p$.  
For 
a deterministic distribution $\delta^s$ on $X$ the distribution $i^*\delta^s$ will be denoted by $\delta^s|_A$. Note that $\delta^s|_A = \delta^{s|_A}$ where $s|_A$ stands for the composition $s|_A: A\xrightarrow{i} X \xrightarrow{s} N\ZZ_2$.

{
An important special case of Proposition \ref{pro:pro4-1} is the following result we will  need later in the paper. 
}
\Cor{\label{cor:ExtLem} 
Let $A$ be a simplicial subset of $X$. If $q\in \sDist(X)$ is non-contextual then $q|_{A}$ is also non-contextual.
}

Another important result is the following Gluing Lemma. Using this result one can reduce the study of distributions on a measurement space to its smaller constituents in some cases.

\Lem{\label{lem:gluing} 
Suppose that $X=A\cup B$ with $A\cap B=\Delta^n$ for some $n\geq 0$. Then $p\in \sDist(X)$ is non-contextual if and only if both $p|_A\in \sDist(A)$ and $p|_B\in \sDist(B)$ are non-contextual. 
}
\Proof{
See Corollary 4.6 in \cite{okay2022simplicial}.
}



\subsection{Polytope of simplicial distributions}

Recall that the triangle $\Delta^2$ has a single generating simplex $\sigma$. The boundary $\partial \sigma$ consists of 
three non-degenerate $1$-simplices denoted by $x,y,z$. 
Using Eq.~(\ref{eq:p-ab}) the polytope of simplicial distributions $\sDist(\Delta^2)$ can be described as the space consisting of triples $(p_x^0,p_y^0,p_z^0)\in \RR^3$ satisfying
\begin{equation}\label{eq:triangle-ineq}
\begin{aligned}
p_x^0 + p_y^0 + p_z^{0}  &\geq 1  \\
p_x^0 - p_y^0 - p_z^{0}  &\geq -1  \\
-p_x^0 + p_y^0 - p_z^{0}  &\geq -1  \\
-p_x^0 - p_y^0 + p_z^{0}  &\geq -1.  \\
\end{aligned}
\end{equation}
{This set of inequalities is an example of $N$-{circle inequalities} introduced in Definition \ref{def:cycle-ineq} $(N=3)$.}
{They imply} that $\sDist(\Delta^2)$ is a tetrahedron in $\RR^3$. Proposition \ref{pro:Delta2} can be used to observe that its vertices are given by {{$(a,b,c)$} where $a,b,c\in \set{0,1}$ and} {$c+1=a+b \mod 2$}.

In general, we will show that $\sDist(X)$ is described 
as the intersection of finitely many half-space inequalities corresponding to the non-negativity of the parameters $p_{\sigma}^{ab}$.
Such a description of a polytope is called the $H$-representation.
Our goal is to characterize the geometric structure of $\sDist(X)$ including the vertices (extreme distributions) and the Bell inequalities bounding the non-contextual distributions.

\Def{\label{def:deterministic-edge}
A $1$-simplex $\tau$ of $X$ is called a {\it  deterministic edge} {(with respect to $p$)}
if  $p_\tau$ is a deterministic distribution on $\ZZ_2$.
}

\begin{pro}\label{pro:2-to-3}
If two of the edges of a triangle are deterministic then the third edge is also deterministic.
\end{pro}
\begin{proof}
Assume that $p_x^0=1$ and $p_y^0=1$,
the other cases follow similarly. 
Then the last inequality in Eq.~(\ref{eq:triangle-ineq}) implies that $p_z^0=1$.
\end{proof}

Next, we recall   
some basic facts from polytope theory \cite{chvatal1983linear,ziegler2012lectures}. 
In the $H$-representation a (convex) polytope is specified by a set of inequalities:
$$
P(A,b) = \set{x\in \RR^d:\, Ax \geq b}
$$
where $A$ is a $m\times d$ matrix and $b$ is a column vector of size $m$.
We will assume that $P\subset \RR^d$ is full-dimensional, that is, the dimension of the polytope is given by $d$. 

\Lem{\label{lem:triangle-P-dimension}
Let $X$ be a $2$-dimensional simplicial set with a single generating $2$-simplex $\sigma$ whose boundary $\partial\sigma$ consists of the $1$-simplices $x,y,z$ all of them are non-degenerate simplices. 
Consider the injective map
$$
f_\sigma:\sDist(X) \to  [0,1]^{|\partial \sigma|}
$$
that sends $p$ to the tuple $(p_\tau^0)_{\tau\in \partial \sigma}$. Then the image of $f$ is a polytope of dimension $|\partial\sigma|$.
}
\Proof{Let $P$ denote the image of $f$.
First consider the case where $|\partial \sigma|=3$. $P$ is defined by the set of inequalities in Eq.~(\ref{eq:triangle-ineq}). This is a tetrahedron  in $\RR^3$ with vertices {$(\delta^a,\delta^b,\delta^c)$ where $a,b,c \in \set{0,1}$ and} $c=a+b \mod 2$. Therefore the dimension of $P$ is $3$. Next consider $|\partial \sigma|=2$. We can assume that $x$ and $y$ identified. Then the polytope is obtained by intersecting the tetrahedron by the hyperplane $p_x^0 = p_y^0$. This gives a $2$-dimensional polytope. Finally, if $|\partial \sigma|=1$ then all the edges are identified. The polytope is obtained by intersecting the previous one with $p_y^0=p_z^0$ producing a polytope of dimension $1$. 
}

A polytope $P(A,b) \subset \RR^d$ is called full-dimensional if the dimension of the polytope is $d$.
For a simplicial set $X$ we will write $X_n^\circ$ for the set of non-degenerate simplices. 
{Let $X^{(n)}$ denote the simplicial subset of $X$ generated by $X_n^\circ$. 
For example, $X^{(1)}$ is generated by non-degenerate $1$-simplices together with the face relations coming from $X$.
} 
%
%

\Pro{\label{pro:fConvex}
Let $X$ be a  simplicial set generated by the {$2$-simplices $\sigma_1,\cdots,\sigma_k$ such that each $\partial\sigma_i$ does not contain non-degenerate edges.} The map
\begin{equation}\label{eq:map-f}
f:\sDist(X) \to \sDist(X^{(1)})=[0,1]^{|X_1^\circ|}
\end{equation}
that sends $p$ to the tuple $(p_\tau^0)_{\tau\in X_1^\circ}$ is a convex injective map. 
{Moreover, $P_X\subset \RR^{|X_1^\circ|}$ is a full-dimensional convex polytope.}
}
\Proof{
This follows from Lemma \ref{lem:triangle-P-dimension}:
For each $\sigma_i$ the restriction of $f$ to the simplicial set $X_i$ generated by $\sigma_i$ gives a map
$$
f_\sigma: \sDist(X_i) \to [0,1]^{|{\partial \sigma_i}|}.
$$ 
Thus $P_{X_i}$ is a full-dimensional polytope. 
{Consider the projection map $\RR^{|X_1^\circ|} \to {\RR^{|\partial\sigma_i|}}$ onto the coordinates of the boundary. Combining these projections we can obtain a linear embedding  $i:\RR^{|X_1^\circ|}\to \prod_{i=1}^k \RR^{|\partial\sigma_i|}$. Then $P_X$ is given by the intersection of the image of $i$ and the product of the polytopes $\prod_{i=1}^k P_{X_i}$.  This intersection remains to be full dimensional in the linear subspace. }
}

{In practice, this result implies that a simplicial distribution on a $2$-dimensional simplicial set is determined by its restriction to the edges. This description of $p$ will be referred to as the edge coordinates.}
With this result at hand, it is straight-forward to give the $H$-description of $P_X$.

\Cor{\label{cor:H-description}
Let $d_X= X_1^\circ$ and $m_X=|X_2^\circ \times \ZZ_2^2|$. We define {an} $m_X\times d_X$ matrix:
$$
A_{(\sigma;ab),\tau} = \left\lbrace
\begin{array}{ll}
(-1)^a & \tau =x \\
(-1)^b & \tau =y \\
(-1)^{a+b+1} & \tau =z \\
0 & \text{otherwise,}  
\end{array}
\right. 
$$
and a column vector $b$ of size $m_X$:
$$
b_{(\sigma,ab)} =  
\frac{\left(  (-1)^a + (-1)^b - (-1)^{a+b} \right)-1}{2}.  
$$
Then $P_X$ is described as $P(A,b)$.
}


We adopt a notation where if $\zZ\subseteq \{1,\cdots,m\}$ then $A[\zZ]$ is the matrix obtained by keeping only those rows indexed by $\zZ$ and discarding the rest, and similarly for $b[\zZ]$. Let $i\in\{1,\cdots,m\}$ index a single inequality and $x\in P$, then we call an inequality $i$ at $x$ \textit{tight} if the inequality is satisfied with equality, i.e., $A_{i}x = b_{i}$. 
For a point $x\in P$ we write $\zZ_x$ for the set of tight inequalities at $x$.

\Def{\label{def:rank} 
The {\it rank} $\rank(p)$ of a simplicial distribution $p\in \sDist(X)$ is defined to be the rank of the matrix $A[\zZ_p]$.
}

\Cor{\label{cor:vertex}
A simplicial distribution $p\in \sDist(X)$ is a vertex if and only if $\rank(p) = |X_1^\circ|$.
}
\Proof{For a full-dimensional polytope $P\subset \RR^d$, a  point $v\in P$ is a vertex if and only if it is the unique solution to $d$ tight inequalities. 
More explicitly, if $\zZ \subset \{1,\cdots,m\}$ indexes $d$ inequalities such that $A[\zZ]$ has full rank, then a vertex is given by 
$$v = A[\zZ]^{-1}b.$$ 
This basic fact applied to $P_X$, where $d = |X_{1}^{\circ}|$, combined with Proposition \ref{pro:fConvex} gives the result.
}

\subsection{Monoid structure on simplicial distributions}\label{sec:prod}


An additional algebraic feature that comes for free in
the theory of simplicial distributions is the monoid structure on $\sDist(X,Y)$ when $Y$ is a simplicial set which also has the structure of a group. 
Such a group-like simplicial set is called a simplicial group. 

Our outcome space $N\ZZ_2$ has this additional algebraic feature, which comes from the following simplicial set map:
$$
\cdot:N\ZZ_2 \times N\ZZ_2 \to N\ZZ_2
$$
defined by  
\begin{equation}\label{eq:product-nerve}
(a_1,\cdots,a_n)\cdot (b_1,\cdots,b_n) = (a_1+b_1,\cdots,a_n+b_n).
\end{equation}
{It is straight-forward to verify that this assignment respects the face and the degeneracy maps. T}his product gives the set $\dDist(X)$ of deterministic distributions the structure of a group. Given two such distributions $\delta^s$ and $\delta^r$ their product is given by $\delta^{s\cdot r}$ where 
$$
s\cdot r: X \xrightarrow{(s,r)} N\ZZ_2 \times N\ZZ_2 \xrightarrow{\cdot} N\ZZ_2.
$$ 
We will write $\delta^s\cdot \delta^r$ to denote this product of deterministic distributions.

\Lem{\label{lem:prod-delta} 
\begin{enumerate}
\item The product on $\dDist(\Delta^1)$ is given by
$$
\delta^{a} \cdot \delta^{b} = \delta^{a+b}.
$$ 
\item The product on $\dDist(\Delta^2)$ is given by
$$
\delta^{ab} \cdot \delta^{cd} = \delta^{(a+c)(b+d)}.
$$ 
\end{enumerate}
}
\Proof{ 
Let $\tau=\sigma^{01}$ denote the generating simplex of $\Delta^1$.
{Consider} two deterministic distributions $\delta^s$ and $\delta^r$ such that $s_{\tau}=a$ and $r_{\tau}=b$.  
The product $s\cdot r$  is determined by its value at $\tau$. Using Eq.~(\ref{eq:product-nerve}) we have
$$
(s\cdot r)_\tau = a\cdot b = a+b.
$$
For $\Delta^2$, we will consider the generating simplex $\sigma=\sigma^{012}$. By a similar argument applied to $s_\sigma=(a,b)$ and $r=(c,d)$ we observe that
$$
(s\cdot r)_\sigma = (a,b)\cdot (c,d) = (a+c,b+d). 
$$
}

Lemma \ref{lem:prod-delta} can be used to describe the product on $\dDist(X)$ when $X$ is $2$-dimensional. 
This product can be extended to $D(\dDist)$. Given $d,e \in D(\dDist(X))$ we define
$$
(d\cdot e)(s) = \sum_{r\cdot t=s} d(r) d(t)
$$
where the summation runs over $(\delta^r,\delta^t)\in (\dDist(X))^2$ satisfying $r\cdot t=s$.
With this product $D(\dDist(X))$ is a monoid. 
Next, we turn to the monoid structure on $\sDist(X)$. Given two simplicial distributions $p,q$ on $X$ the product $p\cdot q$ is defined by
\begin{equation}\label{eq:monoid-product}
(p\cdot q)_\sigma^a = \sum_{b+c=a} p_\sigma^b q_\sigma^c
\end{equation}
where the summation runs over 
$(b,c)\in (\ZZ_2^n)^2$ satisfying $b+c=a$.
This formula works for an $n$-simplex $\sigma$. For us the main interest is the cases $n=1,2$.


\Lem{\label{lem:prod-p-q-low-dim}
Let $X$ be a simplicial set and $p,q\in \sDist(X)$. 
\begin{enumerate}
\item For $\tau \in X_1$, we have 
$$
(p\cdot q)^{0}_\tau=p^{0}_\tau \cdot  q^{0}_\tau+ p^{1}_\tau \cdot  q^{1}_\tau, \;\;\;\; (p\cdot q)^{1}_\tau=p^{0}_\tau \cdot  q^{1}_\tau+ p^{1}_\tau \cdot  q^{0}_\tau
$$ 
\item For $\sigma \in X_2$, we have 
$$
(p\cdot q)^{00}_\sigma=p^{00}_\sigma \cdot  q^{00}_\sigma+ p^{01}_\sigma \cdot  q^{01}_\sigma
+p^{10}_\sigma \cdot  q^{10}_\sigma+p^{11}_\sigma \cdot  q^{11}_\sigma, \;\; \;\; 
(p\cdot q)^{01}_\sigma=p^{00}_\sigma \cdot  q^{01}_\sigma+ p^{01}_\sigma \cdot  q^{00}_\sigma
+p^{10}_\sigma \cdot  q^{11}_\sigma+p^{11}_\sigma \cdot  q^{10}_\sigma
$$ 
$$
(p\cdot q)^{10}_\sigma=p^{00}_\sigma \cdot  q^{10}_\sigma+ p^{01}_\sigma \cdot  q^{11}_\sigma
+p^{10}_\sigma \cdot  q^{00}_\sigma+p^{11}_\sigma \cdot  q^{01}_\sigma, \;\; \;\; 
(p\cdot q)^{11}_\sigma=p^{00}_\sigma \cdot  q^{11}_\sigma+ p^{01}_\sigma \cdot  q^{10}_\sigma
+p^{10}_\sigma \cdot  q^{01}_\sigma+p^{11}_\sigma \cdot  q^{00}_\sigma
$$ 
\end{enumerate}
}
\Proof{
Follows directly from Eq.~(\ref{eq:monoid-product}).
}

Moreover, the map $\Theta:D(\dDist(X)) \to \sDist(X)$ is a homomorphism of monoids. For more on the monoid structure and its interaction with convexity see \cite{kharoof2022simplicial}.  
We will use the action of the group $\dDist(X)$ on the monoid $\sDist(X)$ that comes from the product in Eq.~(\ref{eq:monoid-product}).
Explicitly,  for $\sigma \in X_2$ and $\tau \in X_1$ this action is described as follows:
\begin{equation}\label{eq:action}
(\delta^{a}\cdot q)_\tau^{c}=q_{\tau}^{c+a},\;\;\;\; 
(\delta^{ab}\cdot q)_\sigma^{cd}=q_{\sigma}^{(c+a)(d+b)}.
\end{equation}
{Note that this action maps vertices of $\sDist(X)$ to vertices.}

 
\begin{pro}\label{pro:action}
\begin{enumerate}
\item For two non-contextual simplicial distributions $p$ and $q$ in $\sDist(X)$, the product $p \cdot q$ is a non-contextual  distribution.

\item  A simplicial distribution 
$p \in \sDist(X)$ is non-contextual if and only if $\delta^{s} \cdot p$ is non-contextual.

\item  A simplicial distribution 
$p \in \sDist(X)$ is vertex if and only if $\delta^{\varphi} \cdot p$ is a vertex. 
\end{enumerate}
\end{pro}
\begin{proof}
Part $1$ follows from the fact that the map $\Theta:D(\dDist(X)) \to \sDist(X)$ is a homomorphism of monoids \cite[Lemma 5.1]{kharoof2022simplicial}.  
Part $2$ follows from part $1$. 
\end{proof}

Part (2) of this proposition implies that the action of $\dDist(X)$ on $\sDist(X)$ maps a (non)-contextual vertex to a (non)-contextual vertex. 
We describe the action in the case of the well-known CHSH scenario in {Example \ref{ex:chsh} below}.

The following simplicial distributions on {$\Delta^2=\Cone(\Delta^1)$} will play a distinguished {in later sections when we study $2$-dimensional scenarios more closely:}
\begin{equation}\label{eq:p-plus-minus}
p_+^{ab} = \left\lbrace
\begin{array}{ll}
1/2 & b=0\\
0   & \text{otherwise.}
\end{array}
\right.
\;\;\;\;
p_-^{ab} = \left\lbrace
\begin{array}{ll}
0 & b=0\\
1/2   & \text{otherwise.}
\end{array}
\right.
\end{equation}
{We follow the convention in Fig.~(\ref{fig:dist-cone}).}

\Def{\label{def:G-plus-minus} 
Let $X$ be a $1$-dimensional simplicial set. 
We will write $G_\pm(CX)$ for the subset of  simplicial distributions $p\in \sDist(CX)$ satisfying $p|_{(c,\tau)}=p_{\pm}$ for every $\tau\in X_1^\circ$.
}

Next we show that this set is a group.
We will denote the distribution in $G_\pm(CX)$ with $p|_{(c,\tau)}=p_{+}$ for every $\tau\in X_1^\circ$ by 
$e_{+}$.

\Pro{\label{pro:G-plus-minus}
$G_\pm(CX)$ is an abelian group with $e_{+}$ as the identity. In addition, every element has order $2$, that is,
$$
G_\pm(CX) \cong \ZZ_2^{X_{1}^\circ}.
$$ 
}
\Proof{ 
By part $2$ of Lemma \ref{lem:prod-p-q-low-dim} we have
$$
p_{+}\cdot p_{+}=p_{+} \;\; , \;\; p_{+}\cdot p_{-}=p_{-} \;\; , \;\; p_{-}\cdot p_{-}=p_{+} \;\; , \;\;
$$
Therefore the statement holds for $X=\Delta^1$, that is we have
$$
G_\pm(\Cone(\Delta^1)) \cong \ZZ_2.
$$
Now for arbitrary $X$ and $p,q\in G_\pm(CX)$ the product is computed triangle-wise, i.e., $(p\cdot q)_\sigma = p_\sigma\cdot q_\sigma$. Therefore the statement easily generalizes.
}

\Ex{\label{ex:chsh}
{\rm 
The CHSH scenario  consists of four triangles organized into a disk with vertices $v_0,v_1,w_0,w_1$ and $c$. For each pair {$(v_i,w_j)$}  there is an edge which we denote by {$\tau_{ij}$}. This constitutes the boundary of the disk. 
{There are four non-degenerate triangles $\sigma_{ij}=(c,\tau_{ij})$ as depicted in Fig.~(\ref{fig:cone-chsh}). The interior edges  $(c,v_i)$ and $(c,w_j)$ will be denoted by $x_i$ and $y_j$, respectively.}
This scenario is a particular case of the $N$-cycle scenario in Definition \ref{def:N-cycle}.
Here $N$ is the number of edges on the boundary, hence in this case $N=4$.
{Using the edge coordinates of Proposition \ref{pro:fConvex} a} simplicial distribution $p$ on the CHSH scenario can be 
described by the tuple  
$(p_{x_0},p_{\tau_{00}}
,p_{y_0},p_{\tau_{10}},
p_{x_1},p_{\tau_{11}},
p_{y_1},p_{\tau_{01}})$. 
It is well-known that $p$ is non-contextual if and only if it satisfies the CHSH inequalities \cite{chsh69}:
\begin{equation}\label{eq:CHSH-ineq-edge}
\begin{aligned}
0\leq  p_{\tau_{00}}^0+p_{\tau_{10}}^0+p_{\tau_{11}}^0-p_{\tau_{01}}^0 \leq 2 \\
0\leq  p_{\tau_{00}}^0+p_{\tau_{10}}^0-p_{\tau_{11}}^0+p_{\tau_{01}}^0 \leq 2 \\
0\leq  p_{\tau_{00}}^0-p_{\tau_{10}}^0+p_{\tau_{11}}^0+p_{\tau_{01}}^0 \leq 2 \\
0\leq - p_{\tau_{00}}^0+p_{\tau_{10}}^0+p_{\tau_{11}}^0+p_{\tau_{01}}^0 \leq 2  
\end{aligned}
\end{equation}
Also the contextual vertices are known. They are given by the Popescu--Rohrlich (PR) boxes \cite{pr94}:
A PR box is a simplicial distribution $p$ such that 
$p_{\sigma_{ij}}=p_\pm$ 
for {$i,j \in \set{0,1}$} with the further restriction that the number of $p_-$'s is odd.

\begin{figure}[h!]
\centering
\includegraphics[width=.4\linewidth]{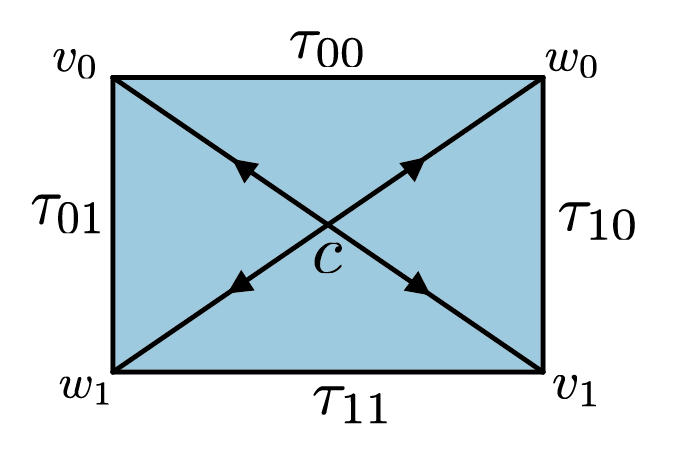}
\caption{The CHSH scenario as the cone of the circle  consisting of $\tau_{00},\tau_{01},\tau_{10},\tau_{11}$ (Definition \ref{def:circle}).  
}
\label{fig:cone-chsh}
\end{figure}

We begin with the action on the PR boxes. 
{By Eq.~(\ref{eq:action}) we see that $\delta^{00}$ and $\delta^{10}$ are the only deterministic distributions on the triangles that fix $p_{+}$ and $p_{-}$. From this observation we conclude that among the $16$ deterministic distributions on the CHSH scenario the ones that fix a given PR box are  
$(\delta_{\sigma_{00}}^{00},\delta_{\sigma_{01}}^{00},\delta_{\sigma_{10}}^{00},\delta_{\sigma_{11}}^{00})$ and $(\delta_{\sigma_{00}}^{10},\delta_{\sigma_{01}}^{10},\delta_{\sigma_{10}}^{10},\delta_{\sigma_{11}}^{10})$.}
Thus the size of the orbit is $16/2=8$, which gives all the PR boxes.

To describe the action of $\delta^s$ on the Bell inequalities we need to switch {back}
to the edge coordinates.
For notational convenience we will write $p_i$ for the $i$-th entry of this tuple.
Then the deterministic distribution $\delta^s$ is given by $(\delta^{a_0},\delta^{a_0+b_0},\delta^{b_0},\delta^{a_1+b_0},\delta^{a_1},\delta^{a_1+b_1},\delta^{b_1},\delta^{a_0+b_1})$.
Using the notational convenience introduced above, in these coordinates the action of $\delta^s$ on $p$ is given by  
$$ 
(\delta^s\cdot p)_i=  (\delta^s)_i \cdot p_i.
$$
Now, substituting these new values to the Bell inequality gives the action. For example, 
 the action of  $(\delta^{0},\delta^{1},\delta^{1},\delta^{0},\delta^{1},\delta^{1},\delta^{0},\delta^{0})$  on the Bell inequality 
\begin{equation}\label{eq:sample-bell}
p_{\sigma_{00}}^0 
+p_{\sigma_{10}}^0 +p_{\sigma_{11}}^0 -p_{\sigma_{01}}^0  \leq 2
\end{equation} 
gives 
$
1-p_{\sigma_{00}}^0  
+p_{\sigma_{10}}^0 +1-p_{\sigma_{11}}^0 -p_{\sigma_{01}}^0  \leq 2
$, which can be put in a more familiar form 
$$
p_{\sigma_{00}}^0  
-p_{\sigma_{10}}^0 +p_{\sigma_{11}}^0 +p_{\sigma_{01}}^0  \geq 0.
$$  
We can compute the stabilizer of the Bell inequality in Eq.~(\ref{eq:sample-bell}). The relevant edge coordinates are $\sigma_{ij}$ that constitute the boundary of the CHSH scenario. 
The relevant coordinates of $\delta^s$ that can change the inequality are $\delta^{a_i+b_j}$ where $i,j\in \set{0,1}$. 
Then the stabilizer consists of those deterministic distributions that satisfy {$a_i+b_j=0\mod 2$ for every $i,j\in \set{0,1}$.} The size of this group is $2$ and therefore there are $8=16/2$ elements in the orbit. This covers all $8$ of the Bell inequalities.
}
}

See Section \ref{sec:application-vell-ineq} for more on the action on Bell inequalities.

\section{Distributions on the classical $N$-disk}


The classical $N$-disk scenario has the measurement space given by a disk triangulated in a way that results in only non-contextual (or classical) distribution.

\Def{\label{def:classical-N-disk}
For $N\geq 3$ let $D_N$ denote the following simplicial set:
\begin{itemize}
\item Generating $2$-simplices: $\sigma_1,\cdots,\sigma_{N-2}$.
\item Identifying relations:
$$
d_{j_1}(\sigma_1)=d_{j_2}(\sigma_2),\; d_{j'_2}(\sigma_2)=d_{j_3}(\sigma_3),\; d_{j'_3}(\sigma_3)=d_{j_4}(\sigma_4)\; \cdots\;  d_{j'_{N-3}}(\sigma_{N-3})=d_{j_{N-2}}(\sigma_{N-2})
$$
where $j_1,j_2,j'_2,\dots,j_{N-3},j'_{N-3}
,j_{N-2} \in \{0,1,2\}$ and 
$j_k \neq j'_k$ for $2 \leq k \leq N-3$.
\end{itemize}   
}
%
%
The classical $N$-disk can be constructed by successive gluing. To see this, starting from an initial non-degenerate simplex $\sigma_{1}$ we successively glue simplices along a single edge so that $\sigma_{i+1}$ shares a single common edge with $\sigma_{i}$, terminating with the simplex $\sigma_{N-2}$. 
{The simplices $\sigma_{1}$ and $\sigma_{N-2}$ in any classical $N$-disk will be referred to as the initial and terminal simplices, respectively.}
In particular, the gluing described by the face relations is such that the boundary of the disk has $N$ edges and form an $N$-{circle} in the sense of Definition \ref{def:circle}. Letting $(\partial D_{N})_{1}^{\circ}$ be the non-degenerate edges on the boundary of the classical $N$-disk, non-degenerate simplicies in the classical $N$-disk are distinguished by
\begin{eqnarray}
|(\partial D_{N})_{1}^{\circ}\cap (\sigma_{i})_{1}^{\circ}| = \left\lbrace
\begin{array}{ll}
2 & i = 1, N-2 \\
1   & \text{otherwise.}
\end{array}
\right.
\;\;\;\;
\label{eq:initial-terminal-triangles}
\end{eqnarray}
{Such edges in the classical $N$-disk are called boundary edges, otherwise we call them interior edges. The classical $3$-disk is $\Delta^2$, while the diamond space $D$ is an example of a classical $4$-disk.}
See Fig.~(\ref{fig:classical6Disk}) for an example of a classical $6$-disk.
    
\begin{figure}[h!]
\centering
\includegraphics[width=.6\linewidth]{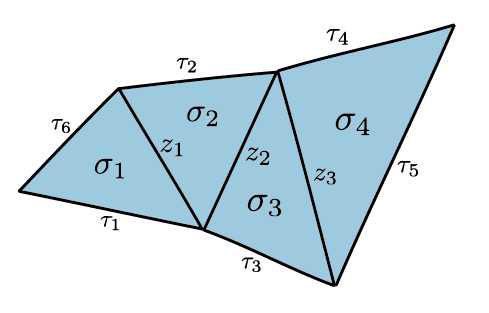}
\caption{Classical $6$-disk {with initial simplex $\sigma_1$ and terminal simplex $\sigma_4$.}
}
\label{fig:classical6Disk}
\end{figure}

\Pro{\label{pro:Ndisk}
Any simplicial distribution on the classical $N$-disk scenario is non-contextual.
}
\Proof{
This follows from (Gluing) Lemma \ref{lem:gluing} since $D_N$ is constructed by gluing $N$ triangles along a $\Delta^1$. At each step we can apply the Gluing Lemma.
}

\subsection{Fourier--Motzkin elimination}
\label{sec:FM} 
 
As is well-known, systems of linear equations can be solved using Gaussian elimination. For systems of linear inequalities there exists a related technique known as Fourier-Motzkin (FM) elimination; see e.g., \cite{ziegler2012lectures}. A linear inequality in $d$ variables can be written as $a^{T}x \geq b$, where $a, x \in \mathbb{R}^{d}$ and $b\in \mathbb{R}$. For $m$ such linear inequalities we have $a_{i}^{T}x\geq b_{i}$ ($i = 1,\cdots,m$). Taking each vector $a_{i}^{T}$ to be a row of a matrix $A$, this set of $m$ inequalities can be compactly written as $Ax\geq B$, where $A \in \mathbb{R}^{m\times d}$ and $B\in \mathbb{R}^{m}$. The feasible region defined by $Ax\geq B$ (if one exists) forms a polyhedron.

To perform FM elimination of a variable $x_{j}$, let us first index all inequalities where $x_{j}$ appears with positive, negative, or zero coefficients as $\iI_{j}^{+}$, $\iI_{j}^{-}$, and $\iI_{j}^{0}$, respectively. We then solve for $x_{j}$:
\begin{eqnarray}
x_{j} &\geq & \frac{B_{i}}{a_{ij}} - \sum_{k\neq j}\frac{a_{ik}}{a_{ij}}x_{k},\quad \forall i\in \iI_{j}^{+},\notag\\
x_{j} &\leq & -\frac{B_{i}}{|a_{ij}|} + \sum_{k\neq j}\frac{a_{ik}}{|a_{ij}|}x_{k},\quad \forall i\in \iI_{j}^{-}.\notag
\end{eqnarray}
Then for every $(i, i^{\prime}) \in \iI_{j}^{+}\times \iI_{j}^{-}$ we have that such an $x_{j}$ exists so long as
\begin{eqnarray}
\frac{B_{i}}{a_{ij}} - \sum_{k\neq j}\frac{a_{ik}}{a_{ij}}x_{k} \leq x_{j} \leq -\frac{B_{i^{\prime}}}{|a_{i^{\prime}j}|} + \sum_{k\neq j}\frac{a_{i^{\prime}k}}{|a_{i^{\prime}j}|}x_{k},\notag
\end{eqnarray}
which is equivalent to
\begin{eqnarray}
\frac{B_{i}}{a_{ij}} - \sum_{k\neq j}\frac{a_{ik}}{a_{ij}}x_{k} \leq -\frac{B_{i^{\prime}}}{|a_{i^{\prime}j}|} + \sum_{k\neq j}\frac{a_{i^{\prime}k}}{|a_{i^{\prime}j}|}x_{k}.
\end{eqnarray}
This can be rearranged to give a new set of inequalities in $d-1$ variables whose solution, should it exist, is the same as the original set of inequalities.

\subsubsection{Application to the Diamond scenario}


As a warm up we begin by considering the diamond scenario $D$ described in Example \ref{ex:diamond}. 
We will adapt a more convenient notation for the generating simplices of the two triangles $A$ and $B$. The first one will be denoted by 
$\sigma^{012}$ and the other one by $\sigma^{01'2}$.  
The diamond $D$ is obtained by gluing $A$ and $B$ along the $d_1$ face, i.e., the simplex $\sigma^{02}$. 
Again for ease of notation the probabilities $p_{\sigma_A^{012}}^{ab}$ and $p_{\sigma_B^{01'2}}^{a'b'}$ will be denoted by $p_{012}^{ab}$ and $p_{01'2}^{a'b'}$; respectively.
In this section we will use the expectation coordinates introduced in Eq.~(\ref{eq:edge_exp}) and (\ref{eq:p_in_exp}). 
These eight probabilities 
that 
are
required to be non-negative
is equivalent (up to an overall constant factor) to the inequalities
\begin{eqnarray}
&&1+(-1)^{a}\bar{\sigma}^{01}+(-1)^{b}\bar{\sigma}^{12}+(-1)^{a+b}\bar{\sigma}^{02}\geq 0,\label{eq:diamond_ineqa}\\
&&1+(-1)^{a'}\bar{\sigma}^{01'}+(-1)^{b'}\bar{\sigma}^{1'2}+(-1)^{a'+b'}\bar{\sigma}^{02}\geq 0,
\label{eq:diamond_ineqb}
\end{eqnarray}
for all $a,b,a',b'\in \ZZ_{2}$.

\begin{pro}\label{pro:chsh}
Let $D$ be a diamond and $\partial D$ denote its boundary. Then a distribution $p\in \sDist(\partial D)$ extends to $\tilde{p} \in \sDist(D)$ if and only if the CHSH inequalities are satisfied:
\begin{equation}\label{eq:CHSH}
(-1)^{a}\bar{\sigma}^{01}+(-1)^{b}\bar{\sigma}^{12}+(-1)^{a'}\bar{\sigma}^{01'}+(-1)^{b'}\bar{\sigma}^{1'2} \geq 0
\end{equation}
where $a,a'b,b'\in \ZZ_2$ satisfying $a+b+a'+b'=1~\text{mod}~2$.
\end{pro}
\Proof{
Proof of this result is given in {\cite[Proposition 4.10]{okay2022simplicial}.}
We provide an exposition here for completeness.
All of the coefficients that appear in Eqns.~(\ref{eq:diamond_ineqa}-\ref{eq:diamond_ineqb}) are just $\pm 1$, to perform FM elimination it suffices to sum up the inequalities where $\sigma^{02}$ has positive and negative coefficient. For inequalities coming from the same triangle this just yields that $-1\leq \bar{\sigma}^{ij}\leq 1$ --- we call such inequalities \emph{trivial}. When we combine inequalities from different triangles we obtain 
the inequalities in Eq.~(\ref{eq:CHSH}).
}
 
\begin{rem}
{\rm
We can interpret FM elimination geometrically as deleting an edge from a topological space; see Fig.~(\ref{fig:FM-topology})
}
\end{rem}


\begin{figure}[h!]
\centering
\begin{subfigure}{.49\textwidth}
  \centering
  \includegraphics[width=.3\linewidth]{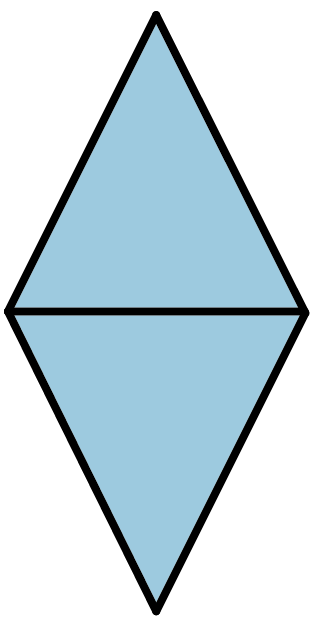}
  \caption{}
  \label{fig:FM-topology-a}
\end{subfigure}%
\begin{subfigure}{.49\textwidth}
  \centering
  \includegraphics[width=.3\linewidth]{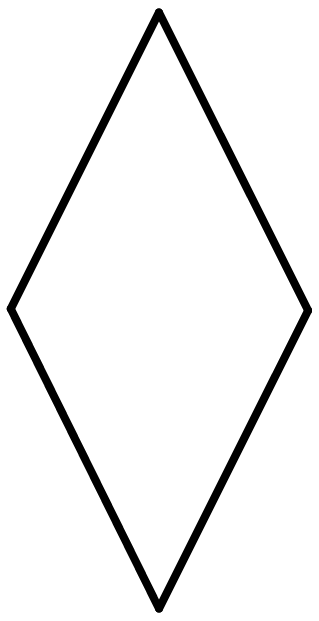}
  \caption{}
  \label{fig:FM-topology-b}
\end{subfigure}%
\caption{{(a) The classical $4$-disk is the diamond scenario. (b) FM elimination is interpreted geometrically as deleting an edge from a topological space.}
}
\label{fig:FM-topology}
\end{figure}

In an abuse of terminology, we will sometimes say that we eliminate an edge $\sigma$, when what we actually   mean is that we perform FM elimination on the corresponding expectation value $\bar{\sigma}$ that appears in the inequalities.


\subsection{Extending to the classical $N$-disk}

\begin{defn} \label{def:cycle-ineq}
Let $\tau_1,\cdots,\tau_N$ denote the generating edges on the boundary of $D_N$. 
We define the
$N$-{circle inequalities} by
\begin{eqnarray}
0 \leq n-2 + \sum_{i = 1}^{n} (-1)^{a_{i}}\bar{\tau}_{i}
\label{eq:n-cycle-ineq-FM}
\end{eqnarray}
where $\sum_{i=1}^{n}a_{i} = n+1~\text{mod}~2$. 
\end{defn}
\noindent 

\Ex{\label{ex:3cycle-exp}
{\rm
Clearly a 
triangle
$\Delta^2$ is just a classical $3$-disk and the $3$-{circle inequalities} come from Eq.~(\ref{eq:p_in_exp}): 
$$
p_x^a + p_y^b -p_z^{a+b+1} \geq 0.
$$
Note also that the diamond space is an example of a classical $4$-disk and the CHSH inequalities correspond to the $4$-{circle inequalities}.
}
}



\begin{lem}%
\label{lem:same-set}
Consider a set of $N$-{circle inequalities} with a common coordinate $\bar z$. 
Applying FM elimination to $\bar z$ the 
resulting inequalities are satisfied if the remaining coordinates $\bar{\tau}_{i}$ each satisfies $-1\leq \bar{\tau}_{i}\leq 1$.
\end{lem}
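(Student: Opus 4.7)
The plan is to compute the FM elimination explicitly and control the output by a parity argument on the sign assignments.

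First I would set up the elimination. In an $N$-circle inequality (\ref{eq:n-cycle-ineq-FM}) the common coordinate $\bar z = \bar\tau_j$ for some $j$ appears with coefficient $(-1)^{a_j}$, which is $+1$ when $a_j = 0$ and $-1$ when $a_j = 1$. Performing FM elimination on $\bar z$ amounts to summing each pair of inequalities in which $\bar z$ occurs with opposite signs. For such a pair, indexed by sign tuples $(a_1,\dots,a_n)$ with $a_j = 0$ and $(a_1',\dots,a_n')$ with $a_j' = 1$, the sum kills $\bar z$ and produces
$$0 \;\leq\; 2(n-2) + \sum_{i \neq j} \bigl((-1)^{a_i} + (-1)^{a_i'}\bigr)\, \bar\tau_i.$$
The coefficient on $\bar\tau_i$ is $\pm 2$ when $a_i = a_i'$ and $0$ otherwise, so the RHS is controlled by the number of indices $i \neq j$ on which the two sign assignments agree.

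The key step is then a parity count. Both tuples satisfy $\sum_i a_i \equiv \sum_i a_i' \equiv n+1 \pmod 2$, and since $a_j \neq a_j'$ the restrictions to $i \neq j$ satisfy $\sum_{i\neq j} a_i \not\equiv \sum_{i\neq j} a_i' \pmod 2$. Hence the number of indices $i \neq j$ with $a_i \neq a_i'$ is odd, and in particular at least one; equivalently, at most $n-2$ of the indices $i \neq j$ can contribute a nonzero coefficient in the sum above.

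With at most $n-2$ nonzero coefficients each of absolute value $2$, and with $-1 \leq \bar\tau_i \leq 1$ for every remaining coordinate, the RHS is bounded below by $2(n-2) - 2(n-2) = 0$. This is exactly the desired conclusion, and combined with the trivial observation that inequalities in which $\bar z$ has the same sign contribute no new constraint in the elimination step, it finishes the proof. The main obstacle is making the parity count crisp: once one sees that the opposite-sign pairing on $\bar z$ forces the restricted tuples to have opposite parities, the rest is mechanical.
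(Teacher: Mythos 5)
Your proof is correct and follows essentially the same route as the paper: pair off the inequalities in which $\bar z$ appears with opposite signs, use the parity constraint on the sign tuples to show at least one further coordinate cancels (so at most $n-2$ survive with coefficient $\pm 2$), and bound the result using $-1\le\bar\tau_i\le 1$. The only cosmetic difference is that you make the parity count explicit where the paper merely asserts the cancellation, and you bound directly in expectation coordinates where the paper rewrites the terms as probabilities $p_{\tau_i}^{a_i}\ge 0$.
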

\begin{proof}
Consider two inequalities where $\bar{z}$ appears with opposite signs
\begin{eqnarray}
&& 0 \leq n-2 + \bar{z} + \sum_{i = 1}^{n-1} (-1)^{a_{i}}\bar{\tau}_{i} \notag\\
&& 0 \leq n-2 -\bar{z} + \sum_{i = 1}^{n-1} (-1)^{b_{i}}\bar{\tau}_{i}  \notag
\end{eqnarray}
where 
$\sum_{i=1}^{n-1}a_{i} = n+1~\textrm{mod}~2$ and $\sum_{i=1}^{n-1}b_{i} = n~\textrm{mod}~2$.
To perform FM elimination of $\bar{z}$ we add these inequalities together and observe that due to the conditions on $a_{i}$ and $b_{j}$ that at least one other variable will cancel after summing. Let $N\subset \{1,\cdots,n-1\}$ index all variables that do \emph{not} cancel. (Note that $|N|\leq n-2$.) The inequality after summing becomes
\begin{eqnarray} \label{eq:after-summing}
0 \leq 2\left (n-2 +\sum_{i \in N}(-1)^{a_{i}}\bar{\tau}_{i} \right ),
\end{eqnarray}
or equivalently
\begin{eqnarray}
0 \leq \left (n-2 -|N|\right ) +\sum_{i \in N}\left (1+(-1)^{a_{i}}\bar{\tau}_{i} \right ).\notag
\end{eqnarray}
Using Eq.~(\ref{eq:edge_exp}) we see that these inequalities are 
satisfied 
if $-1\leq \bar{\tau}_{i}\leq 1$ for all $i\in N$:
This condition gives us
\begin{eqnarray}
0\leq \frac{n-2-|N|}{2}+\sum_{i\in N}p_{\tau_{i}}^{a_{i}},\notag
\end{eqnarray}
where each term is non-negative since $|N|\leq n-2$ and $0\leq p_{\tau_{i}}^{a_{i}}\leq 1$. Thus the inequalities in Eq.~(\ref{eq:after-summing}) are satisfied.
\end{proof}

\begin{lem}%
\label{lem:combining-ineq}
Suppose we have a set of $N$-{circle} and $M$-{circle inequalities} that overlap on only a single variable $\bar{z}$. FM elimination of $\bar{z}$ yields a set of $(N+M-2)$-{circle inequalities} (plus trivial inequalities).
\end{lem}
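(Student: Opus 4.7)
The plan is to perform FM elimination on $\bar{z}$ by summing each inequality in which $\bar{z}$ has coefficient $+1$ with each one in which it has coefficient $-1$, and to argue that these pairings fall into two categories: same-set pairings, which contribute only trivial inequalities, and cross-set pairings, which contribute precisely the $(N+M-2)$-circle inequalities on the remaining $N+M-2$ variables.

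For the same-set pairings—combining two $N$-circle inequalities or two $M$-circle inequalities—Lemma \ref{lem:same-set} applies directly and shows that the sum is implied by the trivial bounds $-1\leq \bar{\tau}\leq 1$ on the remaining variables. So these pairings contribute no non-trivial inequalities to the output.

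For a cross-set pairing I would write
\begin{align*}
0 &\leq N-2 + \bar{z} + \sum_{i=1}^{N-1}(-1)^{a_i}\bar{\tau}_i,\\
0 &\leq M-2 - \bar{z} + \sum_{j=1}^{M-1}(-1)^{b_j}\bar{\tau}'_j,
\end{align*}
where $\{\bar{\tau}_i\}$ and $\{\bar{\tau}'_j\}$ are disjoint since the overlap is only at $\bar{z}$. The parity constraint in Definition \ref{def:cycle-ineq} then forces $\sum_{i=1}^{N-1}a_i \equiv N+1 \pmod{2}$ and $\sum_{j=1}^{M-1}b_j \equiv M \pmod{2}$. Summing eliminates $\bar{z}$ and produces
\[
0 \leq (N+M-2)-2 + \sum_{i=1}^{N-1}(-1)^{a_i}\bar{\tau}_i + \sum_{j=1}^{M-1}(-1)^{b_j}\bar{\tau}'_j,
\]
an inequality in $N+M-2$ distinct variables whose exponent sum is $(N+1)+M \equiv (N+M-2)+1 \pmod{2}$. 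This is exactly an $(N+M-2)$-circle inequality. The symmetric pairing, with $-\bar{z}$ on the $N$-side and $+\bar{z}$ on the $M$-side, is handled identically.

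Finally, to confirm that every $(N+M-2)$-circle inequality arises, I would observe that the coefficient pattern on the remaining variables uniquely determines the pair (the exponents $a_i$ and $b_j$ are read off, and the $\bar{z}$-coefficients are then fixed by the parity conditions), so the map from cross-set pairs to output inequalities is injective. A direct count of cross-set pairs yields $2^{N+M-3}$, matching the total number of $(N+M-2)$-circle inequalities on these variables, so the map is a bijection. The main obstacle is the parity bookkeeping: one must check that the $a_N$ and $b_M$ conditions combine correctly to give the output parity, and that the two orientations of the cross-set pairing together reproduce each $(N+M-2)$-circle inequality exactly once.
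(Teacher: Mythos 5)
Your proof is correct and follows essentially the same route as the paper's: same-set pairings are dismissed as trivial via Lemma~\ref{lem:same-set}, cross-set pairings are summed and the parity condition $(N+1)+M\equiv(N+M-2)+1\pmod 2$ is checked, and a count of $2^{N+M-3}$ cross-set pairs matches the number of $(N+M-2)$-circle inequalities. Your explicit injectivity remark is a small refinement the paper leaves implicit, but it is the same argument.
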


\begin{proof}
We begin by noting that if we sum up inequalities coming from the same set of {circle inequalities} then by Lemma~\ref{lem:same-set} we get trivial inequalities. Let us consider the other case where $\bar{z}$ comes from two different sets; see Fig.~(\ref{fig:FM-cycle}). First note that there are $2^{K-1}$ ($K\geq 1$) inequalities in a set of $K$-{circle inequalities}. Let $\iI_{M}^{\pm}$ index the $M$-{circle inequalities} where $\bar{z}$ has a positive (or negative) coefficient and observe that $|\iI_{M}^{\pm}| = 2^{M-2}$. Similarly for $\iI_{N}^{\pm}$. FM elimination proceeds by summing up inequalities indexed by $(i,i')\in \iI_{M}^{+}\times \iI_{N}^{-}$ and $(j,j') \in \iI_{M}^{-}\times \iI_{N}^{+}$. This amounts to $2\times 2^{(M-2)+(N-2)} = 2^{(N+M-2)-1}$ new inequalities, which is precisely the amount needed for a set of $(N+M-2)$-{circle inequalities}.

\begin{figure}[h!]
\centering
\begin{subfigure}{.49\textwidth}
  \centering
  \includegraphics[width=.8\linewidth]{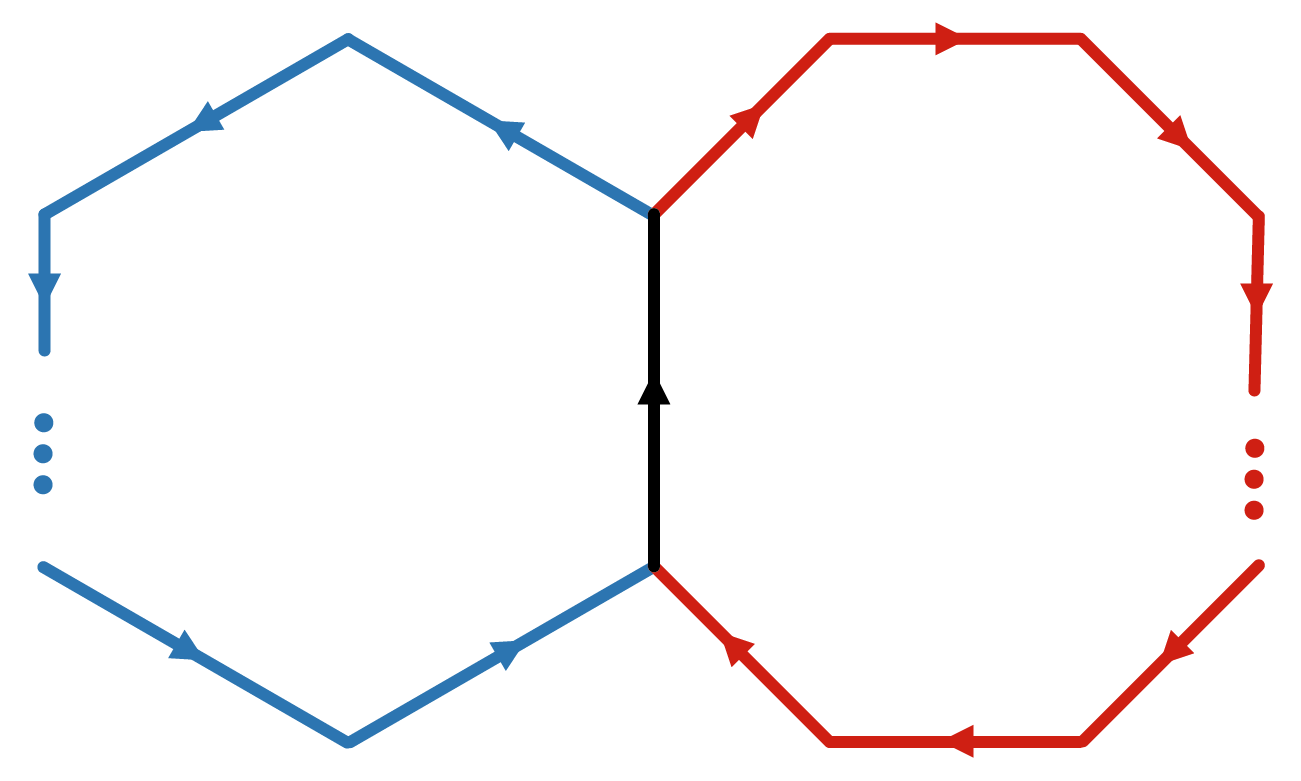}
  \caption{}
  \label{fig}
\end{subfigure}%
\begin{subfigure}{.49\textwidth}
  \centering
  \includegraphics[width=.8\linewidth]{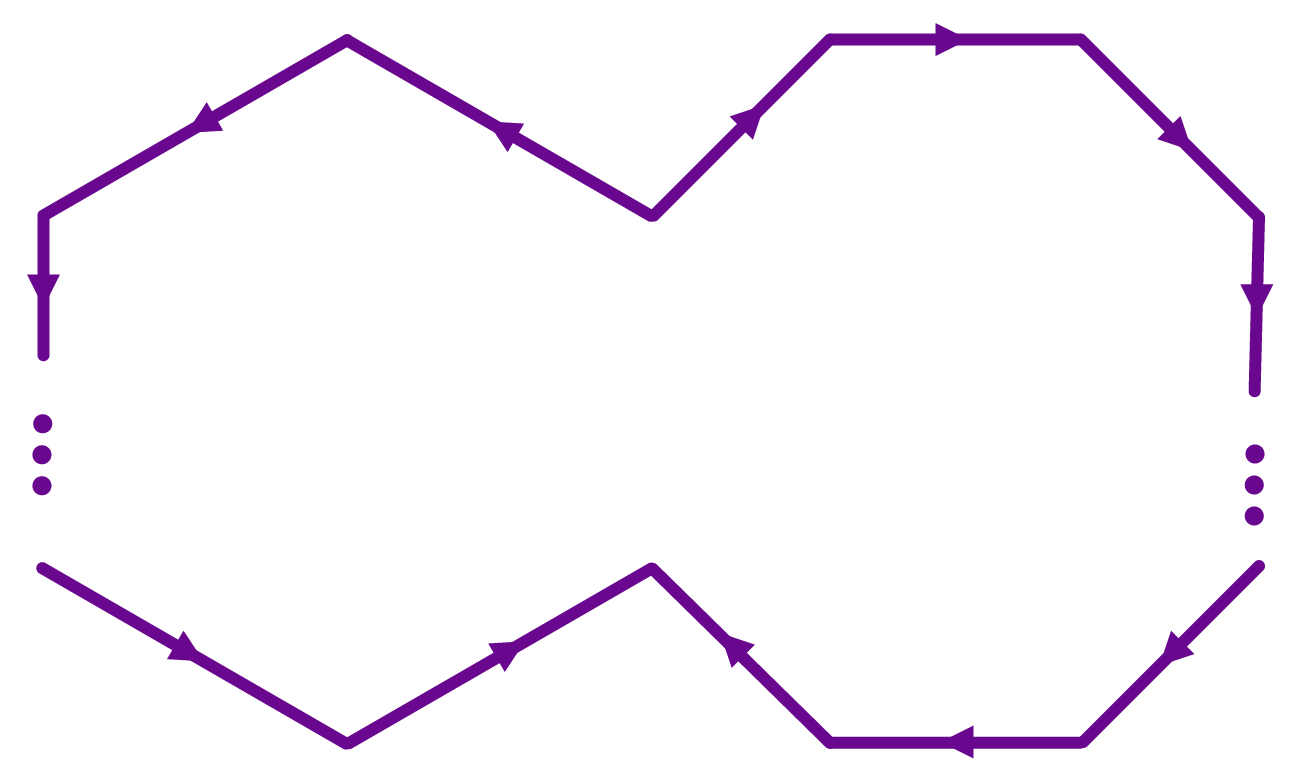}
  \caption{}
  \label{fig}
\end{subfigure}%
\caption{ 
FM elimination of an edge {(black)} common to an $N$-{circle} and $M$-{circle} in (a)  yields an 
{inequality for the $N+M-2$-{circle} in (b).}
}
\label{fig:FM-cycle}
\end{figure}


To find the precise form of these inequalities, let us consider explicitly two inequalities indexed by $(i,i') \in \iI_{M}^{-}\times \iI_{N}^{+}$. We denote the variables appearing in the $M$-{circle} and $N$-{circle inequalities} as $\tau_{j}$ ($j = 1,\cdots,M$) and $\tau'_{k}$ ($k=1,\cdots,N$), respectively, and denote $\bar{z} = \tau_{M} = \tau'_{N}$. Summing the two inequalities we obtain
\begin{eqnarray}
0&\leq& N-2 + \bar{z} + \sum_{k = 1}^{N-1}(-1)^{a_{k}}\tau'_{k}+\left ( M-2 - \bar{z} + \sum_{j = 1}^{M-1}(-1)^{b_{j}}\tau_{j}\right),\notag
\end{eqnarray}
where $\sum_{k=1}^{N-1}a_{k} = N+1~\text{mod}~2$ and $\sum_{j= 1}^{M-1}b_{j} = M~\text{mod}~2$. This is equivalent to
\begin{eqnarray}
0&\leq & (N+M-2) -2 +  \sum_{k = 1}^{N-1}(-1)^{a_{k}}\tau'_{k} + \sum_{j = 1}^{M-1}(-1)^{b_{j}}\tau_{j}
\end{eqnarray}
where $\sum_{k=1}^{n-1}a_{k} + \sum_{j=1}^{M-1}b_{j} = N+M+1~\text{mod}~2$.
Noting that $N+M+1~\text{mod}~2 = (N+M-2)+1~\text{mod}~2$, this is precisely an $(N+M-2)$-{circle inequality}. A similar argument holds for $\iI_{M}^{+}\times \iI_{N}^{-}$ and this proves the result.
\end{proof}

We have the following corollary of Lemma \ref{lem:combining-ineq}:

\begin{cor}%
\label{cor:n-and-3-FM}
Suppose we have a set of $N$-{circle} and $3$-{circle inequalities} that overlap on only a single variable $\bar{z}$. FM elimination of $\bar{z}$ yields a set of $(N+1)$-{circle inequalities} (plus trivial inequalities). {See Fig.~(\ref{fig:FM-cycle}). 
}
\end{cor}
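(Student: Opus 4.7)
The plan is to observe that Corollary \ref{cor:n-and-3-FM} is an immediate specialization of Lemma \ref{lem:combining-ineq}. Since a $3$-circle inequality is exactly a triangle inequality (as recorded in Example \ref{ex:3cycle-exp}), the hypotheses of Lemma \ref{lem:combining-ineq} are met once we take $M=3$: the two families share only the variable $\bar z$, and both are sets of $K$-circle inequalities for $K \geq 3$.

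First I would verify that the arithmetic of the index works as claimed by the parent lemma. Applying Lemma \ref{lem:combining-ineq} with $M=3$, FM elimination of $\bar z$ produces $N+M-2 = N+1$-circle inequalities (plus trivial inequalities of the form $-1 \leq \bar\tau \leq 1$ that appear whenever one sums two inequalities from the same original family, by Lemma \ref{lem:same-set}). The parity condition from the proof of Lemma \ref{lem:combining-ineq} also specializes cleanly: summing the two constraints $\sum a_k = N+1 \bmod 2$ (coming from the $N$-circle side) and $\sum b_j = 3 \bmod 2 = 1$ (coming from the triangle side) yields $\sum a_k + \sum b_j = N+2 \bmod 2 = (N+1)+1 \bmod 2$, which is exactly the parity condition for an $(N+1)$-circle inequality in Definition \ref{def:cycle-ineq}.

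Geometrically, this matches the picture in Fig.~(\ref{fig:FM-cycle}): attaching a triangle to an $N$-circle disk along a common edge and then eliminating (i.e., deleting) that edge produces the boundary of a disk triangulated by one more triangle, whose boundary is an $(N+1)$-circle. In particular, there is no real obstacle here; the statement is a formal consequence of Lemma \ref{lem:combining-ineq}, and the only content to record in the write-up is the substitution $M=3$ together with the parity check. A one-line proof invoking Lemma \ref{lem:combining-ineq} with $M=3$ would suffice, but including the geometric remark clarifies why this specific case is singled out as a corollary (it is the elementary step used repeatedly to build up the classical $N$-disk from successive triangle gluings).
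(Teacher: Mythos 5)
Your proof is correct and matches the paper exactly: the paper states this result as an immediate corollary of Lemma \ref{lem:combining-ineq} with no further argument, which is precisely your substitution $M=3$ giving $N+M-2=N+1$. Your parity check and geometric remark are consistent with the lemma's proof and add nothing that needs verification.
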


Next we apply these preliminary results to the classical $N$-disk scenario. 

\begin{pro}%
\label{pro:n-cycle-ineq-FM}%
A distribution $p\in \sDist(\partial D_N)$ extends to a distribution $\tilde p$ on $D_N$ if and only if the $N$-{circle inequalities} (and the trivial inequalities $-1\leq \bar{\tau}_{i}\leq 1$) are satisfied.
\end{pro}
\Proof{
We consider a classical $N$-disk (e.g., see Fig.~(\ref{fig:classical6Disk})) 
such that the edges on the boundary are labeled by $\tau_{i}$ ($i=1,\cdots,n$) and those on the interior are denoted $z_{j}$ ($j =1,\cdots,n-3$). 
For the first part of our proof, our strategy is to perform FM elimination successively on the interior edges $z_{j}$ beginning\footnote{It is well known that the order in which FM elimination is performed does not affect the final result, however, a ``bad" ordering can lead to an explosion in intermediate inequalities to keep track of; see e.g., \cite{fukuda2005double}.} with $z_{1}$ and ending in $z_{n-3}$. Consider the two classical $3$-disks bounded by $\{\tau_{1},\tau_{n}, z_{1}\}$ and $\{z_{1},\tau_{2}, z_{2}\}$, respectively. 
By Corollary~\ref{cor:n-and-3-FM}, FM elimination of $\bar{z}_{1}$ yields a set of $4$-{circle inequalities} (plus trivial inequalities) together with the remaining inequalities in which $\bar{z}_{1}$ does not appear. For each successive application of FM for the edges $z_{j}$ we can apply Corollary~\ref{cor:n-and-3-FM}. 
After $n-3$ iterations we are left with an $n$-{circle inequality}, as well as trivial inequalities. This proves one direction. On the other hand, FM elimination guarantees that we can find a set of $\{z_{i}:\,i=1,\cdots,n-3\}$ such that we can reverse this process and extend from the boundary to the $n$-order disk.
}

\subsection{Bouquet of classical $N$-disks}

It is possible to extend Proposition~\ref{pro:n-cycle-ineq-FM} slightly by considering the union of $N$ disks of varying 
size.

{
\Def{\label{def:circle}
Let $C_1$ denote the simplicial set with a single generating $1$-simplex $\tau$ with the relation
$$
d_0\tau=d_1\tau.
$$ 
For $N\geq 2$, let $C_N$ denote the $1$-dimensional simplicial set consisting of the generating $1$-simplices $\tau_1,\cdots,\tau_N$ together with the identifying relations
$$
d_{i'_1}\tau_1 = d_{i_2}\tau_2,\; d_{i_2'}\tau_2= d_{i_3}\tau_3\;\cdots\; d_{i'_N} \tau_N = d_{i_1} \tau_1 
$$
where $i_k\neq i_k' \in \set{0,1}$ for $1\leq k\leq N$.
We call $C_N$ the {\it $N$-circle} space, or simply the circle space when $N=1$; see Fig.~(\ref{fig:8-circle}).  
A 
circle
 of length $N$ on a simplicial set $X$ is given by an injective simplicial set map $C_N\to X$. We will also write $C_N$ for the image of this map.
} 
}

\begin{figure}[h!]
\centering
\begin{subfigure}{.33\textwidth}
  \centering
  \includegraphics[width=\linewidth]{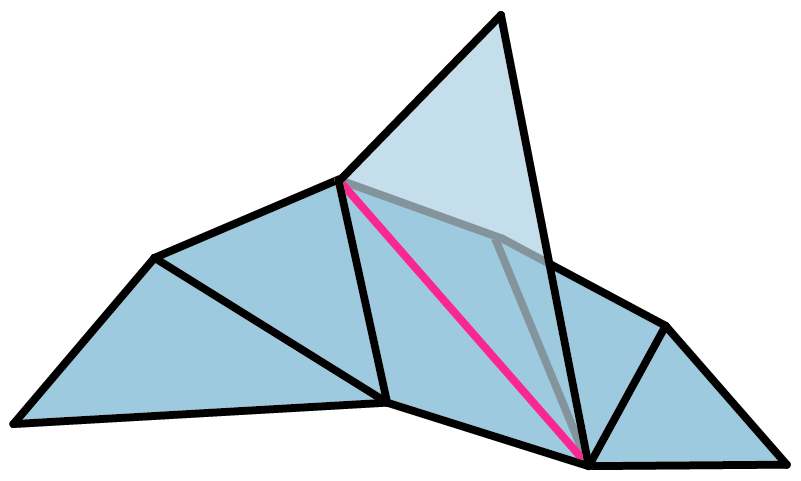}
  \caption{}
  \label{fig:bouquet-a}
\end{subfigure}%
\begin{subfigure}{.33\textwidth}
  \centering
  \includegraphics[width=\linewidth]{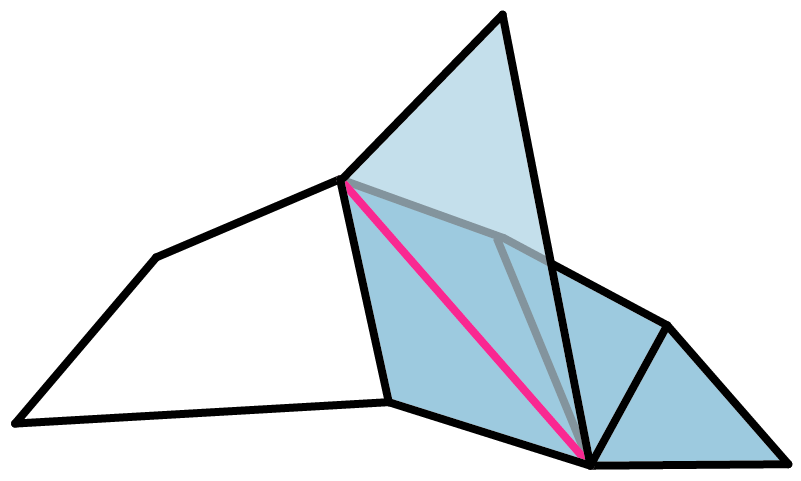}
  \caption{}
  \label{fig:bouquet-b}
\end{subfigure}%
\begin{subfigure}{.33\textwidth}
  \centering
  \includegraphics[width=\linewidth]{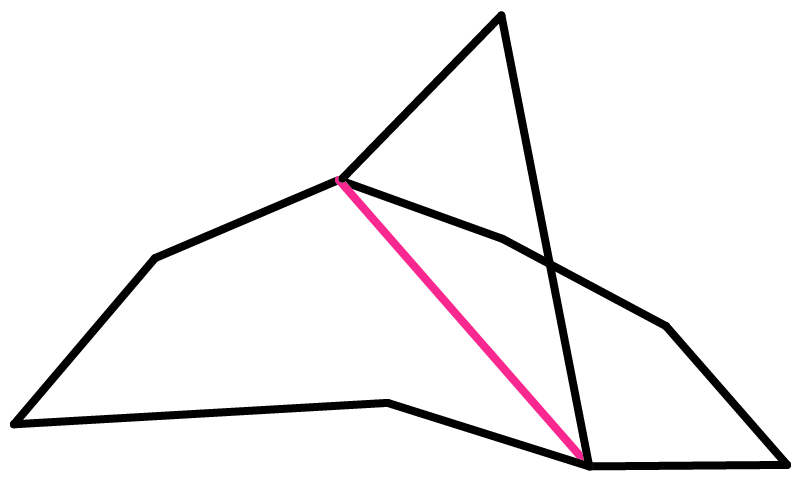}
  \caption{}
  \label{fig:bouquet-c}
\end{subfigure}%
\caption{{(a) A bouquet of classical $N$-disks is constructed by gluing each $D_{N_{i}}$ along a common edge (purple) that is a boundary edge of its initial triangle. (b) For each disk we perform FM elimination beginning with the interior edge of the terminal triangle. (c) We continue until only the common edge (purple) remains.}
}
\label{fig:bouquet}
\end{figure}

\begin{cor}%
\label{cor:generalized-extension}
Let $X$ be a $2$-dimensional simplicial set 
obtained by gluing $D_{N_1},\cdots,D_{N_k}$ along a common edge,
and let $\partial X$ be the $1$-dimensional
given by the boundary of $X$.
Then $p\in \sDist(\partial X)$ 
extends to a distribution $\tilde p$ on $X$  
if and only if for every 
{circle} $C_{M_j} \subset \partial X$, where $j=1,\cdots,{k\choose 2}$, 
we have that $p|_{C_{M_j}}$ 
satisfies the corresponding 
$M_j$-{circle inequality}. 
\end{cor}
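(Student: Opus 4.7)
The plan is to view the extension problem globally: an extension $\tilde p\in \sDist(X)$ of $p\in \sDist(\partial X)$ must assign a distribution to every interior edge and triangle of $X$. Among the interior edges, the common gluing edge $e$ is the only one shared by all of the disks $D_{N_i}$; the remaining interior edges of each $D_{N_i}$ lie in the interior of that disk alone. So once we fix a value $\bar{e}\in[-1,1]$ for the expectation on $e$, extending $p$ to $X$ decouples into extending the distribution $(p|_{\partial D_{N_i}\setminus e},\bar{e})$ on $C_{N_i}$ to the interior of $D_{N_i}$, independently for each $i$.

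For the forward direction, suppose $\tilde p$ extends $p$. Any two disks $D_{N_i}$ and $D_{N_j}$ glued along $e$ form a classical $(N_i+N_j-2)$-disk $X_{ij}$, since it is still built by successively attaching triangles along single edges, and its boundary is a circle $C_{M_j}\subset\partial X$ of length $N_i+N_j-2$ consisting of the non-common boundary edges of the two disks. Applying Proposition~\ref{pro:n-cycle-ineq-FM} to $\tilde p|_{X_{ij}}$ shows that $p|_{C_{M_j}}=\tilde p|_{\partial X_{ij}}$ satisfies the $M_j$-circle inequality. Ranging over all pairs $(i,j)$ gives the $\binom{k}{2}$ required inequalities.

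For the reverse direction, Proposition~\ref{pro:n-cycle-ineq-FM} applied to each $D_{N_i}$ says that an extension on $D_{N_i}$ with a prescribed value of $\bar{e}$ exists iff the $N_i$-circle inequalities on $C_{N_i}$ (which involve the variable $\bar{e}$) are satisfied, together with the automatically-satisfied trivial inequalities. By the decoupling observation above, $p$ extends to $X$ iff there is a single value $\bar{e}\in[-1,1]$ simultaneously satisfying all $k$ sets of $N_i$-circle inequalities. By Fourier--Motzkin elimination of $\bar{e}$, such a value exists iff the inequalities obtained after eliminating $\bar{e}$ hold. Lemma~\ref{lem:same-set} shows that eliminating $\bar{e}$ using pairs from the same $N_i$-circle set yields only trivial inequalities, and Lemma~\ref{lem:combining-ineq} shows that eliminating $\bar{e}$ using one inequality from the $N_i$-circle set and one from the $N_j$-circle set yields exactly the $(N_i+N_j-2)$-circle inequalities on the circle $C_{M_j}\subset \partial X$. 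These are the inequalities hypothesized to hold, so a valid $\bar{e}$ exists, giving the desired extension.

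The main obstacle is really conceptual rather than computational: one must recognize that the geometric operation of gluing classical disks along the common edge $e$ matches precisely the algebraic operation of FM-eliminating the shared variable $\bar{e}$ from the boundary circle inequalities, so that the resulting inequalities are exactly the circle inequalities on all circles $C_{M_j}\subset\partial X$ with no other constraints. Once this correspondence is in place, the proof is assembled directly from Proposition~\ref{pro:n-cycle-ineq-FM} together with Lemmas~\ref{lem:same-set} and~\ref{lem:combining-ineq}.
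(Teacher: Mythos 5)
Your proof is correct and follows essentially the same route as the paper: reduce each $D_{N_i}$ to its $N_i$-circle inequalities in the boundary edges together with the common edge (via Proposition~\ref{pro:n-cycle-ineq-FM}), then Fourier--Motzkin eliminate the common edge, identifying the surviving constraints as the $\binom{k}{2}$ circle inequalities using Lemmas~\ref{lem:same-set} and~\ref{lem:combining-ineq}. Your explicit decoupling observation and the direct forward-direction argument (noting that $D_{N_i}\cup_\tau D_{N_j}$ is itself a classical $(N_i+N_j-2)$-disk) are minor refinements of, not departures from, the paper's argument.
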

\begin{proof}
For each $D_{N_{i}}$ its initial and terminal triangles, which we denote by $\sigma_{1}^{(i)}$ and $\sigma_{N_{i}-2}^{(i)}$, respectively, are distinguished via Eq.~(\ref{eq:initial-terminal-triangles}). A bouquet $X$ of classical $N$-disks is then constructed by gluing each $D_{N_{i}}$ along the single common edge $\tau$, which we take to be either boundary edge of the initial triangle $\sigma_{1}^{(i)}$; see Fig.~(\ref{fig:bouquet-a}).

For each disk $D_{N_{i}}$ in $X$ we perform FM elimination on all interior edges, beginning with the interior edge of the terminal triangle and concluding with the interior edge of the initial triangle; see Fig.~(\ref{fig:bouquet-b}). The ordering of which disks FM elimination is applied to is arbitrary and does not affect the calculations. For each disk we stop before eliminating the edge $\tau$; see Fig.~(\ref{fig:bouquet-c}). By Proposition~\ref{pro:n-cycle-ineq-FM} this will result in $N$-{circle inequalities} (plus trivial inequalities), each corresponding to a {circle} of length $N_{i}$. Since $\tau$ appears in all $k$ sets of {circle inequalities}, and it is the only edge in the intersection of these {circle}s, then Lemma~\ref{lem:combining-ineq} applies. We will have ${k \choose 2}$ {circle}s $C_{M_{j}}$, where $j = 1,\cdots,{k\choose 2}$.
\end{proof}

The extension result of Corollary~\ref{cor:generalized-extension} will be useful in proving Fine's theorem in Section \ref{sec:topological-proof-of-Fine} for various types of scenarios. 

\begin{figure}[h!]
\centering
\begin{subfigure}{.49\textwidth}
  \centering
  \includegraphics[width=.5\linewidth]{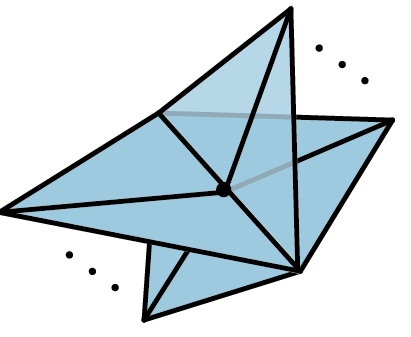}
  \caption{}
  \label{fig:2m22-bell}
\end{subfigure}%
\begin{subfigure}{.49\textwidth}
  \centering    
  \includegraphics[width=.5\linewidth]{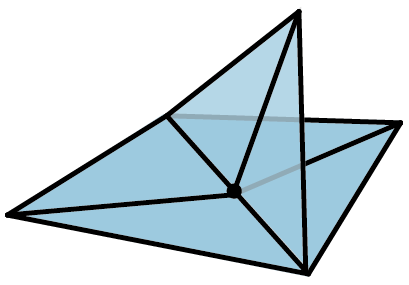}
  \caption{}
  \label{fig:2322-bell}
\end{subfigure}\\
\begin{subfigure}{.49\textwidth}
  \centering
  \includegraphics[width=.5\linewidth]{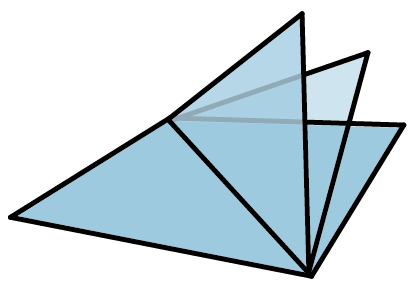}
  \caption{}
  \label{fig:2322-classical}
\end{subfigure}%
\begin{subfigure}{.49\textwidth}
  \centering
  \includegraphics[width=.5\linewidth]{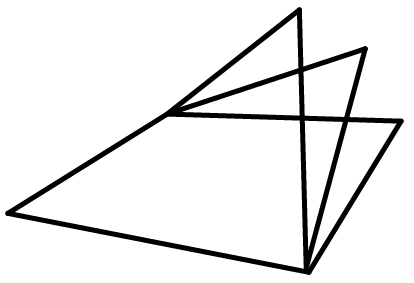}
  \caption{}
  \label{fig:2322-class-FM}
\end{subfigure}%
\caption{(a) Topological realization of the $(2,m,2,2)$ Bell scenario. (b) Measurement space for the $(2,3,2,2)$ Bell scenario. (c) The bouquet of classical $N$-disks (here, all $3$-disks) used in proof of Fine's theorem for the $(2,3,2,2)$ Bell scenario. (d) A distribution on the $(2,3,2,2)$ Bell scenario is classical if and only if $6 = {4 \choose 2}$ sets of CHSH inequalities are satisfied, corresponding to the $6$ possible {circle}s.
}
\label{fig:bouquet}
\end{figure}

\begin{ex}\label{ex:special-case-for-Fine}
{ \rm
Bipartite {$(m_A,m_B,d_A,d_B)$} Bell scenarios  consist of parties Alice and Bob performing one of $m_{A}$, $m_{B}$ measurements with one of $d_{A}$, $d_{B}$ outcomes, respectively. 
In \cite{collins2004relevant} it was shown that, by generalizing an argument due to Fine \cite{fine1982hidden,fine1982joint}, that the CHSH inequalities are also necessary and sufficient for this more general scenario.

{A topological realization for $(2,m,2,2)$ 
 Bell scenario is given in Fig.~(\ref{fig:2m22-bell}). Note that this scenario is a special case of the flower scenario depicted in Fig.~(\ref{fig:flower}). 
In Theorem \ref{thm:bouquet-of-lines} we will generalize Fine's characterization of non-contextual distributions to to flower scenarios. 
The basic idea of our approach can be sketched in the case of $(2,3,2,2)$ Bell scenario; see Fig.~(\ref{fig:2322-bell}). In this case we use the bouquet of $3$-disks depicted in  Fig.~(\ref{fig:2322-classical}). By Corollary \ref{cor:generalized-extension} a distribution on the boundary Fig.~(\ref{fig:2322-class-FM}) extends to the whole space} if and only if the $6 = {4 \choose 2}$ sets of $4$-{circle inequalities} are satisfied.

 


}
\end{ex}

\section{Distributions on the $N$-cycle scenario and beyond}




The measurement space of the $N$-cycle scenario is a disk triangulated into $N$ triangles as in Fig.~(\ref{fig:8-cycle}).

\Def{\label{def:N-cycle}
Let $\tilde C_N$ denote the following simplicial set:
\begin{itemize}
\item Generating $2$-simplices: $\sigma_1,\cdots,\sigma_N$.
\item Identifying relations:
$$
d_{i'_1}\sigma_1 = d_{i_2}\sigma_2,\; d_{i_2'}\sigma_2= d_{i_3}\sigma_3\;\cdots\; d_{i'_N} \sigma_N = d_{i_1} \sigma_1 
$$
where $i_k\neq i_k' \in \set{1,2}$ for $1\leq k\leq N$.


\end{itemize}   
$\tilde C_1$ has a single generating simplex $\sigma$ with the identifying relation $d_1\sigma = d_2\sigma$. 
}

Topologically $\tilde C_N$ is obtained from its boundary, which is circle consisting of $N$ edges, by introducing a new point, the vertex in the middle, and coning off the boundary.
This construction will be very useful in our analysis of simplicial distributions on $2$-dimensional measurement spaces.

\begin{figure}[h!]
\centering
\begin{subfigure}{.49\textwidth}
  \centering
  \includegraphics[width=.5\linewidth]{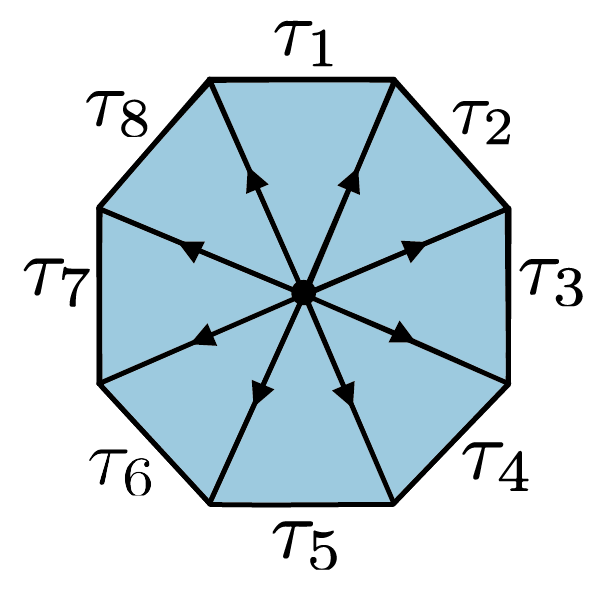}
  \caption{}
  \label{fig:8-cycle}
\end{subfigure}%
\begin{subfigure}{.49\textwidth}
  \centering
  \includegraphics[width=.5\linewidth]{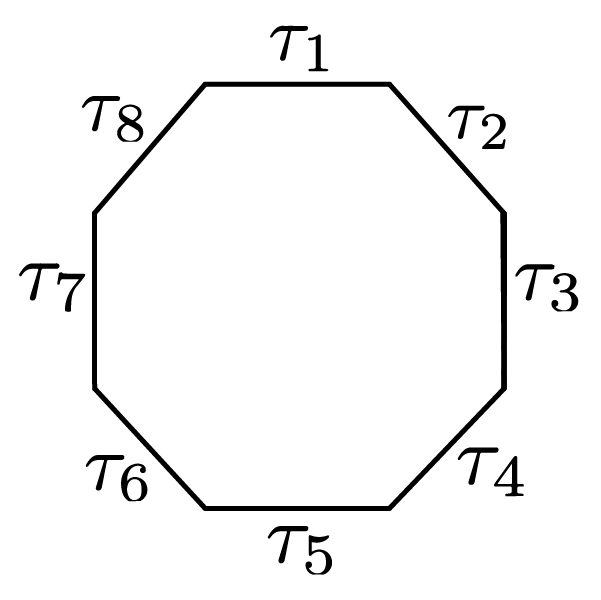}
  \caption{}
  \label{fig:8-circle}
\end{subfigure}
\caption{
The $N$-cycle scenario ($N=8$) depicted in (a) is the cone of the $N$-circle scenario depicted in (b) consisting of the edges $\tau_1,\cdots ,\tau_8$ on the boundary.  
}
\label{fig:8-circle-cycle}
\end{figure}



Observe that  $\tilde C_N$ is precisely the cone of $C_N$. 
In later sections, we will study the vertices of the polytope of simplicial distributions on this scenario and describe the Bell inequalities bounding the non-contextual distributions. 
Note that $\tilde C_1$ is a new scenario in the sense that it cannot be realized in the conventional picture of non-signaling distributions, such as in the language of sheaf theory \cite{abramsky2011sheaf}. This is the smallest space on which a contextual simplicial distribution is defined.


\subsection{Topological proof of Fine's theorem
}
\label{sec:topological-proof-of-Fine}

The proof of Fine's theorem for the CHSH scenario given in   \cite[Theorem 4.13]{okay2022simplicial} 
relies on topological methods. 
Here we show that these methods can be generalized to other interesting scenarios including the $N$-cycle scenario and the flower scenario {obtained by gluing cycle scenarios as in Fig.~(\ref{fig:flower}).}


\begin{lem}\label{lem:DetCone}
Let $X$ be a simplicial set. The map 
$$\dDist(\Cone(X)) \to {\zz_2^{|X_0|}}$$ that send $\delta^s$ to {$(s(c,v))_{v\in X_0}$} is {a bijection}.
\end{lem}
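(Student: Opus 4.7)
The plan is to use the explicit form of the cone face maps from Definition~\ref{def:cone} together with the fact that $(N\ZZ_2)_0$ is a single point. A deterministic distribution $\delta^s \in \dDist(\Cone(X))$ corresponds to a simplicial set map $s : \Cone(X) \to N\ZZ_2$, and since every vertex of $\Cone(X)$ must be sent to the unique vertex of $N\ZZ_2$, the assignment $\delta^s \mapsto (s_{(c,v)})_{v\in X_0}$ is well-defined.

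For injectivity, the key step is to examine the $2$-simplex $(c,\tau)$ for each non-degenerate edge $\tau\in X_1$. Applying Definition~\ref{def:cone} gives the faces $d_0(c,\tau) = \tau$, $d_1(c,\tau) = (c, d_0\tau)$, and $d_2(c,\tau) = (c, d_1\tau)$. Writing $s_{(c,\tau)} = (a,b) \in \ZZ_2^2$ and comparing with the face maps of $N\ZZ_2$ forces $a = s_{(c,d_1\tau)}$, $b = s_\tau$, and $a+b = s_{(c,d_0\tau)}$, so that
$$
s_\tau = s_{(c,d_0\tau)} + s_{(c,d_1\tau)}.
$$
This pins down the value of $s$ on every non-degenerate edge of $\Cone(X)$ in terms of the tuple $(s_{(c,v)})_{v\in X_0}$. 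An analogous argument applied to the faces of $(c,\sigma)$ for higher-dimensional non-degenerate $\sigma \in X$, inducting on dimension, shows that $s$ is determined on all non-degenerate simplices, and hence on all of $\Cone(X)$ by compatibility with the degeneracy maps.

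For surjectivity, given a tuple $(a_v)_{v\in X_0} \in \ZZ_2^{|X_0|}$, I would construct a candidate $s$ by setting $s_{(c,v)} = a_v$ and extending according to the formulas forced by the injectivity argument: for an $n$-simplex $\sigma \in X_n$ with vertices $v_0, \ldots, v_n$ in the canonical order,
$$
s_\sigma = (a_{v_0}+a_{v_1},\,a_{v_1}+a_{v_2},\,\ldots,\,a_{v_{n-1}}+a_{v_n}),
$$
and for the cone simplex $(c,\sigma)\in (\Cone(X))_{n+1}$,
$$
s_{(c,\sigma)} = (a_{v_0},\,a_{v_0}+a_{v_1},\,\ldots,\,a_{v_{n-1}}+a_{v_n}).
$$
The remaining task is to verify that this $s$ is a bona fide simplicial map, i.e., respects every face and degeneracy relation of $\Cone(X)$.

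The main obstacle is precisely this compatibility check, which is bookkeeping-heavy because Definition~\ref{def:cone} splits into cases depending on whether a face map acts on the cone part or on the $X$ part. However, the verification is conceptually routine: all required identities reduce to the basic telescoping identity $(a_{v_{i-1}} + a_{v_i}) + (a_{v_i} + a_{v_{i+1}}) = a_{v_{i-1}} + a_{v_{i+1}}$ in $\ZZ_2$, which governs how consecutive edge values combine under the face maps of $N\ZZ_2$. In effect, the formulas above realize $s$ as the ``coboundary'' of the $0$-cochain $v \mapsto a_v$, $c \mapsto 0$, and this description makes all compatibilities automatic.
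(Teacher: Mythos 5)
Your proof is correct and follows essentially the same route as the paper's: the paper's (two-sentence) argument is exactly your observation that the parity constraint on the three edges of each triangle $(c,\tau)$ forces $s_\tau = s_{(c,d_0\tau)} + s_{(c,d_1\tau)}$, so the values $s_{(c,v)}$ determine $s$ everywhere. You additionally make the surjectivity explicit via the coboundary construction, which the paper leaves implicit; that is a welcome completion rather than a different method.
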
 
\proof{
{
{A deterministic distribution on $\Delta^2$} 
is given by an assignment $(x,y,z)\mapsto (a,b,c)$ such that $a+b+c=0\mod 2$. Therefore in $\Cone(X)$ once the edges $(c,v)$ are assigned an outcome the remaining edges will be determined.
}
}
\begin{lem}\label{lem:OneDiric}
Let $X$ be a simplicial set.
Given a non-contextual distribution $p\in \sDist(\Cone(X))$, the restriction $p|_{C_N}$ to an $N$-{circle} $C_N\subset X$ satisfies the $N$-{circle inequalities}.
\end{lem}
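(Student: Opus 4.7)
The plan is to reduce to Proposition~\ref{pro:n-cycle-ineq-FM}, which characterizes the distributions on $\partial D_N = C_N$ that extend to the classical $N$-disk $D_N$ as exactly those satisfying the $N$-{circle} inequalities. It therefore suffices to produce a non-contextual extension of $p|_{C_N}$ to $D_N$.

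First I would restrict $p$ to the simplicial subset $\Cone(C_N) = \tilde C_N$ of $\Cone(X)$. By Corollary~\ref{cor:ExtLem} the distribution $q = p|_{\tilde C_N}$ is non-contextual, so we may write $q = \Theta(d) = \sum_s d_s\, \delta^s$ as a convex combination of deterministic distributions on $\tilde C_N$. By Lemma~\ref{lem:DetCone} each outcome assignment $s$ on $\tilde C_N$ is uniquely determined by its values on the cone edges $(c,v_i)$, where $v_1,\ldots,v_N$ are the vertices of $C_N$ (indices taken mod $N$). Compatibility with each triangle $(c,\tau_i)$ forces $s(\tau_i) = s(c,v_i) + s(c,v_{i+1}) \bmod 2$, and summing over $i$ produces the telescoping identity $\sum_{i=1}^N s(\tau_i) \equiv 0 \bmod 2$.

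Next I would verify that any deterministic outcome assignment on $C_N$ obeying this parity constraint extends (uniquely) to a deterministic distribution on $D_N$. Since $D_N$ is built by successively gluing triangles, the $N-3$ interior edges can be solved for iteratively using the triangle constraint $a+b+c \equiv 0 \bmod 2$; summing all $N-2$ triangle constraints cancels the interior edges and leaves exactly the same parity condition on the boundary edges. Hence each $\delta^s|_{C_N}$ appearing in the decomposition of $q|_{C_N}$ lifts to some $\delta^{\tilde s}$ on $D_N$, and taking the same convex combination produces a non-contextual extension of $p|_{C_N} = q|_{C_N}$ to $D_N$. Applying Proposition~\ref{pro:n-cycle-ineq-FM} to this extension yields the $N$-circle inequalities for $p|_{C_N}$.

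The main obstacle is the parity bookkeeping relating $\tilde C_N$ and $D_N$: one must check both that every deterministic distribution on $\tilde C_N$ restricts to a boundary assignment with $\sum_i s(\tau_i) \equiv 0 \bmod 2$, and that every such boundary assignment extends to a deterministic distribution on $D_N$. Both are routine $\ZZ_2$ linear-algebra verifications, but it is precisely this coincidence of parity conditions that licenses the transfer of non-contextuality from the $N$-cycle triangulation $\tilde C_N$ to the classical triangulation $D_N$, where the Fourier--Motzkin machinery of Proposition~\ref{pro:n-cycle-ineq-FM} directly applies.
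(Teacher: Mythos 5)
Your proof is correct and follows essentially the same route as the paper: both reduce to Proposition~\ref{pro:n-cycle-ineq-FM} by using non-contextuality of $p|_{\tilde C_N}$ to extend $p|_{C_N}$ to the classical disk $D_N$ through its decomposition into deterministic distributions. The only cosmetic difference is that the paper obtains the lift of deterministic distributions by observing $(C_N)_0=(D_N)_0$ and invoking Lemma~\ref{lem:DetCone} for $\Cone(D_N)$, whereas you verify the matching $\ZZ_2$ parity conditions on $\tilde C_N$ and $D_N$ directly by hand.
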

\begin{proof} 
Let $D_N$ be a classical $N$-disc with $C_N$ as the boundary. Recall that we can think of $\tilde C_N$ as the cone of $C_N$.
Note that $(C_N)_0=(D_N)_0$, thus using Lemma \ref{lem:DetCone} we obtain that the map $\dDist(\Cone(D_N))) \to \dDist(\tilde{C}_N)$  induced by the inclusion $C_N {\to} D_N$ is an isomorphism. 
We have the commutative diagram 
\begin{equation}\label{SQQQ}
\begin{tikzcd}
D_{R}(\dDist(\Cone(D_N))  
\arrow[rr,"\Theta"]
 \arrow[d,"\cong"'] && \sDist(\Cone (D_N))   
 \arrow[d,""] \\
D_{R}(\dDist(\tilde{C}_N)) 
 \arrow[rr,"\Theta"] && 
 \sDist(\tilde{C}_N)
\end{tikzcd}
\end{equation}
The simplicial distribution $p$ is non-contextual, thus by {{Corollary \ref{cor:ExtLem}}}
$p|_{\tilde{C}_N}$ is also non-contextual. 
Therefore by Diagram (\ref{SQQQ}) the distribution $p|_{\tilde{C}_N}$ can be extended to a distribution on $\Cone(D_N)$.
In particular, $p|_{C_N}$ extended to a distribution on $D_N$. By 
Proposition \ref{pro:n-cycle-ineq-FM}
we obtain the result.
\end{proof}
%
%

\begin{figure}
\centering
\begin{subfigure}{.33\textwidth}
  \centering
  \includegraphics[width=.6\linewidth]{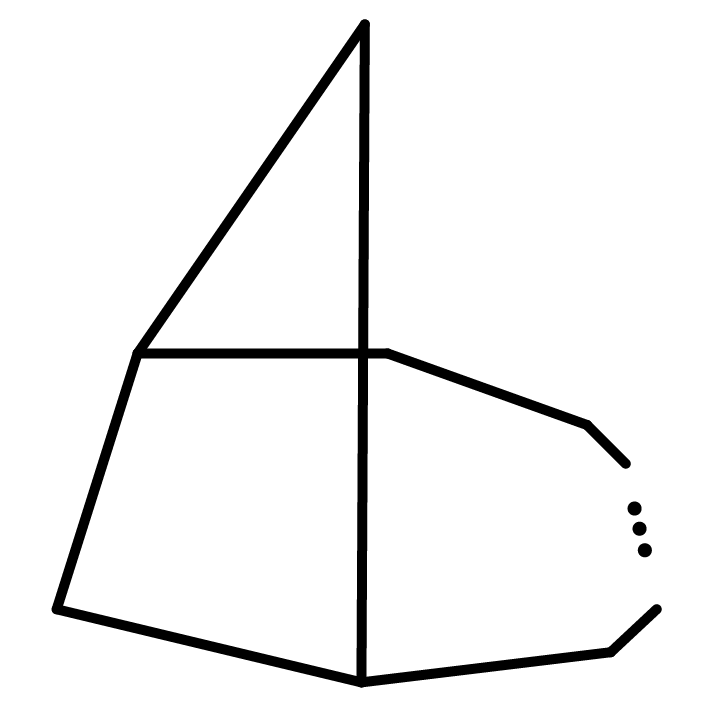}
  \caption{}
  \label{fig:ZPic}
\end{subfigure}%
\begin{subfigure}{.33\textwidth}
  \centering
  \includegraphics[width=.6\linewidth]{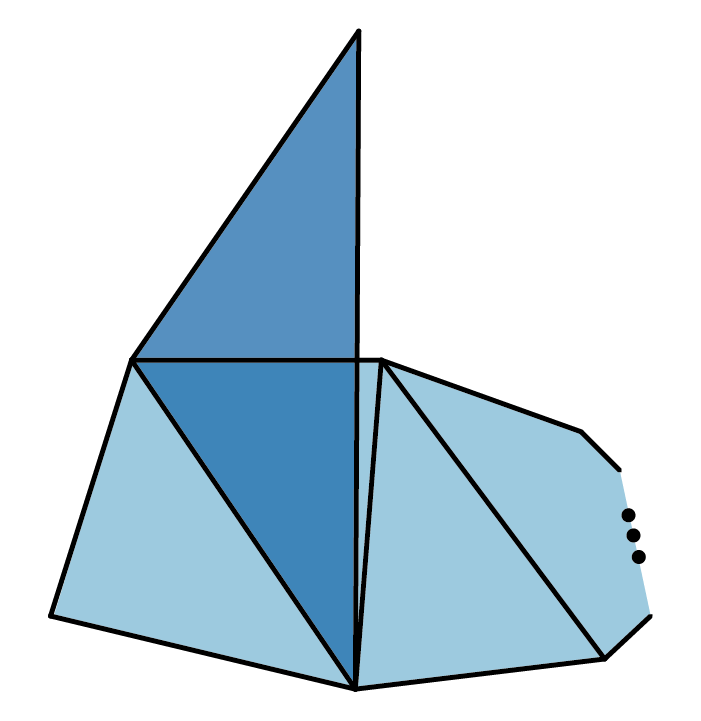}
  \caption{}
  \label{fig:WPic}
\end{subfigure}
\begin{subfigure}{.33\textwidth}
  \centering
  \includegraphics[width=.6\linewidth]{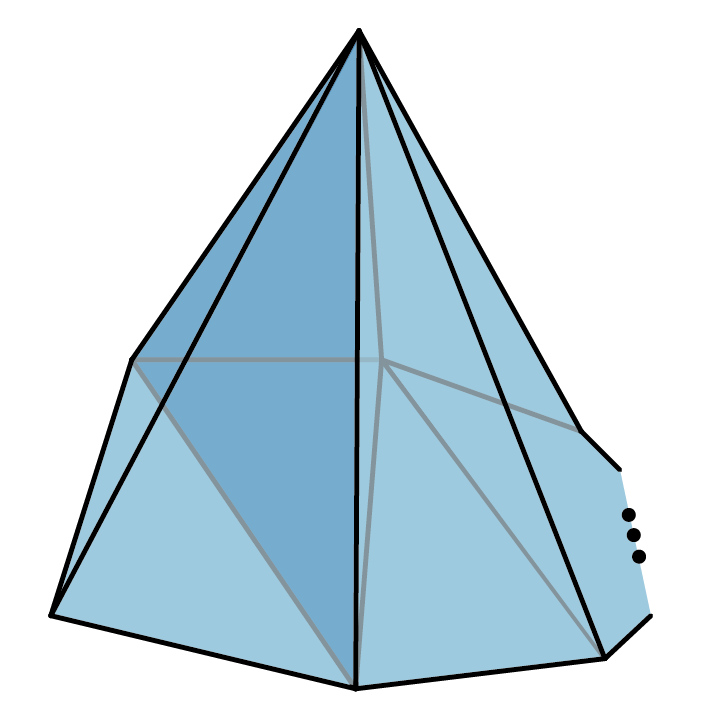}
  \caption{}
  \label{fig:cycle-V}
\end{subfigure}
\caption{{
(a) $Z$ is obtained from the $N$-circle by attaching two edges. (b) $W$ is obtained from a classical $N$-disk by attaching a triangle. (c) $V$ is obtained from $W$ by attaching an $N$-cycle space.
}
}
\label{fig:subspaces-CX}
\end{figure}

\begin{pro}\label{pro: CiBellll}
A distribution $p\in \sDist(\tilde C_N)$ is non-contextual if and only if $p|_{C_N}$ satisfies the $N$-{circle inequalities}.  
\end{pro}
\begin{proof}
Forward direction is proved in Lemma \ref{lem:OneDiric}. For the converse, we will need the following simplicial {sets}:
\begin{itemize}
\item  $Z\subset \tilde{C}_N$ denotes the $1$-dimensional simplicial subset obtained by gluing two edges to $C_N$ as depicted in   Fig.~(\ref{fig:ZPic}).
\item $W$ is obtained by gluing a triangle $\Delta^2$ to $D_N$ as in  Fig.~(\ref{fig:WPic}).
\item  $V$ is the simplicial set obtained by gluing $\tilde{C}_N$ and $W$ along $Z$ as in Fig.~(\ref{fig:cycle-V}):
\begin{equation}\label{eq:V}
V = \tilde{C}_N\cup_Z W.
\end{equation}
\end{itemize}
Let $p$ be a simplicial distribution on $\tilde C_N$ such that $p|_{C_N}$ satisfies the $N$-{circle inequalities}. By Proposition \ref{pro:n-cycle-ineq-FM} we conclude that the restriction of $p$ on the other two {{circle}s} on $Z$ also satisfies the {circle inequalities}. Therefore
Corollary \ref{cor:generalized-extension} implies that $p|_{Z}$ extends to a simplicial distribution $q$ on $W$. Since $p$ and $q$ match on $Z$ there is a simplicial distribution $P$ on $V$ such that $P|_{\tilde C_N}=p$ and $P|_{W}=q$.
Now, consider the following decomposition given in Fig.~(\ref{fig:LMRPic}):
$$
V = L \cup_{M} R
$$ 
where $L=\partial \Delta^3$ is the boundary of the tetrahedron, $M=\Delta^2$ a triangle.  
Let $\tilde{R}$ denote  the {simplicial subset} of $R$ obtained by omitting the bottom triangles. {Note that $\tilde R$ is an $(N-1)$-cycle scenario.} 
Let $Q$ be the restriction of $P$ on $L \cup_M \tilde{R}$. 
The simplicial distribution $Q$ is non-contextual if and only if $Q|_L$ and $Q|_{\tilde{R}}$ are both non-contextual by (Gluing) Lemma  \ref{lem:gluing}.
By {\cite[Proposition 4.12]{okay2022simplicial}} every simplicial distribution on $\partial\Delta^3$ is non-contextual. 
Therefore it suffices to have $Q|_{\tilde{R}}$  non-contextual to guarantee that $Q$ is non-contextual. In fact, $Q|_{\tilde{R}}$ is the restriction {of} $P|_{R}$ {on $\tilde R$}, thus by Proposition \ref{pro:n-cycle-ineq-FM} we conclude that $Q$ satisfies the {circle inequalities} on the lower {circle} of $\tilde{R}$. By induction on $N$, we conclude that $Q|_{\tilde{R}}$ is non-contextual. Finally, by {{{Corollary \ref{cor:ExtLem}}}}
this implies that $p=Q|_{\tilde{C}_N}$ is non-contextual.   

\end{proof}

\begin{figure}
\centering
  \includegraphics[width=.35\linewidth]{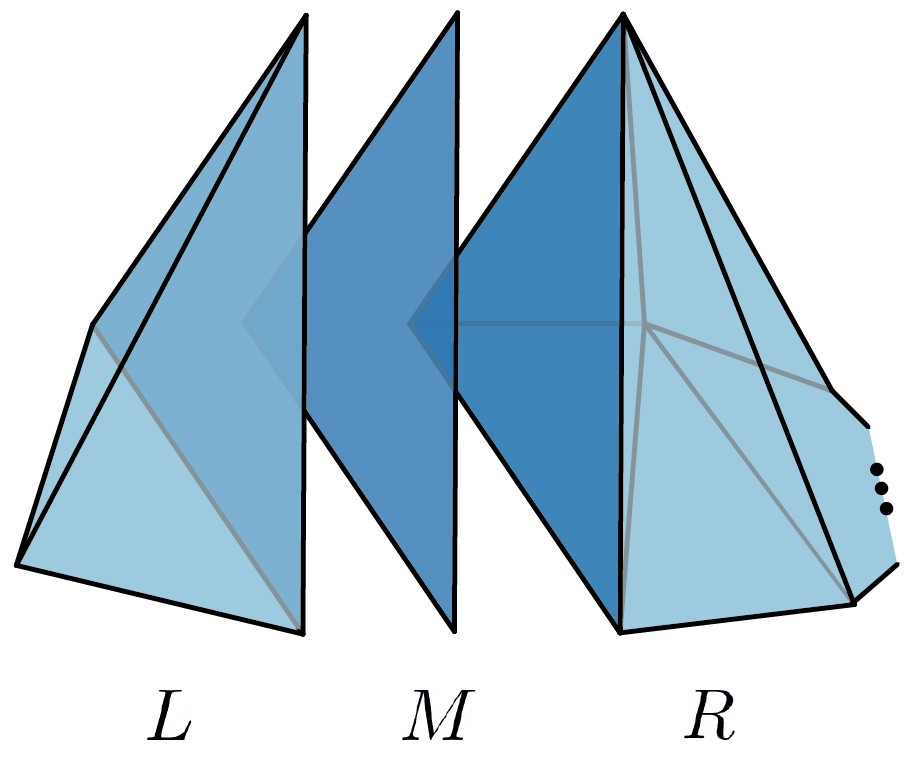}
\caption{{
The space $V$ partitioned into three {simplicial subsets} $L$, $M$ and $R$.
}
}
\label{fig:LMRPic}
\end{figure}

Combining this result with Proposition \ref{pro:n-cycle-ineq-FM} gives the following result which will be used in the generalization of Proposition \ref{pro: CiBellll} to the flower scenario.

\begin{cor}\label{col:CycleExt} 
A distribution $p$ on $\tilde C_N$ 
is non-contextual if and only if it extends to a distribution on $\tilde C_N \cup_{C_N} D_N$.  
\end{cor}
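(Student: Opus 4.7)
The plan is to observe that this corollary is an essentially immediate consequence of combining Proposition \ref{pro: CiBellll} (which characterizes non-contextuality on $\tilde C_N$ via the $N$-circle inequalities on the boundary) and Proposition \ref{pro:n-cycle-ineq-FM} (which characterizes extendability from $\partial D_N = C_N$ to $D_N$ by the same $N$-circle inequalities). The common ``bridge'' between both statements is the condition that $p|_{C_N}$ satisfies the $N$-circle inequalities.

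For the forward direction, I would start by assuming $p \in \sDist(\tilde C_N)$ is non-contextual. By Proposition \ref{pro: CiBellll}, its restriction $p|_{C_N}$ satisfies the $N$-circle inequalities. Since $C_N$ is precisely the boundary $\partial D_N$ of the classical $N$-disk, Proposition \ref{pro:n-cycle-ineq-FM} yields an extension $\tilde p \in \sDist(D_N)$ of $p|_{C_N}$. The two distributions $p$ on $\tilde C_N$ and $\tilde p$ on $D_N$ agree on the common subcomplex $C_N$ by construction, so by the universal property of the pushout they glue to give a well-defined simplicial distribution $P$ on $\tilde C_N \cup_{C_N} D_N$ restricting back to $p$ on $\tilde C_N$.

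For the backward direction, suppose $p$ extends to some $P \in \sDist(\tilde C_N \cup_{C_N} D_N)$. The restriction $P|_{D_N}$ is a simplicial distribution on $D_N$ whose restriction to $\partial D_N = C_N$ coincides with $p|_{C_N}$. By Proposition \ref{pro:n-cycle-ineq-FM}, the existence of such an extension forces $p|_{C_N}$ to satisfy the $N$-circle inequalities. Invoking Proposition \ref{pro: CiBellll} in the reverse direction then gives that $p$ itself is non-contextual.

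There is no real obstacle here: the only thing to be careful about is the gluing step in the forward direction, ensuring that the pushout description $\tilde C_N \cup_{C_N} D_N$ really does produce a simplicial distribution from two pieces that agree on $C_N$. This is standard and follows from the fact that a simplicial set map into $D_R(N\ZZ_2)$ is determined simplex-by-simplex, so agreement on the intersection is exactly the compatibility needed. Everything else is a direct invocation of the two propositions already established.
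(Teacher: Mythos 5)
Your proposal is correct and is exactly the argument the paper intends: the corollary is stated as an immediate combination of Proposition \ref{pro: CiBellll} and Proposition \ref{pro:n-cycle-ineq-FM}, with the $N$-circle inequalities on $p|_{C_N}$ serving as the bridge in both directions, and the gluing step you flag is the same standard matching-on-the-intersection argument the paper uses elsewhere (e.g., in the proof of Proposition \ref{pro: CiBellll}). No gaps.
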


Taking $N=3$ in Corollary \ref{col:CycleExt} gives us a sufficient and necessary condition when a distribution on the boundary of a triangle can be extended to a distribution on the triangle. Thus we obtain a useful result that characterizes the image of the map 
$f:\sDist(X)\to \sDist(X^{(1)})$
introduced in (\ref{eq:map-f}). {(We remark that the following result still holds when the restriction that $\partial\sigma_i$ does not contain non-degenerate edges is removed {in Proposition \ref{pro:fConvex}}.)}

\Cor{\label{cor:image-f}
Let $X$ be a $2$-dimensional simplicial set. 
A distribution $q\in \sDist(X^{(1)})$ is in the image of the map $f$ in (\ref{eq:map-f}) if and only if $(q^0_{d_0\sigma},
q^0_{d_1\sigma},q^0_{d_2\sigma})$ satisfies the $3$-{circle inequality} for all $\sigma\in X_2$.
}

To generalize Proposition \ref{pro: CiBellll} to the flower scenario we need a stronger version of the Gluing lemma.

\begin{lem}\label{lem:GluLem}
Let $X=\cup_{i=1}^m A_i$ such that $A_i \cap A_j=\Delta^n$ for every $i \neq j$. Then $p \in
\sDist(X)$ is non-contextual if and only if $p|_{A_i}$ is non-contextual for every $1\leq i \leq m$.   
\end{lem}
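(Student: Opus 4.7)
The plan is to prove this by induction on $m$, using the original Gluing Lemma (Lemma \ref{lem:gluing}) as the base case and as the step for peeling off one piece at a time.

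The forward direction is immediate: if $p$ is non-contextual on $X$, then each restriction $p|_{A_i}$ is non-contextual by Corollary \ref{cor:ExtLem} applied to the inclusion $A_i \hookrightarrow X$.

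For the converse, I would induct on $m$. The base case $m=2$ is exactly Lemma \ref{lem:gluing}. For the inductive step, assume the statement holds for any decomposition into $m-1$ such pieces, and set $X' = \bigcup_{i=1}^{m-1} A_i$. The key set-theoretic observation is that
\begin{equation*}
X' \cap A_m \;=\; \Bigl(\bigcup_{i=1}^{m-1} A_i\Bigr) \cap A_m \;=\; \bigcup_{i=1}^{m-1}(A_i \cap A_m) \;=\; \Delta^n,
\end{equation*}
since by hypothesis every pairwise intersection equals the same simplicial subset $\Delta^n \subset X$. Therefore $X = X' \cup A_m$ with $X' \cap A_m = \Delta^n$, and Lemma \ref{lem:gluing} applies to this two-piece decomposition: $p$ is non-contextual on $X$ if and only if $p|_{X'}$ and $p|_{A_m}$ are both non-contextual. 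Next, the subsets $A_1,\dots,A_{m-1}$ of $X'$ still satisfy the pairwise condition $A_i \cap A_j = \Delta^n$, so the inductive hypothesis gives that $p|_{X'}$ is non-contextual if and only if each $p|_{A_i}$ is non-contextual for $i \leq m-1$. Combining these two equivalences yields the desired characterization.

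The only delicate point, and the main thing to make explicit, is the verification that all pairwise intersections being equal to a \emph{common} $\Delta^n \subset X$ forces $X' \cap A_m$ to collapse back to the same $\Delta^n$; this is exactly what makes the induction go through and prevents the intersection from growing as we accumulate the $A_i$'s. Aside from that, the argument is purely bookkeeping on top of the already-established two-piece Gluing Lemma.
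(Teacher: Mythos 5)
Your argument is correct and is exactly the paper's intended proof: the paper simply states that the lemma ``follows by induction and Lemma \ref{lem:gluing},'' and your write-up supplies the details, including the one genuinely necessary observation that all pairwise intersections being the \emph{same} simplicial subset $\Delta^n\subset X$ forces $X'\cap A_m=\Delta^n$. Nothing to add.
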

\begin{proof}
Follows by induction and Lemma \ref{lem:gluing}.
\end{proof}
%

The flower scenario is obtained by gluing lines at their end points.
Let $L_N$ denote the simplicial set consisting of the generating $1$-simplices $\tau_1,\cdots,\tau_N$ together with the identifying relations
$$
d_{i'_1}\tau_1 = d_{i_2}\tau_2,\; d_{i_2'}\tau_2= d_{i_3}\tau_3\;\cdots\; d_{i'_{N-1}} \tau_{N-1} = d_{i_N} \tau_N 
$$
where $i_k\neq i_k' \in \set{0,1}$. Topologically this simplicial set represents a line of length $N$.
%

\Def{\label{def:flower}
Let {$X(N_1,\cdots,N_k)$} denote the simplicial set obtained by gluing the lines {$L_{N_1},\cdots,L_{N_k}$} at their boundary, i.e., the two terminal points $v$ and $w$; see Fig.~(\ref{fig:X}).
We will call the cone of {$X(N_1,\cdots,N_k)$} a {\it flower scenario}.
}

\begin{figure}
\centering
\begin{subfigure}{.33\textwidth}
  \centering
  \includegraphics[width=.9\linewidth]{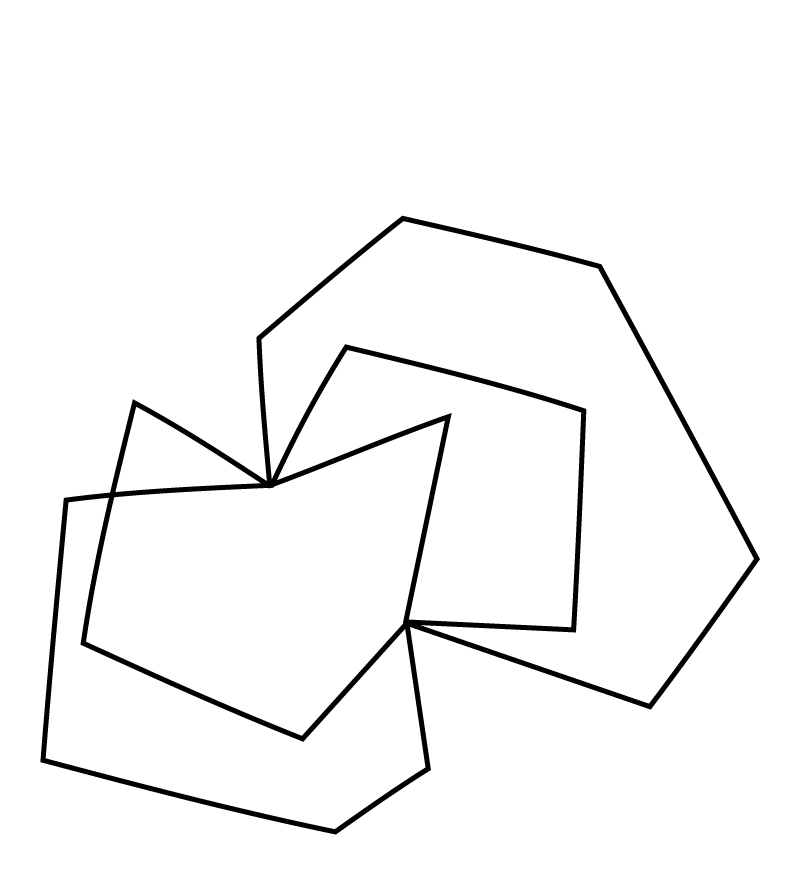}
  \caption{}
  \label{fig:X}
\end{subfigure}%
\begin{subfigure}{.33\textwidth}
  \centering
  \includegraphics[width=.9\linewidth]{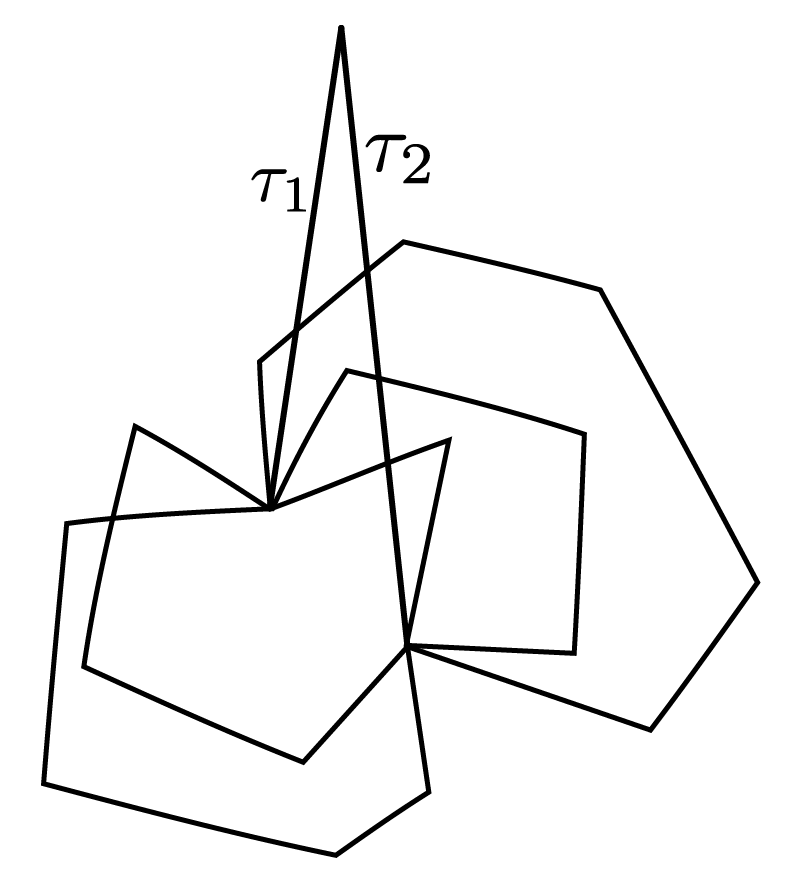}
  \caption{}
  \label{fig:Z}
\end{subfigure}
\begin{subfigure}{.33\textwidth}
  \centering
  \includegraphics[width=.9\linewidth]{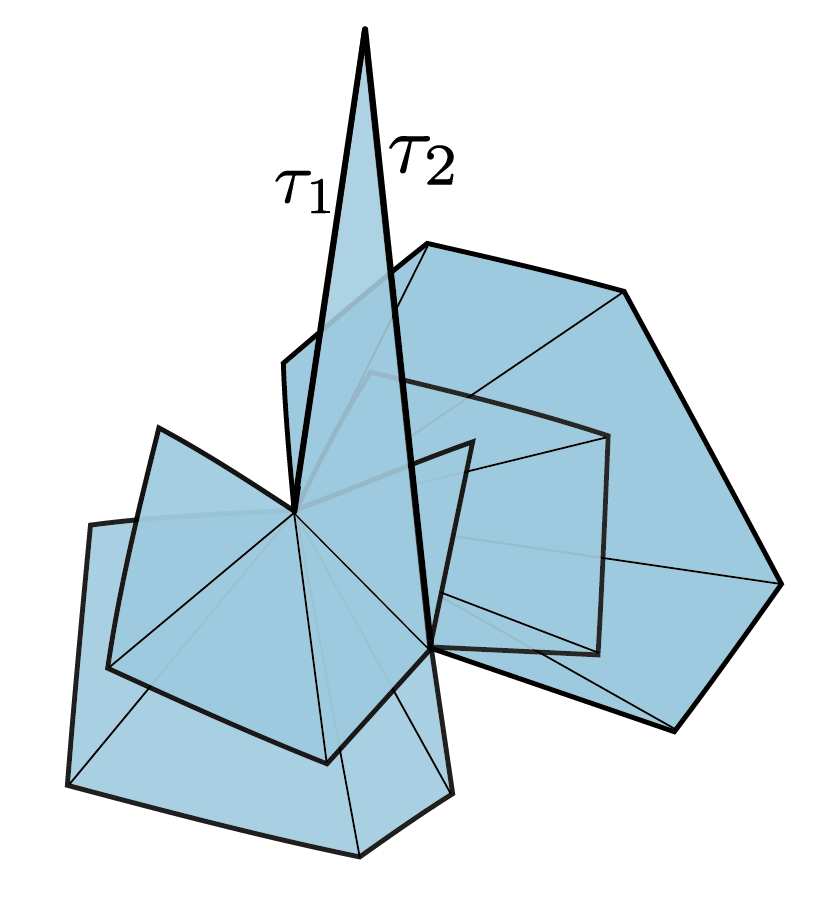}
  \caption{}
  \label{fig:W}
\end{subfigure}
\caption{(a) $X$ is obtained by gluing lines at their end points. (b)   $Z$ comes with an additional piece consisting of two edges. (c)  $W$ is obtained by filling the {circle}s in $Z$ with classical disks.
}
\label{fig:subspaces-CX-flower}
\end{figure}

\begin{thm}\label{thm:bouquet-of-lines} 
{Let $\Cone(X)$ denote the flower scenario where $X=X(N_1,\cdots,N_k)$}.  
A distribution $p\in \sDist(\Cone(X))$ is non-contextual if and only if for every {{circle}} $C_N$ on $X$ the restriction $p|_{C_N}$ satisfies the $N$-{circle inequalities}.
\end{thm}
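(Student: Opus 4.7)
The forward direction is a straightforward restriction argument. For any $N$-circle $C_N$ on $X$, the inclusion $\tilde{C}_N = \Cone(C_N) \hookrightarrow \Cone(X)$ and Corollary~\ref{cor:ExtLem} imply that $p|_{\tilde{C}_N}$ is non-contextual, whence Lemma~\ref{lem:OneDiric} shows that $p|_{C_N}$ satisfies the $N$-circle inequalities.

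For the backward direction, I would proceed by induction on $k$, the number of lines. The base case $k = 2$ is immediate from Proposition~\ref{pro: CiBellll}, because $X(N_1, N_2) = C_{N_1+N_2}$ and $\Cone(X)$ is the $(N_1+N_2)$-cycle scenario. For the inductive step, I would follow the template of the proof of Proposition~\ref{pro: CiBellll} by enlarging $\Cone(X)$ with auxiliary disks and then decomposing. Let $Z \subset \Cone(X)$ be the $1$-dimensional simplicial subset consisting of $X$ together with the two cone edges $(c,v)$ and $(c,w)$ (Fig.~(\ref{fig:Z})). Its simple circles are of two kinds: (i) the $\binom{k}{2}$ circles $C_{N_i+N_j}$ lying inside $X$, which satisfy the $(N_i+N_j)$-circle inequalities by hypothesis; and (ii) the $k$ circles $C_{N_i+2}$ formed by $L_{N_i}$ together with $(c,v)$ and $(c,w)$, each of which bounds the classical disk $\Cone(L_{N_i}) \cong D_{N_i+2}$. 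For type~(ii), the inequalities are automatic because $p$ is already defined on $\Cone(L_{N_i})$, so Proposition~\ref{pro:n-cycle-ineq-FM} applies. Corollary~\ref{cor:generalized-extension} then extends $p|_Z$ to a distribution $q$ on the 2-dimensional space $W$ obtained from $Z$ by filling its circles with classical disks (Fig.~(\ref{fig:W})), and assembling $p$ and $q$ yields a joint distribution $P$ on $V = \Cone(X) \cup_Z W$ satisfying $P|_{\Cone(X)} = p$.

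The final step is to establish that $P$ is non-contextual, from which Corollary~\ref{cor:ExtLem} delivers non-contextuality of $p$. I would peel off the petals of $V$ one at a time, writing $V = A \cup B$ at each stage with $A$ containing $\Cone(L_{N_i})$ together with the auxiliary disks of $W$ adjacent to it, and $B$ containing the cone of the smaller flower $X(N_1, \ldots, \widehat{N_i}, \ldots, N_k)$ plus the remaining auxiliary disks. The piece $A$ is a sphere-like structure of the type treated in Corollary~\ref{col:CycleExt} and the argument of \cite[Proposition 4.12]{okay2022simplicial}, while $P|_B$ is non-contextual by the inductive hypothesis. The main obstacle is that in the naive petal decomposition adjacent petals share both cone edges $(c,v)$ and $(c,w)$, so $A \cap B$ is not a single simplex $\Delta^n$ and neither the basic Gluing Lemma~\ref{lem:gluing} nor Lemma~\ref{lem:GluLem} applies verbatim. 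Resolving this requires a careful choice of $W$ and the order of peeling so that at every step $A \cap B$ is a single simplex, at which point the Gluing Lemma applies and the induction closes.
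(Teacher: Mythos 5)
Your first half is exactly the paper's: the forward direction via Lemma \ref{lem:OneDiric}, the subset $Z=X\cup\{(c,v),(c,w)\}$, the observation that the $k$ circles of type (ii) automatically satisfy their circle inequalities because they bound the classical disks $\Cone(L_{N_i})$ already contained in $\Cone(X)$, and the extension of $p|_Z$ to $W$ via Corollary \ref{cor:generalized-extension}, yielding $P$ on $V=\Cone(X)\cup_Z W$ with $P|_{\Cone(X)}=p$. But the final step is where the content of the proof lies, and you have left it unresolved: you correctly diagnose that distinct petals $\Cone(L_{N_i})$ and $\Cone(L_{N_j})$ intersect in the two cone edges $(c,v)$ and $(c,w)$, which is not a single simplex, so neither Lemma \ref{lem:gluing} nor Lemma \ref{lem:GluLem} applies to the naive decomposition --- and you then defer the fix to ``a careful choice of $W$ and the order of peeling'' without supplying it. That fix is the key idea of the proof, so as written the argument has a genuine gap.

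The paper's resolution: $W$ is a bouquet of classical disks glued along a new edge $e$ from $v$ to $w$, and hence contains the triangle $M$ with vertices $c,v,w$ and faces $(c,v)$, $(c,w)$, $e$. Setting $\tilde V_i=\Cone(L_{N_i})\cup M$, the two problematic shared edges both become faces of $M$, so $\tilde V_i\cap\tilde V_j=M\cong\Delta^2$ for all $i\neq j$. Moreover $\tilde V_i$ is exactly the $(N_i+1)$-cycle scenario $\Cone\bigl(L_{N_i}\cup\{e\}\bigr)$, and the classical disk $D_i\subset W$ with boundary $\partial\tilde V_i$ exhibits $P|_{\tilde V_i}$ as extending over $V_i=\tilde V_i\cup D_i$, so Corollary \ref{col:CycleExt} gives non-contextuality of each $P|_{\tilde V_i}$. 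Lemma \ref{lem:GluLem} then applies in one shot to $\tilde V_1\cup_M\cdots\cup_M\tilde V_k\supset\Cone(X)$ --- no induction on $k$ and no peeling order is needed --- and Corollary \ref{cor:ExtLem} finishes. Note also that your induction has a second wrinkle: the inductive hypothesis gives non-contextuality of $P$ restricted to the cone of the smaller flower, not of $P|_B$ where $B$ additionally contains the leftover auxiliary disks, so even with a correct decomposition you would need a further gluing argument there; the paper's simultaneous use of Lemma \ref{lem:GluLem} avoids this entirely.
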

\begin{proof}
Forward direction follows from Lemma \ref{lem:OneDiric}.
For the converse, we introduce the following simplicial sets:
\begin{itemize}
\item $Z\subset \Cone(X)$ denotes the $1$-dimensional simplicial set obtained by gluing two edges $\tau_1$ and $\tau_2$ to $X$ as depicted in Fig.~(\ref{fig:Z}).
\item $W$ is obtained by filling in the {circle}s in $Z$ by classical disks as in Fig.~(\ref{fig:W}).
\item Gluing $\Cone(X)$ with $W$ along the intersection $Z$ we obtain
$$
V = \Cone(X)\cup_Z W.
$$
\end{itemize}
Let $p$ be a simplicial distribution on $\Cone(X)$ satisfying the {circle inequalities} for every {circle} in $X$.
Moreover, by Proposition \ref{pro:n-cycle-ineq-FM} the distribution $p$ also satisfies the {circle inequalities} for the remaining {circle}s in the larger space $Z$ since on these {circle}s the distribution extends to classical disks {contained in $\Cone(X)$}.
Then Corollary \ref{cor:generalized-extension}  implies that $p|_Z$ extends to a simplicial distribution $q$ on $W$. 
The two distributions $p$ and $q$ give a distribution $P$ on $V$.
Now, we define the following simplicial subsets of $V$:
\begin{itemize}
\item $M$ 
{denotes the 
triangle in Fig.~(\ref{fig:W}) with two of the edges given by $\tau_1$ and $\tau_2$. }

\item {$\tilde V_i$ is obtained by gluing $\Cone(L_{N_i})$ and $M$ along $\tau_1$ and $\tau_2$.} 

\item {$ V_i$ is obtained by gluing  $\tilde V_i$ and the classical disk contained in $W$ whose boundary coincides with {$\partial \tilde V_i$}; see Fig.~(\ref{fig:Vi}).}  
 
\end{itemize}
Note that $M=\tilde V_i \cap \tilde V_j= V_i\cap V_j$ for distinct $i,j$. In addition, we obtain another decomposition of $V$ as given in Fig.~(\ref{fig:Vi}):
$$
V =  V_1 \cup_M  V_1 \cup_M \cdots \cup_M  V_k.
$$
By Corollary \ref{col:CycleExt} the simplicial distribution $P|_{\tilde V_i}$ is non-contextual since it is the restriction of  $P|_{ V_i}$.
Therefore by Lemma \ref{lem:GluLem} the distribution $P|_{\tilde  V_1 \cup_M  \tilde V_2 \cup_M \cdots \cup_M \tilde V_k.}$ is non-contextual, so by {{{Corollary \ref{cor:ExtLem}}}} 
the restriction $p=P|_{\Cone(X)}$ is non-contextual.
\end{proof}

\begin{figure} 
  \centering
  \includegraphics[width=.8\linewidth]{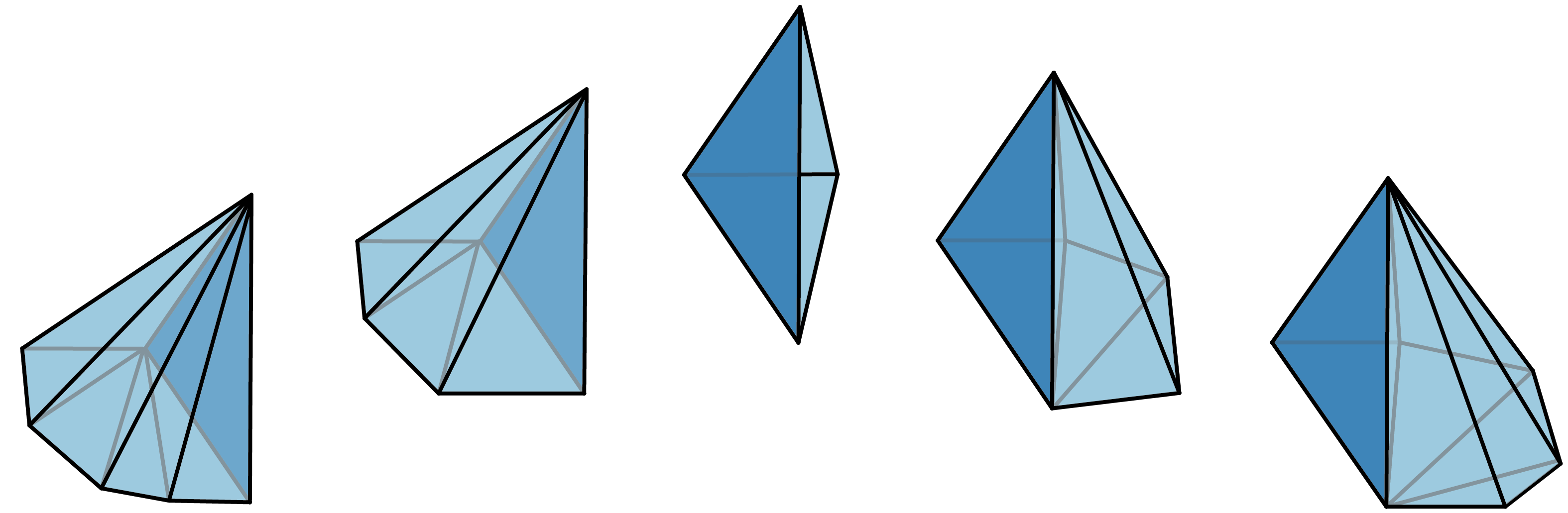}
\caption{$ V_i$ for $i=1,\cdots,5$
}
\label{fig:Vi}
\end{figure}

\section{Collapsing measurement spaces}
\label{sec:collapsing-measurement-spaces}

In this section we   study the effect of collapsing {simplices} in the measurement space. 
{This method {is} very effective in describing the vertices of the polytope of simplicial distributions.
}

Let us begin with the simplest case of collapsing a single edge to a point. 
Recall that $\Delta^1$ is the simplicial set representing an edge. It has a single generating simplex in dimension $1$ denoted by $\sigma^{01}$. A point is represented by the simplicial set $\Delta^0$. Its $n$-simplices are given by $c_n$
obtained by applying the $s_0$ degeneracy map  $n$-times: $s_0\cdots s_0(\sigma^0)=\sigma^{0\cdots 0}$.
Collapsing an edge to a point can be represented by a simplicial set map 
$$
\pi:\Delta^1 \to \Delta^0
$$
that sends
the generating simplex $\sigma^{01}$ to the degenerate simplex $\sigma^{00}$.
Now, applying the cone construction to this map we obtain a simplicial set map
$$\Cone\pi:\Cone(\Delta^1) \to \Cone(\Delta^0)$$
{Recall from Fig.~(\ref{fig:cone-delta1})} that $\Cone(\Delta^1)$ can be identified with a triangle whose generating simplex is given by $(c,\sigma^{01})$.
A similar topological intuition works for $\Cone(\Delta^0)$. It represents an edge whose generating simplex is $(c,\sigma^0)$. From this we can work out the map $\Cone\pi$ as follows: The generating simplex $(c,\sigma^{01})$ is mapped to $(c,s_0\sigma^0)$ since $\pi$ sends $\sigma^{01}$ to the degenerate simplex $s_0\sigma^0$. By the simplicial structure of the cone described in Definition \ref{def:cone} we have
\begin{equation}\label{eq:s1}
(c,s_0\sigma^0) = s_1 (c,\sigma_0).
\end{equation}
As we have seen in Section \ref{sec:gluing-extending} the map $\Cone\pi$ between the cone spaces induces a map between the associated simplicial distributions
$$
(\Cone\pi)^*:\sDist(\Cone(\Delta^0))\to \sDist(\Cone(\Delta^1))
$$
A simplicial distribution $p\in \sDist(\Cone(\Delta^0))$ is determined by $p_{(c,\sigma^0)}\in D_{R}(\ZZ_2)$. 
Let $q$ denote the image of $p$,  i.e., $q=(\Cone\pi)^*(p)$.
Then $q$ will be determined by $q_{(c,\sigma^{01})}$, a distribution on $\ZZ_2^2$.
It is given as follows; see Figure (\ref{CollDistPic}):
\begin{equation}\label{eq:collDist}
\begin{aligned}
q_{(c,\sigma^{01})}^{ab} &= p_{\Cone{\pi}(c,\sigma^{01})}^{ab} \\
&=p_{(c,s_0\sigma^0)}^{ab}\\
& = p_{s_1(c,\sigma^0)}^{ab} \\
& = \left(D_{R}(s_1)(p_{(c,\sigma^0)})\right)^{ab}\\
& = \left\lbrace
\begin{array}{ll}
p^0_{(c,\sigma^0)} & (a,b)=(0,0) \\
1-p^0_{(c,\sigma^0)} & (a,b)=(1,0) \\
0                    & \text{otherwise,}
\end{array}
\right.
\end{aligned}
\end{equation}
{where in the first line we use $q=p\circ \Cone \pi$, in the second line the definition of $\Cone\pi$, in the third line Eq.~(\ref{eq:s1}), {in the fourth line compatibility of $p$ with the simplicial structure, and in the fifth line} 
the definition of $D_{R}(s_1)$.}
We will refer to this distribution as a \emph{collapsed distribution} on the triangle.

\begin{figure}[h!]
\centering
  \includegraphics[width=.25\linewidth]{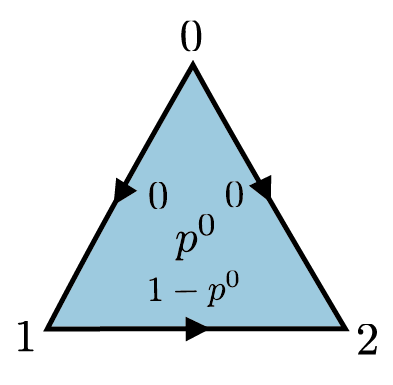}
\caption{{Collapsed distribution on the triangle.}
}
\label{CollDistPic}
\end{figure}

Next we consider the general case. Let $X$ be a $1$-dimensional simplicial set and $\sigma$ denote a non-degenerate $1$-simplex such that
\begin{equation}\label{eq:d0-neq-d1}
d_0(\sigma) \neq d_1(\sigma). 
\end{equation} 
We will write $X/\sigma$ for the simplicial set obtained by collapsing\footnote{The simplicial set $X/\sigma$ is the quotient of $X$ by the simplicial subset generated by $\sigma$ in the sense of \cite[Section 5.3]{okay2022simplicial}.} this edge. 
More formally, $X/\sigma$ consists of the same generating simplices as $X$ except $\sigma$ and the simplicial relations are inherited from $X$. We will write $\pi:X\to X/\sigma$ for the collapsing map as before.
We also have $\Cone\pi:\Cone(X)\to \Cone(X/\sigma)$ {that collapses the triangle obtained as the cone of $\sigma$}.
In Fig.~(\ref{fig:PicA0}) we represent the collapsing map $\pi:C_4\to C_3$ between two {circle} scenarios. 
%
%
%
\begin{lem}\label{Lem:ColDis}
For a collapsing map $\pi:X \to X/\sigma$, the following properties hold.
\begin{enumerate}
\item The map 
$$(\Cone\pi)^*: \sDist(\Cone(X/\sigma))\to \sDist(\Cone(X))$$ 
is injective. 
Moreover, a distribution $q\in \sDist(\Cone(X))$ lies in the image of $(\Cone\pi)^*$ if and only if $q_{(c,\sigma)}$ is a collapsed distribution.  
\item The map 
$$(\Cone\pi)^*:\dDist(\Cone(X/\sigma)) \to \dDist(\Cone(X))$$ 
is injective. Moreover, a deterministic distribution $\delta^t \in\dDist(\Cone(X))$ lies in the image of  $(\Cone \pi)^\ast$ if and only if $t_{(c,\sigma)}\in \{(0,0),(1,0)\}$.
\end{enumerate}  
\end{lem}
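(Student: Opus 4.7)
The plan is to exploit the fact that $\Cone\pi: \Cone(X) \to \Cone(X/\sigma)$ is a quotient map of simplicial sets, surjective on $n$-simplices for every $n$, so the pullback $(\Cone\pi)^*$ is injective and its image is precisely the distributions that descend through the quotient. Injectivity in both (1) and (2) follows immediately: if two distributions on $\Cone(X/\sigma)$ pull back to the same distribution on $\Cone(X)$, they agree at every simplex of $\Cone(X/\sigma)$ by evaluating at a preimage under $\Cone\pi$.

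For the image characterization in part (1), the forward direction is the same calculation as Eq.~(\ref{eq:collDist}) generalized: using Eq.~(\ref{eq:s1}) together with compatibility of $p$ with the degeneracy $s_1$, one gets $q_{(c,\sigma)} = D_R(s_1)(p_{(c,v)})$, which is supported on $\{(0,0),(1,0)\}$ and is thus a collapsed distribution. For the converse I would define $p$ on $\Cone(X/\sigma)$ by taking $p_\tau := q_{\tilde\tau}$ for any preimage $\tilde\tau$ under $\Cone\pi$, with the critical case being the new edge $(c,v)$ arising from identifying $d_0\sigma$ with $d_1\sigma$, where I would set $p_{(c,v)}$ equal to the common value of $q_{(c,d_0\sigma)}$ and $q_{(c,d_1\sigma)}$. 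The key computation is verifying this equality: writing $q_{(c,d_0\sigma)} = D_R(d_1)(q_{(c,\sigma)})$ and $q_{(c,d_1\sigma)} = D_R(d_2)(q_{(c,\sigma)})$ and using that $q_{(c,\sigma)}$ is supported on $\{(0,0),(1,0)\}$, both marginals collapse to $(q^{00}_{(c,\sigma)}, q^{10}_{(c,\sigma)})$. One simultaneously obtains $q_\sigma = \delta^0 \in D_R(\ZZ_2)$, which is consistent with $\pi(\sigma) = s_0(v)$ being a degenerate edge.

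Part (2) can be read off directly from the factoring picture. An outcome assignment $t: \Cone(X) \to N\ZZ_2$ factors as $t = t' \circ \Cone\pi$ if and only if $t_{(c,\sigma)} = t'_{s_1(c,v)} = s_1(t'_{(c,v)})$. Since $s_1(a) = (a,0)$ for $a \in \ZZ_2$, this amounts to $t_{(c,\sigma)} \in \{(0,0),(1,0)\}$, and the remaining simplices of $\Cone(X/\sigma)$ inherit $t'$-values from the non-collapsed simplices of $\Cone(X)$ without obstruction (the faces $d_0\sigma$ and $d_1\sigma$ get identified, which is forced exactly by the compatibility conditions $t_\sigma = 0$ and $t_{(c,d_0\sigma)} = t_{(c,d_1\sigma)}$ that follow from $t_{(c,\sigma)} \in \{(0,0),(1,0)\}$).

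The main obstacle is verifying that the $p$ constructed in the converse direction of (1) is a bona fide simplicial set map, i.e., respects all face and degeneracy relations. Identities away from the collapsed edge transfer directly from those of $q$; the identities connecting $p_{(c,v)}$ and $p_{v}$ to the rest of the structure are precisely what the marginal-matching calculation above establishes. Once that is done, $(\Cone\pi)^*(p) = q$ holds by construction.
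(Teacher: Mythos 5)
Your proof is correct and follows essentially the same route as the paper: injectivity from surjectivity of $(\Cone\pi)_n$ on simplices, and the forward direction of the image characterization via the computation in Eq.~(\ref{eq:collDist}). Your treatment is in fact slightly more complete than the paper's, since you explicitly construct the preimage for the converse direction of part (1) and verify well-definedness at the collapsed edge --- the marginal computation showing $q_{(c,d_0\sigma)}=q_{(c,d_1\sigma)}$ and $q_{\sigma}=\delta^0$ when $q_{(c,\sigma)}$ is supported on $\{(0,0),(1,0)\}$ --- a step the paper's proof leaves implicit.
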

\begin{proof} 
The surjectivity of $(\Cone \pi)_n$ for every $n\geq 0$ implies the injectivity of $(\Cone \pi)^\ast$ {in both cases}.  
For  $p \in \sDist(\Cone(X))$ the definition of the collapsing map implies that for every generating $1$-simplex $\tau \neq \sigma$ in $X$ we have $(\Cone{\pi})^\ast(p)_{(c,\tau)}
=p_{(c,\tau)}$. Using Eq.~(\ref{eq:collDist}) we obtain that $(C\pi)^\ast(p)_{(c,\sigma)}$ is a collapsed distribution.
By part (1), a deterministic distribution $\delta^t \in\dDist(\Cone(X))$ lies in the image of  $(\Cone \pi)^\ast$ if and only if 
$\delta^{t_{(c,\sigma)}}$ is a collapsed distribution. This is equivalent to  $t_{(c,\sigma)}\in \{(0,0),(1,0)\}$.
\end{proof}
%
%
%
%
%

\begin{thm}\label{thm:collapsing}
Let $X$ be a $1$-dimensional simplicial set, and $\pi: X \to X/\sigma$ denote a collapsing
map. 
For $p \in \sDist(\Cone(X/\sigma))$ {and $q=(\Cone\pi)^*(p)$}, the following holds. 
\begin{enumerate}
\item $p$ is contextual if and only if $q$ is contextual.
\item $p$ is strongly contextual  if and only if $q$ is strongly contextual.
\item $p$ is a vertex if and only if $q$ is a vertex.
\item $p$ is deterministic distribution if and only if $q$ is deterministic distribution.
\end{enumerate} 
\end{thm}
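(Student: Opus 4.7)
The main tool will be Lemma \ref{Lem:ColDis}, which characterizes the image of $(\Cone\pi)^*$ on both simplicial and deterministic distributions (namely, collapsed distributions at $(c,\sigma)$, respectively $t_{(c,\sigma)}\in\{(0,0),(1,0)\}$), together with injectivity of $(\Cone\pi)^*$ in both cases. The plan for every part is the same template: push the relevant property from $\Cone(X/\sigma)$ to $\Cone(X)$ using functoriality of the pullback, and pull back using the image-characterization plus injectivity.

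Part (4) is essentially immediate from Lemma \ref{Lem:ColDis}(2): if $q=\delta^t$ then $q_{(c,\sigma)}=\delta^{t_{(c,\sigma)}}$ is a collapsed distribution, forcing $t_{(c,\sigma)}\in\{(0,0),(1,0)\}$, and the lemma produces a deterministic preimage. For part (1), one direction (namely $q$ contextual $\Rightarrow$ $p$ contextual) is the contrapositive of Proposition \ref{pro:pro4-1} applied to $f=\Cone\pi$. The other direction is the substantive one: suppose $q=\Theta(e)$ for some $e\in D_R(\dDist(\Cone(X)))$. Evaluating at $(c,\sigma)$ gives $q_{(c,\sigma)}=\sum_s e(s)\,\delta^{s_{(c,\sigma)}}$, and since $q_{(c,\sigma)}$ is supported on $\{(0,0),(1,0)\}$ with every term non-negative, each $s\in\supp(e)$ satisfies $s_{(c,\sigma)}\in\{(0,0),(1,0)\}$. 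Lemma \ref{Lem:ColDis}(2) then places $\supp(e)$ inside the image of $(\Cone\pi)^*$ on deterministic distributions, so $e$ lifts to some $e'\in D_R(\dDist(\Cone(X/\sigma)))$; the commutative diagram (\ref{dia:Theta-f*}) together with injectivity from Lemma \ref{Lem:ColDis}(1) then yields $p=\Theta(e')$.

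For part (2), I would set up a bijection of supports $\supp(p)\leftrightarrow\supp(q)$ via $s\mapsto s\circ\Cone\pi$. The forward map is a direct calculation using $q_{\tilde\tau}=p_{\Cone\pi(\tilde\tau)}$. For the reverse, any $t\in\supp(q)$ has $t_{(c,\sigma)}\in\{(0,0),(1,0)\}$ by the support of the collapsed distribution, so Lemma \ref{Lem:ColDis}(2) writes $t=s\circ\Cone\pi$ for a unique outcome assignment $s$; surjectivity of $\Cone\pi$ on simplices then forces $s\in\supp(p)$. Emptiness transfers, which is exactly part (2). For part (3), the key observation is that the image of $(\Cone\pi)^*$ is a face-like subset of the polytope $\sDist(\Cone(X))$: if $q=\alpha q_1+(1-\alpha)q_2$ with $0<\alpha<1$ and $q_{(c,\sigma)}$ is supported on $\{(0,0),(1,0)\}$, then so are $(q_1)_{(c,\sigma)}$ and $(q_2)_{(c,\sigma)}$ by non-negativity, so both $q_1,q_2$ lie in the image by Lemma \ref{Lem:ColDis}(1). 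Combined with injectivity this yields a bijection between convex decompositions of $p$ and those of $q$, and the vertex property transfers in both directions.

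The main obstacle, and really the single substantive observation that powers parts (1), (2) and (3), is the \emph{support-propagation} step: the constraint $\supp q_{(c,\sigma)}\subseteq\{(0,0),(1,0)\}$ imposed by Lemma \ref{Lem:ColDis}(1) forces the same constraint on every summand of a convex decomposition and on every deterministic distribution appearing in an ensemble realization of $q$. This is what makes the image of $(\Cone\pi)^*$ behave like a face rather than a mere affine slice, and once it is isolated the four statements follow uniformly from the pullback-via-injectivity template.
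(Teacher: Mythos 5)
Your proposal is correct and follows essentially the same route as the paper: everything is driven by Lemma \ref{Lem:ColDis} together with the observation that the constraint $\supp(q_{(c,\sigma)})\subseteq\{(0,0),(1,0)\}$ propagates to every convex summand and to every deterministic distribution in an ensemble realization of $q$, which is exactly the mechanism the paper's proof uses for the substantive directions of parts (1) and (3). The only divergence is that where the paper outsources the easy directions of parts (2)--(4) to citations of \cite{kharoof2022simplicial}, you supply short direct arguments (the support bijection $s\mapsto s\circ\Cone\pi$, convex-linearity of the pullback plus injectivity, and the direct lift of a deterministic $q$), which makes the proof slightly more self-contained without changing its structure.
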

\begin{proof}
Part (1): Proposition \ref{pro:pro4-1} implies that if $q$ is contextual then $p$ is contextual.
For the converse assume that $q$ is non-contextual. Then there exists {$d=\sum_{i=1}^n d(s^i)\delta^{s^i}\in
  D_{R}\left(\dDist(\Cone (X))\right)$, where $d(s^i)\neq 0$,} such that for every simplex $\tau$ in $\Cone(X)$ we have 
\begin{equation}\label{eq:CpiPnoncon}
q_{\tau}= \sum_{i=1}^n d(s^i)\delta^{s^i_\tau}.
\end{equation}
By part (1) of Lemma \ref{Lem:ColDis} the distribution $q_{(c, \sigma)}$ is collapsed. By Eq.~(\ref{eq:CpiPnoncon}) we conclude that $s^i_{(c,\sigma)}\in \{(0,0),(1,0)\}$ 
{for every $i=1,\cdots,n$.}
Therefore, by part (2) of  Lemma \ref{Lem:ColDis} we have $\delta^{r^i} \in \dDist(\Cone(X/\sigma))$ such that 
$(\Cone \pi)^\ast(\delta^{r^i})
=\delta^{s^i}$. 
We define $\tilde{d} \in D_{R}\left(\sDist(\Cone(X/\sigma))\right)$ by  $\sum_{i=1}^n d(r^i)\delta^{r^i}$.
Then $D_{R}((\Cone\pi)^\ast)(\tilde{d})=d$. Therefore, using the commutativity of Diagram (\ref{dia:Theta-f*}) for $f=\Cone \pi$, we obtain that
$$
(\Cone\pi)^\ast
(\Theta(\tilde{d}))=
\Theta(D_{R}\left((\Cone \pi)^\ast\right)(\tilde{d}))
=\Theta(d)=(\Cone \pi)^\ast(p)
$$
Since $(\Cone \pi)^\ast$ is injective $\Theta(\tilde{d})=p$, which means that $p$ is non-contextual.

Part (2): If $q$ is strongly contextual then $p$ is strongly contextual by \cite[Lemma 5.19, {part (1)}]{kharoof2022simplicial}.
For 
the converse
assume that $s \in \supp(q)$. Then $q_{(c,\sigma)}\left(s{(c,\sigma)}\right) \neq 0$. By part (1) of Lemma \ref{Lem:ColDis} the distribution $q_{(c,\sigma)}$ is collapsed. We conclude that $s_{(c,\sigma)} \in \{(0,0),(1,0)\}$. By part (2) of Lemma \ref{Lem:ColDis} there exists $\delta^{r} \in \dDist(X/\sigma)$ such that $(\Cone \pi)^\ast(\delta^r)=\delta^s$.
To show that $r \in \supp(p)$ it is enough to prove that for every non-degenerate simplex $\tau \in (X/\sigma)_1$ we have $p_{(c,\tau)}(r_{(c,\tau)})\neq 0$. Note that since $\tau \neq \sigma$ we have
$$
p_{(c,\tau)}=q_{(c,\tau)} \;\;  \text{and} \;\; \delta^r_{(c,\tau)}=(\Cone \pi)^\ast(\delta^r)_{(c,\tau)}
=\delta^s_{(c,\tau)}.
$$
Therefore $p_{(c,\tau)}(r_{(c,\tau)})\neq 0$ since $s \in \supp(q)$.

Part (3):
According to \cite[Corollary 5.16]{kharoof2022simplicial}  every vertex in {the preimage of $q$ under $(\Cone\pi)^*$}
is a vertex in $\sDist(\Cone X/\sigma)$. Because of the injectivity of $(\Cone\pi)^\ast$ 
{this preimage} 
contains just $p$, thus $p$ is a vertex.
For the converse, suppose we have distributions $q_1,q_2 \in \sDist(\Cone X)$ and $0<\alpha<1$ such that 
$$ 
q=
\alpha q^1 + (1-\alpha)q^2 
$$
By part (1) of Lemma \ref{Lem:ColDis} the distribution $q_{(c,\sigma)}$ is a collapsed distribution, thus $q^1_{(c,\sigma)}$ and $q^2_{(c,\sigma)}$ are also collapsed. 
Again, by part (1) of Lemma \ref{Lem:ColDis} there exists $\tilde{q}^1, \tilde{q}^2  \in 
\sDist(\Cone( X/\sigma))$ such that $(\Cone \pi)^\ast(\tilde{q}^i)=q^i$, which implies that 
$$
q=
\alpha (\Cone \pi)^\ast(\tilde{q}^1)+(1-\alpha) (\Cone \pi)^\ast(\tilde{q}^2)
=(\Cone \pi)^\ast(\alpha \tilde{q}^1 + (1-\alpha)\tilde{q}^2)
$$
The map $(\Cone \pi)^\ast$ is injective, therefore $p=\alpha \tilde{q}^1 + (1-\alpha)\tilde{q}^2$. Since $p$ is a vertex and $0<\alpha<1$, we conclude that $\tilde{q}^1=\tilde{q}^2$. Therefore $q^1=q^2$.
Part (4): By \cite[Proposition 5.14]{kharoof2022simplicial} 
every deterministic distribution is a vertex in the polytope of simplicial distributions. Thus we can characterize deterministic distributions as the only non-contextual vertices. Then we obtain this result from parts (1) and (3).
\end{proof}

\begin{cor}\label{Cor: Collap}
A distribution $p\in \sDist(\Cone(X/\sigma))$ is a contextual vertex if and only if $(\Cone\pi)^*(p)$ is a contextual vertex. 
\end{cor}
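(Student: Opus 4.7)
The plan is to derive this corollary as an immediate consequence of Theorem \ref{thm:collapsing}, specifically by combining its parts (1) and (3). The property of being a ``contextual vertex'' is the logical conjunction of two independent conditions: being contextual, and being a vertex (extreme point) of the polytope of simplicial distributions. If both of these conditions are separately preserved under the correspondence $p \leftrightarrow (\Cone\pi)^*(p)$ in both directions, then their conjunction is too.

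More concretely, I would first note that part (1) of Theorem \ref{thm:collapsing} provides the equivalence that $p$ is contextual if and only if $q = (\Cone\pi)^*(p)$ is contextual, and part (3) provides the equivalence that $p$ is a vertex if and only if $q$ is a vertex. For the forward direction, assuming $p$ is a contextual vertex, parts (1) and (3) applied separately yield that $q$ is both contextual and a vertex, i.e., a contextual vertex. For the reverse direction, assuming $q$ is a contextual vertex, the same two parts (used in the other direction) give that $p$ is contextual and $p$ is a vertex, hence a contextual vertex.

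There is no real obstacle here since all the work has already been done in Theorem \ref{thm:collapsing}. The only subtlety worth commenting on is that the injectivity of $(\Cone\pi)^*$, established in part (1) of Lemma \ref{Lem:ColDis}, ensures that $p$ is uniquely determined by $q$, so that the biconditional statements compose unambiguously. No additional argument involving collapsed distributions or the structure of $\dDist$ is required at this stage.
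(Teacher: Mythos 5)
Your proposal is correct and matches the paper's own proof, which likewise deduces the corollary directly from parts (1) and (3) of Theorem \ref{thm:collapsing}. No further comment is needed.
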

\Proof{
This follows directly  from parts (1) and (3) of Theorem \ref{thm:collapsing}.
}

{\rm
\rem{\label{rem:CollapT}
{
{\rm
The conclusions of Theorem \ref{thm:collapsing} and Corollary \ref{Cor: Collap} hold for more general kinds of collapsing maps obtained by collapsing a set of edges in sequence.
}
}
}}

\subsection{Application to Bell inequalities}
\label{sec:application-vell-ineq}

Consider the collapsing map $\pi:X\to X/\sigma$ and suppose that the Bell inequalities for the scenario 
$\Cone(X)$ are known.
By part (1) of Theorem \ref{thm:collapsing} a simplicial distribution $p \in \sDist(\Cone(X/\sigma))$ is non-contextual if and only if 
$q=(\Cone\pi)^\ast(p) \in \sDist(\Cone(X))$ is non-contextual.
This is equivalent to the condition that $q$ satisfies the Bell inequalities for the scenario $\Cone(X)$.
From these Bell inequalities we can extract those for the collapsed scenario $\Cone(X/\sigma)$.

Let us illustrate how the collapsing technique can be applied to cycle scenarios. 
Let $\pi: C_4 \to C_3$ denote the map that collapses one of the edges in the $4$-{circle} space as in Fig.~(\ref{fig:PicA0}).
\begin{figure}[h!]
\centering
\begin{subfigure}{.49\textwidth}
  \centering
  \includegraphics[width=.6\linewidth]{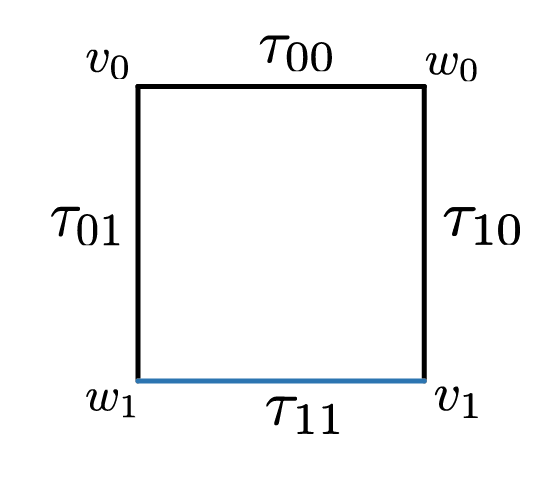}
  \caption{}
  \label{fig:PicA0-a}
\end{subfigure}%
\begin{subfigure}{.49\textwidth}
  \centering
  \includegraphics[width=.64\linewidth]{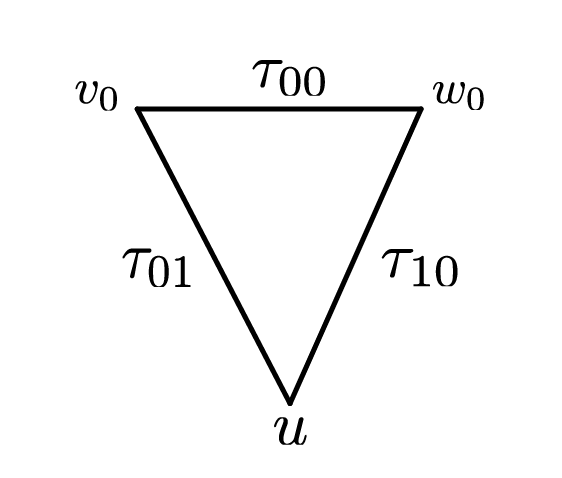}
  \caption{}
  \label{fig:PicA0-b}
\end{subfigure}
\caption{The edge $\tau_{11}$ (blue) is collapsed.
}
\label{fig:PicA0}
\end{figure} 
By Proposition \ref{pro:fConvex} a simplicial distribution $p\in \sDist(\tilde{C}_3)$ is specified by a tuple
$$
p=(p_{(c,v_0)},p_{\tau_{00}}
,p_{(c,w_0)},
p_{\tau_{10}},p_{(c,u)},
p_{\tau_{01}})
$$ 
where each entry is a distribution on $\ZZ_2$.
On the other hand, a simplicial distribution $q$ on $\tilde C_4$ is specified by a tuple 
$$(q_{(c,v_0)},q_{\tau_{00}}
,q_{(c,w_0)},
q_{\tau_{10}},q_{(c,v_1)},
q_{\tau_{11}}
,q_{(c,w_1)},
q_{\tau_{01}})
$$
Then the image of $p$ under the map $(\Cone\pi)^*$ gives us  $q=(p_{(c,v_0)},p_{\tau_{00}},
p_{(c,w_0)},
p_{\tau_{10}},p_{(c,u)},1,
p_{(c,u)},{p_{\tau_{01}}})$.
This latter simplicial distribution is non-contextual if and only if it satisfies the $4$-{circle inequalities} 
(see Eq.~(\ref{eq:CHSH-ineq-edge})):
$$
\begin{aligned}
0\leq p_{\tau_{00}}+
p_{\tau_{10}}+1-p_{\tau_{01}} \leq 2 \\
0\leq p_{\tau_{00}}+
p_{\tau_{10}}-1+p_{\tau_{01}} \leq 2 \\
0\leq p_{\tau_{00}}-
p_{\tau_{10}}+1+p_{\tau_{01}} \leq 2\\
0\leq -p_{\tau_{00}}+
p_{\tau_{10}}+1+p_{\tau_{01}} \leq 2.
\end{aligned}
$$
Half of these inequalities are trivial, so this set of inequalities is equivalent to  
$$
\begin{aligned}
p_{\tau_{00}}+
p_{\tau_{10}}-p_{\tau_{01}} \leq 1 \\ p_{\tau_{00}}+
p_{\tau_{10}}+p_{\tau_{01}} \geq 1 \\
p_{\tau_{00}}-
p_{\tau_{10}}+p_{\tau_{01}} \leq 1\\
-p_{\tau_{00}}+
p_{\tau_{10}}+p_{\tau_{01}} \leq 1
\end{aligned}
$$  
which constitute the non-trivial $3$-{circle inequalities}.

\begin{figure}[h!]
\centering
\begin{subfigure}{.49\textwidth}
  \centering
  \includegraphics[width=.77\linewidth]{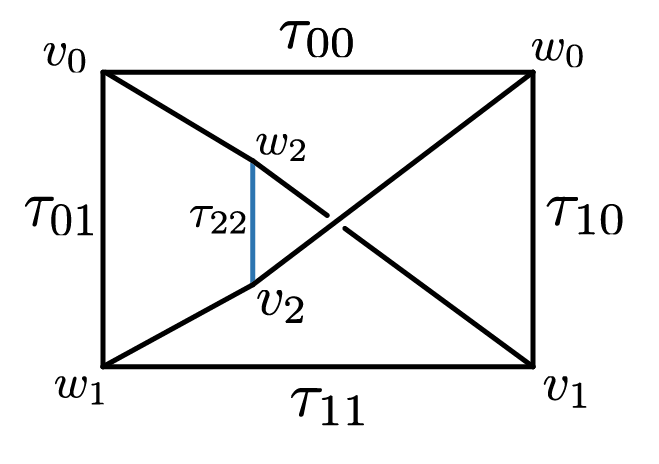}
  \caption{}
  \label{fig:PicA1}
\end{subfigure}%
\begin{subfigure}{.49\textwidth}
  \centering
  \includegraphics[width=.81\linewidth]{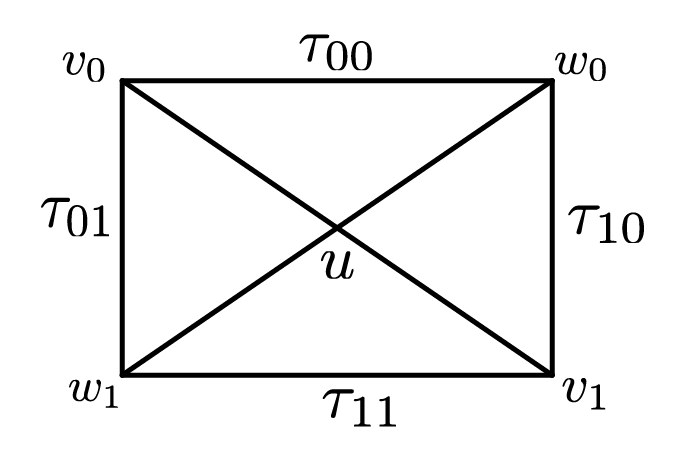}
  \caption{}
  \label{fig:PicA2}
\end{subfigure}
\caption{ 
(a) The complete bipartite graph $K_{3,3}$. (b) The graph obtained by collapsing the edge $\tau_{22}$ to the vertex $u$.
}
\label{fig:collapse-K33}
\end{figure} 

Now, we will apply this technique to find the Bell inequalities for the cone of the  $1$-dimensional space given in Fig.~(\ref{fig:PicA2}).
%
%
This space will be our collapsed space $X/\sigma$.
The $1$-dimensional simplicial set $X$ is the complete bipartite graph $K_{3,3}$ given in Fig.~(\ref{fig:PicA1}).
We denote the edge from $v_i$ to $w_j$ by $\tau_{ij}$.
{Note that the measurement space $\Cone(X)$ represents the $(3,3,2,2)$ Bell scenario.
It }has three kinds of Bell inequalities: (1) trivial, (2) {circle inequalities}, and (3) Froissart inequalities \cite{froissart1981constructive}. 
We are interested in the latter type. An example of Froissart inequalities is the following (see \cite[Eq.~(21)]{brunner2014bell}):  
$$
p_{(c,v_0)}^0 +p_{(c,w_0)}^0-p_{(c,\tau_{00})}^{00}-p_{(c,\tau_{01})}^{00}-p_{(c,\tau_{02})}^{00}-p_{(c,\tau_{10})}^{00}-p_{(c,\tau_{20})}^{00}-p_{(c,\tau_{11})}^{00}+p_{(c,\tau_{12})}^{00}+p_{(c,\tau_{21})}^{00} \geq -1
$$  
It will be convenient for us to convert this inequality to one that only contains distribution on edges. For this we will use 
Eq.~(\ref{eq:p-ab}).
This substitution gives us the following inequality:
\begin{equation}\label{equ:FroBellineq}
\begin{aligned}
p^0_{\tau_{12}}
+p^0_{\tau_{21}}
-p^0_{\tau_{00}}
 - p^0_{\tau_{01}}
&-p^0_{\tau_{02}}
-p^0_{\tau_{10}}
-p^0_{\tau_{20}}
 - p^0_{\tau_{11}}
\\
&-p_{(c,v_0)}^0
-p_{(c,w_0)}^0 
-p_{(c,v_1)}^0 
-p_{(c,w_1)}^0 
 \geq -6
\end{aligned}
\end{equation}
We observe that the only edges that do not appear in this inequality are $(c,v_2),(c,w_2),\tau_{22}$. Applying the symmetries of $X$, more precisely the automorphism group of the graph $K_{3,3}$, we can obtain $9$ distinct such inequalities in which the edges $(c,v_i),(c,w_j),\tau_{ij}$ do not appear where $i,j\in\{0,1,2\}$.
For example, for {$i=1, j=2$} we have
\begin{equation}\label{equ:FroBellineq2}
\begin{aligned}
p^0_{\tau_{02}}
+p^0_{\tau_{10}}
-p^0_{\tau_{00}}
 - p^0_{\tau_{01}}
&-p^0_{\tau_{22}}
-p^0_{\tau_{21}}
-p^0_{\tau_{20}}
 - p^0_{\tau_{11}}
\\
&-p_{(c,v_0)}^0
-p_{(c,w_0)}^0 
-p_{(c,v_2)}^0 
-p_{(c,w_1)}^0 
 \geq -6
\end{aligned}
\end{equation}
Note that these $9$ inequalities are in distinct orbits under the action of $\dDist(\Cone(X))$
since different edges appear in every one of them (see Example \ref{ex:chsh}).
We can find the number of the Froissart inequalities in every orbit. 
{A deterministic distribution $\delta^s$ fixes the inequality (\ref{equ:FroBellineq}) if and only if $(\delta^s_{\tau})^{0}=1$ for every edge $\tau$ that appears in the inequality. In this case, $\delta^s$ is the identity, i.e., $(\delta^s_\sigma)^{00}=1$ for every triangle $\sigma$ in $\Cone(X)$.} 
This implies that the size of the orbit of this inequality is equal to $|\dDist(\Cone(X))|=2^6=64$. Same counting argument works for the rest of the $9$ inequalities. Therefore there are $9 \cdot 64=576$ Bell-inequalities of this type.

Next, we apply our collapsing technique to generate a new Bell inequality, i.e., one that is not a {circle inequality} for the cone of the scenario given in  Fig.~(\ref{fig:PicA2}).
We will use the following  collapsing map $\pi:X\to X/\sigma$ where $\sigma=\tau_{22}$.
For a given {circle} on $X$ there is a corresponding {circle inequality}.
Every such inequality will appear as a Bell inequality in the collapsed scenario $X/\sigma$ if the corresponding {{circle}} does not contain the collapsed edge $\tau_{22}$. 
If the {{circle}} contains $\tau_{22}$ the resulting Bell inequality will be a {circle inequality} of size one less as in Definition \ref{def:cycle-ineq}.
Given a simplicial distribution $p\in \sDist(X/\sigma)$ the image $q=(\Cone\pi)^*(p)$ satisfies
$$
q_{\tau_{22}}^0=1\;\text{ and }\; q_{(c,w_2)}^0=q_{(c,v_2)}^0=p_{(c,u)}^0. 
$$ 
Substituting this in {(\ref{equ:FroBellineq2}) } we obtain the following Bell inequality of the scenario $X/\sigma$:
\begin{equation}\label{equ:NewBellineq}
\begin{aligned}
p^0_{\tau_{02}}
+p^0_{\tau_{10}}
-p^0_{\tau_{00}}
 - p^0_{\tau_{01}}
-p^0_{\tau_{21}}
-p^0_{\tau_{20}}
 - p^0_{\tau_{11}}
 -p_{(c,v_0)}^0
-p_{(c,w_0)}^0 
-p_{(c,u)}^0 
-p_{(c,w_1)}^0 
 \geq -5
\end{aligned}
\end{equation}
This inequality is a new Bell inequality, that is, it is not a {circle inequality}, and it belongs to a scenario that is not a Bell scenario.
The latter observation implies that going beyond Bell scenarios can produce simpler Bell inequalities that are not {circle inequalities}; see \cite{pironio2014all}. 

\Rem{\label{rem:conjecture}
{\rm
Let $X$ be a $1$-dimensional simplicial set. Consider a  Bell inequality for the cone scenario $\Cone(X)$ expressed in the edge coordinates (see Proposition \ref{pro:fConvex}).
Then in the known examples the edges that appear with non-trivial coefficients in this Bell inequality form a loop {(i.e., a {circle} with possible self intersections)} on $X$. It is a curious question whether this observation holds for every $1$-dimensional $X$. If so, it gives a topological restriction on the form of possible Bell inequalities hence a nice structural result in contextuality.  
}
}

\subsection{Detecting contextual vertices} 

In this section, the $1$-circle $C_1$ will play a fundamental role to detect contextual vertices in the scenarios of interest.
Let $\tau$ denote the generating $1$-simplex of $C_1$. 
A simplicial distribution $p\in \sDist(\tilde C_1)$
is specified by 
$$(p_{(c,\tau)}^{00},p_{(c,\tau)}^{01},p_{(c,\tau)}^{10},p_{(c,\tau)}^{11})$$
where 
$$p_{(c,\tau)}^{ab} \in [0,1],\;\;\sum_{a,b}p_{(c,\tau)}^{ab}=1 \;\;\text{ and }\;\; 
p_{(c,\tau)}^{00}+p_{(c,\tau)}^{01}=p_{(c,\tau)}^{00}+p_{(c,\tau)}^{11}.$$
This implies that $p_{(c,\tau)}^{01}=p_{(c,\tau)}^{11}$. Therefore
the polytope $\sDist(\tilde C_1) \subset \RR^3$ 
is a triangle with two deterministic vertices and {a unique} contextual vertex $p_{-}$ given in Fig.~(\ref{fig:D-S-vert1}).

\begin{figure}[h!]
\centering
\begin{subfigure}{.66\textwidth}
  \centering
  \includegraphics[width=.3\linewidth]{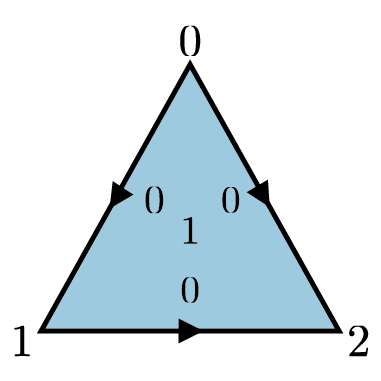}\;\;\;\;\;
    \includegraphics[width=.3\linewidth]{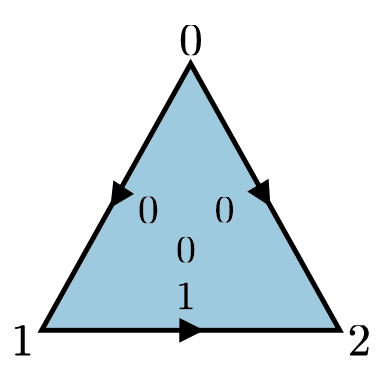}
  \caption{}
  \label{fig:SimpDetVertex}
\end{subfigure}%
\begin{subfigure}{.33\textwidth}
  \centering
  \includegraphics[width=.6\linewidth]{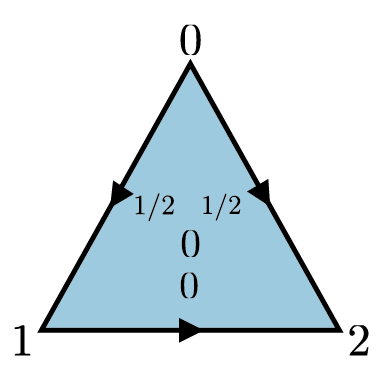}
  \caption{}
  \label{fig:SVertPic}
\end{subfigure}
\caption{{The $1$-cycle scenario is obtained by identifying the {edges $(0,1)$ and $(1,2)$} of a triangle.}
Deterministic (a) and contextual (b) vertices  of the $1$-cycle scenario.
}
\label{fig:D-S-vert1}
\end{figure}

The following example shows how to obtain a contextual vertex in an arbitrary cycle scenario from the contextual vertex in Fig.~(\ref{fig:SVertPic}) using  the collapsing technique.

\Ex{\label{ex:DetectPR}
{\rm
Let $\pi:C_N \to C_1$ denote the map that collapses $\tau_2,\cdots,\tau_N$ in the $N$-{circle} scenario.
We will write $\tau=\tau_1$ for notational simplicity.
Let $q=(\Cone\pi)^*(p_-)$ where $p-$ is the unique contextual vertex of $\sDist({\tilde C_1})$.
{We have}
$$
q_{(c,
d_i(\tau))}^0=(p_{-})_{(c,
d_i(\tau))}^0=1/2$$
(see Fig.~(\ref{fig:SVertPic})). Then {by Eq.~(\ref{eq:collDist})} for $\tau'=\tau_2$ and $\tau_N$ we have
$$q_{(c,
\tau')}=p_{+}.$$
We can continue this way using Eq.~(\ref{eq:collDist}) to obtain that for an edge $\tau'$ of the $N$-cycle scenario we have
$$
q_{(c,\tau')}=\left\lbrace
\begin{array}{ll}
p_{-} & \tau'=\tau \\
p_{+} & \text{otherwise.}
\end{array}
\right.
$$
According to Corollary \ref{Cor: Collap}, $q$ is a contextual vertex in the $N$-cycle scenario; see Fig.~(\ref{fig:CnVerPic}). 
}
}

\begin{figure}[h!]
\centering
  \includegraphics[width=.35\linewidth]{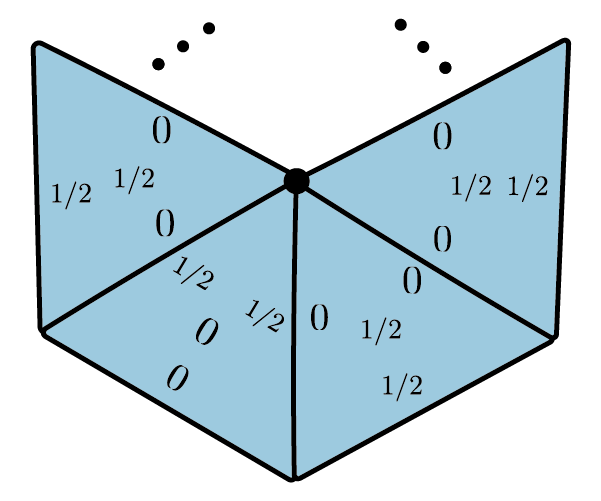}
\caption{ 
} 
\label{fig:CnVerPic}
\end{figure}

The vertex detected in Example \ref{ex:DetectPR}   generalizes the PR boxes defined in Example \ref{ex:chsh}. 
 
\Def{\label{def:PR-box}
A {\it PR box} {on an $N$-cycle scenario} is a simplicial distribution $p\in \sDist(\tilde C_N)$ such that $p_{\sigma}=p_\pm$ for $\sigma\in \set{\sigma_{i}:\,i=1\cdots,N}$ with the further restriction that the number of $p_-$'s is odd.
}

It is clear from the definition that there are $2^{N-1}$ PR boxes on the $N$-cycle scenario. All of them can be obtained from the one given in Example \ref{ex:DetectPR} by the action of $\dDist(\tilde C_N)$ in a way similar to the CHSH scenario discussed in Example \ref{ex:chsh}. Therefore by Proposition \ref{pro:action} the PR boxes are contextual vertices in the cycle scenario.

{Next we describe a $1$-dimensional simplicial set obtained by gluing $n$ copies of $C_1$ at their vertex.
More explicitly, this simplicial set, denoted by $\vee_{i=1}^n C_1$,  consists of the generating $1$-simplices $\tau_1,\cdots,\tau_n$ with the identifying relations
$$
d_0\tau_i =v= d_1\tau_i,\;\;\; i=1,\cdots,n,
$$ 
where $v$ is the unique vertex. 
The cone space $\Cone(\vee_{i=1}^n C_1)$ consists of $n$ triangles given by $(c,\tau_i)$ where $i=1,\cdots,n$.
}
We  consider simplicial distributions on the cone of $\vee_{i=1}^n C_1$.
Such a distribution $p$ is determined by the $n$ distributions $p_{(c,\tau_i)}$ on $\ZZ_2^2$.
For convenience of notation we will write $(p_1,\cdots,p_n)$ for this tuple of distribution, i.e., $p_i=p_{(c,\tau_i)}$.

\begin{pro}\label{pro:ConeofCircCube}
The polytope $\sDist(\Cone(\vee _{i=1}^n C_1))$ 
can be identified with
the following subpolytope of the $(n+1)$-cube
\begin{equation}\label{eq:subpolytope}
\{(a_1,\dots,a_n,b) \in [0,1]^n:\,\frac{1-a_i}{2} \leq b \leq \frac{1+a_i}{2},\; \forall 1\leq i \leq n \}
\end{equation}
\end{pro}
\begin{proof}
The non-degenerate edges in 
$\Cone(\vee _{i=1}^n C_1)$ are $\tau_1,\dots,\tau_n,(c,v)$, and for every $1\leq i \leq n$ the non-degenerate triangle $(c,\tau_i)$ has the edges $(c,v),\tau_i,(c,v)$. By Proposition \ref{pro:fConvex} and Corollary  \ref{cor:image-f} we see that $\sDist(\Cone(\vee _{i=1}^n C_1))$ can be identified with  the set of $(a_1,\dots,a_n,b) \in [0,1]^n$ satisfying the following inequalities: 
%
%
\begin{equation}\label{eq:Rest}
 \begin{array}{rcl}
b + a_i +  b \geq 1 \\ 
b + a_i -  b \leq 1  \\
b - a_i +  b \leq 1 \\
-b + a_i +  b \leq 1 
\end{array} 
\end{equation}
%
%
for every $1\leq i \leq n$. 
The set of inequalities in (\ref{eq:Rest}) is equivalent to $ \frac{1-a_i}{2} \leq b \leq \frac{1+a_i}{2}$, $1\leq i\leq n$.
\end{proof}

\begin{pro}\label{pr: VerWedge}
The polytope $\sDist(\Cone(\vee _{i=1}^n C_1))$ has $2^{n}+1$ vertices: 
\begin{enumerate}
\item There are two deterministic vertices given by 
$(\delta^{00},\cdots,\delta^{00})$ and $(\delta^{11},\cdots,\delta^{11})$.

\item The contextual vertices are of the form $(p_1,\cdots,p_n)$ where $p_i\in \{p_+,p_-\}$ for every $1 \leq i \leq n$   with at least one $j$ satisfying $p_j=p_-$.
\end{enumerate} 
\end{pro}
\begin{proof}
The edge $(c,v)$ appears twice in every non-degenerate triangle of $\Cone(\vee _{i=1}^n C_1))$, 
thus every outcome assignment $s$ on this measurement space
is determined by $s_{(c,v)}\in \set{0,1}$.
Therefore we have only 
$(\delta^{00},\cdots,\delta^{00})$ and $(\delta^{11},\cdots,\delta^{11})$ as deterministic distributions.
Now, let us denote the polytope in (\ref{eq:subpolytope}) by $P$ and find its vertices. Given an element $
(a_1,\dots,a_n,b)\in P$ such that $a_j\notin \{0,1\}$ for some $1\leq j \leq n$, there exists distinct $a'_j,a''_j \in [0,1]$, and $0 <\alpha<1$ such that 
$$
\frac{1-a'_j}{2} \leq q \leq \frac{1+a'_j}{2} , \;\; 
\frac{1-a''_j}{2} \leq q \leq \frac{1+a''_j}{2} , 
\;\; \text{and} \;\;\alpha a'_j+(1-\alpha)a''_j=a_j 
$$ 
Therefore we have
$$
(a_1,\dots,a_n,q)=\alpha(a_1,\dots,a'_j,\dots,a_n,q)+(1-\alpha)(a_1,\dots,a''_j,\dots,a_n,q)
$$
We conclude that if $
(a_1,\dots,a_n,q)$ is a vertex in $P$, then $a_1,\dots,a_n \in \{0,1\}$. In the case that $a_1=\dots=a_n=1$, we have two vertices $(1,\dots,1,0)$ and $(1,\dots,1,1)$. 
Let $f:\Cone(\vee _{i=1}^n C_1) \to P$ denote the bijection  given in {Proposition} \ref{pro:ConeofCircCube}.
One can see that by applying the inverse of $f$
we obtain the two deterministic vertices 
$(\delta^{00},\cdots,\delta^{00})$ and $(\delta^{11},\cdots,\delta^{11})$. 
On the other hand, if $a_j=0$ for some $0 \leq j \leq 1$, then 
$$
\frac{1}{2}=\frac{1-0}{2} \leq q \leq \frac{1+0}{2}=\frac{1}{2}.
$$
We obtain that $q=\frac{1}{2}$. Therefore the rest of the vertices are of the form $(a_1,\dots,a_n,\frac{1}{2})$ where $a_i\in \{0,1\}$ for every $i$ and $a_j=0$ for at least  one $j$. By applying the inverse of $f$ we obtain the desired contextual vertices. 
\end{proof}

Our main result in this section relates a topological invariant, the fundamental group, to the number of contextual vertices.
Given a $1$-dimensional simplicial set $X$ regarded as a graph consider a maximal tree $T\subset X$.
The collapsing map can be applied to the edges in $T$ to obtain a map $\pi:X\to X/T$ where $X/T$ is of the form $\vee_{i=1}^{n_X} C_1$. The number $n_X$ is a topological invariant of the graph that gives the non-contractible {circle}s. This number is independent of the chosen maximal tree. The fundamental group  $\pi_1(X)$ is defined to be the free group on the set of $n_X$ edges in $X_1^\circ-T_1^\circ$; {see \cite[Section 1.A]{hatcher}.}

\begin{thm}\label{thm:GenrVert}
Let $X$ be a connected $1$-dimensional measurement space and $n_X$ denote the number of generators of the fundamental group $\pi_1(X)$. 
Then there exists at least $(2^{n_X}-1)2^{|X_0|-1}$ contextual vertices in $\sDist(\Cone(X))$.
\end{thm}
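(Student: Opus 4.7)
The plan is to combine the collapsing technique of Section \ref{sec:collapsing-measurement-spaces} with the action of $\dDist(\Cone(X))$ on $\sDist(\Cone(X))$ to construct the required family of contextual vertices.

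First I would fix a maximal tree $T\subset X$. Since $X$ is connected, $T$ has $|X_0|-1$ edges, so the complement $X_1^\circ\setminus T_1^\circ$ has exactly $n_X$ edges. Collapsing the edges of $T$ in sequence produces a map $\pi:X\to X/T\cong \vee_{i=1}^{n_X}C_1$, and by Remark \ref{rem:CollapT} the conclusion of Corollary \ref{Cor: Collap} applies to $\pi$: the map $(\Cone\pi)^*$ sends contextual vertices to contextual vertices. Pulling back the $2^{n_X}-1$ contextual vertices of $\sDist(\Cone(\vee_{i=1}^{n_X}C_1))$ identified in Proposition \ref{pr: VerWedge} yields $2^{n_X}-1$ contextual vertices in $\sDist(\Cone(X))$. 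A direct computation using Eq.~(\ref{eq:collDist}) shows that each pulled-back vertex $q$ has $q_{(c,\tau)}=p_+$ for every tree edge $\tau\in T_1^\circ$, while $q_{(c,\tau)}\in\{p_+,p_-\}$ on the non-tree edges with at least one $p_-$.

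Next I would enlarge this collection using the action of $\dDist(\Cone(X))$. By Lemma \ref{lem:DetCone} this group has order $2^{|X_0|}$, and by Proposition \ref{pro:action} its orbits of contextual vertices consist of contextual vertices. On a triangle $\sigma=(c,\tau)$ whose boundary edge $\tau$ has endpoints $v_0,v_1$, face-compatibility yields $\varphi_\sigma=(\epsilon_{v_0},\epsilon_{v_0}+\epsilon_{v_1})$ where $\epsilon_v=\varphi_{(c,v)}$. Using Eq.~(\ref{eq:action}) one checks that $\delta^\varphi\cdot p_\pm=p_\pm$ when $\epsilon_{v_0}+\epsilon_{v_1}=0$ and $\delta^\varphi\cdot p_\pm=p_\mp$ otherwise. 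Hence $\delta^\varphi$ stabilizes $q$ if and only if $\epsilon_{v_0}+\epsilon_{v_1}=0$ on every edge of $X$; connectedness of $X$ forces $\epsilon$ to be constant, so the stabilizer has order $2$ and the orbit of $q$ has size $2^{|X_0|-1}$.

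The main step --- and the expected obstacle --- is verifying that the $2^{n_X}-1$ pullback orbits are pairwise disjoint, so that together they contribute $(2^{n_X}-1)\cdot 2^{|X_0|-1}$ distinct contextual vertices. Encode each pullback $q$ by its sign pattern $s\in\ZZ_2^{X_1^\circ}$ defined by $s(\tau)=0$ if $q_{(c,\tau)}=p_+$ and $s(\tau)=1$ if $q_{(c,\tau)}=p_-$; by construction $s$ vanishes on $T_1^\circ$. Two pullbacks $q_1,q_2$ lie in a common orbit precisely when $s_1+s_2$ lies in the image of the coboundary map $d:\ZZ_2^{X_0}\to\ZZ_2^{X_1^\circ}$ defined by $(d\epsilon)(\tau)=\epsilon_{v_0}+\epsilon_{v_1}$. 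Since $s_1+s_2$ vanishes on $T_1^\circ$, any such $\epsilon$ must be constant along the spanning tree $T$, hence constant on all of $X$, so $d\epsilon=0$ and $s_1=s_2$. This confirms the orbits are disjoint and produces the claimed lower bound.
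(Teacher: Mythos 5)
Your proposal is correct, and its overall skeleton (collapse a maximal tree to get $\pi:X\to\vee_{i=1}^{n_X}C_1$, pull back the $2^{n_X}-1$ contextual vertices of the wedge via $(\Cone\pi)^*$, then act by $\dDist(\Cone(X))$ and show each orbit has size $2^{|X_0|-1}$) is exactly the paper's. The stabilizer computation also agrees: the paper likewise observes that a deterministic distribution fixes a vertex lying in $G_\pm(\Cone(X))$ only if it restricts to $\delta^{00}$ or $\delta^{11}$ on every triangle, giving a stabilizer of order $2$.

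The one place where you genuinely diverge is the step you flagged as the expected obstacle: showing the $2^{n_X}-1$ orbits are pairwise disjoint. The paper argues topologically: for two distinct wedge vertices differing at index $j$, it picks a circle $C\subset X$ that $\pi$ collapses onto the $j$-th loop, and observes that one pullback restricts on $\Cone(C)$ to a PR box (contextual) while the other restricts to $e_+$ (non-contextual); since the $\dDist$-action preserves (non)contextuality, the two cannot be in the same orbit. You instead encode each pullback by its sign pattern $s\in\ZZ_2^{X_1^\circ}$ and note that the action changes $s$ by a coboundary $d\epsilon$, so two pullbacks share an orbit iff $s_1+s_2\in\operatorname{Im}(d)$; since both patterns vanish on the spanning tree, any such $\epsilon$ is constant and hence $s_1=s_2$. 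Your argument is cleaner in that it avoids having to exhibit the circle $C$ (which exists --- take the non-tree edge together with the tree path joining its endpoints --- but the paper leaves this implicit), and it identifies the orbit set as the cokernel of the coboundary map, which is a slightly more structural statement. The paper's version has the virtue of reusing contextuality itself as the separating invariant. Both are complete; no gap.
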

\begin{proof}
For simplicity we will write $n=n_X$.
 Let $T$ be a maximal tree in $X$. We have the collapsing map 
$\pi:X \to X/T=\vee_{i=1}^{n} C_1$. 
According to Corollary \ref{Cor: Collap}  applying $(\Cone\pi)^\ast$ to the contextual vertices  described in Proposition \ref{pr: VerWedge} we obtain contextual vertices of $\sDist(\Cone(X))$.
First, we will show that these vertices are in different orbits under the action of $\sDist(\Cone(X))$. 
Given two different contextual vertices $(p_1,\dots,p_n)$ and $(q_1,\dots,q_n)$ of $\sDist(\Cone(\vee _{i=1}^{n_X} C_1))$, 
  there exists $1 \leq j \leq n$ such that $p_j=p_{-}$ and $q_j=p_{+}$. 
  Fix one circle $C$ in $X$ such that the image of $\pi|_{C}$ contains only $\tau_j$ as a non-degenerate $1$-simplex (i.e, $C$ collapsed to the circle generated by {$\tau_j$}).
We have 
$$
(\Cone\pi)^\ast(p_1,\dots,p_n)|_{\Cone(C)} = \Cone(\pi|_C)^\ast(p_j)=\Cone(\pi|_C)^\ast(p_{-})
$$
which is the PR box of Example \ref{ex:DetectPR}.
On the other hand,
 one can see using the same technique of Example \ref{ex:DetectPR} that the restriction of $(\Cone\pi)^\ast(q_1,\dots,q_n)$ to $\Cone(C)$ is the non-contextual distribution $e_{+}$, the identity element of  $G_\pm({\Cone(X)})$ (see Definition \ref{def:G-plus-minus}). 
Therefore these two restrictions are not in the same orbits under the action of $\sDist(\Cone(C))$. 
 We conclude that $(\Cone\pi)^\ast(p_1,\dots,p_n)$ and $(\Cone\pi)^\ast(q_1,\dots,q_n)$ are not in the same orbit under the 
action of $\sDist(\Cone(X))$. 
So far we have proved that there are $2^{n}-1$ contextual vertices in $\sDist(\Cone(X))$ that lie in different orbits. 
Observe that every such vertex has $p_{\pm}$ on every non-degenerate triangle of $\Cone(X)$, thus the only {two outcome assignments 
{that fix this vertex}
are those that} restrict to  $\delta^{00}$ {on every non-degenerate triangle} or $\delta^{11}$ {on every non-degenerate triangle}.
 We conclude that the orbit of such a vertex has $\frac{|\dDist(\Cone(X))|}{2}=\frac{2^{|X_0|}}{2}=2^{|X_0|-1}$ elements. By Proposition  \ref{pro:action} all these distributions are contextual vertices.     
\end{proof}

\Cor{
A simplicial distribution in the group $G_\pm(CX)$ {(Proposition \ref{pro:G-plus-minus})}  
is non-contextual if and only if it belongs to 
the subgroup 
$$
e_{+}\cdot\dDist(\Cone(X))=\set{e_+\cdot \delta^s:\,\delta^s\in \dDist({\Cone(X)})}.
$$
}
\Proof{
 The element $e_+$ is non-contextual since we have
$$
e_+ = \frac{1}{2}\delta^{t} + \frac{1}{2}\delta^r
$$ 
{where $t_{(c, \tau )} = (0, 0)$
and $r_{(c, \tau)} = (1,0)$ for every $\tau \in X_1^\circ$},
and $(e_+\cdot \delta^s)_\sigma = p_{\pm}$ for every 
{non-degenerate} $2$-simplex $\sigma$ of $\Cone(X)$.
Therefore the coset $e_{+}\cdot\dDist(\Cone(X))$ is a subset of $G_\pm(CX)$ and all its elements are non-contextual.
Moreover, since $\sDist(\Cone(X))$ is a commutative monoid and $e_{+} \cdot e_{+}=e_{+}$, the subset $e_{+}\cdot \dDist(\Cone(X))$ is in fact a subgroup of $G_\pm(CX)$. 


 
To conclude that the remaining distributions are all contextual we will use Theorem  \ref{thm:GenrVert}.
For a $1$-dimensional (connected) simplicial set $X$, the Euler characteristic \cite{hatcher} is given by $\chi(X)=1-n_X$. Alternatively, it can be computed using the formula
$$
\chi(X) = |X_0|-|X_1^{\circ}|.
$$
Therefore we have $n_X=|X_1^{\circ}|-|X_0|+1$. 
Using this we find that the number of contextual vertices detected in 
Theorem \ref{thm:GenrVert} is equal to $2^{|X_1^{\circ}|}-2^{|X_0|-1}$. 
On the other hand, we have
$$
|G_\pm(\Cone X)| - |e_+\cdot \dDist(\Cone X)| = 2^{|X_1^{\circ}|}-2^{|X_0|-1}
$$
where we used the fact that the size of the coset is half the size of $\dDist(\Cone(X))$ {since it is the orbit of $e_+$}. 
Therefore the contextual vertices detected in Theorem \ref{thm:GenrVert} are precisely those distribution in $G_\pm(\Cone X) - \left(e_+\cdot \dDist(\Cone X)\right)$.
}

\Ex{
{\rm
The Bell scenario $(m,n,2,2)$ represented by the complete bipartite graph $K_{m,n}$. This graph has $m+n$ vertices and $m\cdot n$ edges. Therefore, by Theorem \ref{thm:GenrVert} we have at least $2^{mn}-2^{m+n-1}$ contextual vertices in the scenario $(m,n,2,2)$.  
}
}

\subsection{Contextual vertices of the cycle scenario}    

    
We conclude this section by showing that PR boxes  constitute all the contextual vertices {in the cycle scenario} using {the} collapsing method. Let us set $X=\tilde C_N$.   
The polytope $P_{X}$ associated to the cycle scenario has dimension $\RR^{2N}$. 
This is a consequence of Proposition \ref{pro:fConvex} since the number of non-degenerate edges of the $N$-cycle space is $2N$.    

\Lem{\label{lem:det-xi-non-contextual}
Let $p$ be a simplicial distribution on the $N$-cycle scenario such that $p_{x_i}$ is deterministic for some $1\leq i\leq N$.
Then $p$ is non-contextual.
}
\Proof{
Assume that $p_{x_i} = \delta^a$ for some $a$. Let $q$ be the deterministic distribution given by $q_{\sigma_j}=\delta^{00}$ for all $1\leq j\leq N$ distinct from $i$ and $q_{\sigma_i} = \delta^{11}$. Since $(q\cdot p)_{x_i}=\delta^0$ and $p$ is non-contextual if and only if $q\cdot p$ is non-contextual, we can assume that $a=0$.
Let $\bar X$ denote the quotient $X$ obtained by collapsing  $x_i$. The resulting space $\bar X$ is a classical $N$-disk. Consider the  map
$$
\pi^*: \sDist(\bar X) \to \sDist(X)
$$ 
induced by the quotient map $\pi:X\to \bar X$. There exists a simplicial distribution $\tilde p$ on the classical $N$-disk such that $\pi^*(\tilde p)=p$. Since every distribution on the $N$-disk is non-contextual $p$ is non-contextual.  
}

\Pro{\label{pro:contextual-ver-Ncycle}
Contextual vertices of the polytope of simplicial distributions on the $N$-cycle scenario are given by the PR boxes.
}
\Proof{
By Lemma \ref{lem:det-xi-non-contextual} for a vertex $p$ there cannot be a deterministic edge on any of the $x_i$'s. By Corollary \ref{cor:vertex} $p$ is a vertex if and only if $\rank(p)=2N$. Therefore there are precisely $N$ deterministic {edges $z_1,\cdots,z_N$, all of which lie on the boundary.}
The distribution $p_{\sigma_i}$, which is given by $p_\pm$, for each triangle 
 has rank $2$.
Let $(A,b)$ be as in Corollary \ref{cor:H-description} so that $P_X = P(A,b)$.
{Let $p$ be the distribution such that $p|_{\sigma_{i}} = p_{\pm}$ for every $\sigma_{i}$ and let $\zZ_{p}$ index the inequalities tight at $p$. There are $2N$ such tight inequalities since there are $N$ non-degenerate simplices $\sigma_{i}$ and each corresponding distribution $p_{\sigma_{i}}$ has two zeros; see Fig.~(\ref{fig:CnVerPic}). Denoting $A[\zZ_{p}] := A_{p}$, we order the columns of $A_{p}$ by $x_1\cdots ,x_N,z_1,\cdots,z_N$. Up to elementary row operations we have}%
%
$$
A_{p} = \left( 
\begin{matrix}
E & \zero \\
\zero & I
\end{matrix}
\right)
$$
Then $\rank(A_{p}) = N+ \rank (E)$. Multiplying each row of $E$ by $-1$ if necessary we can write
$$
E = \left( \begin{matrix}
1 & (-1)^{c_1+1} & 0 & \cdots & 0 \\
0 & 1 &  (-1)^{c_2+1} & \cdots & 0 \\
\vdots   & \vdots & \ddots & \vdots & \vdots  \\
0   & \cdots & 0 & 1 & (-1)^{c_N+1} \\
(- 1)^{c_1+1} & 0 & \cdots & 0 & 1
\end{matrix}
\right)
$$
Let us define $c=\sum_{i=1}^N c_i \mod 2$. Then $\rank(E)=N$ if $c=1$, otherwise $\rank(E)=N-1$. In the former case we have $p_{x_i}^0 =1 - p_{x_i}^0$, which implies that $p_{x_i}^0=1/2$. Hence $p_{\sigma_i}$'s are all given by $p_{\pm}$. The condition that $c=1$ implies that the number of $\sigma_i$'s with $p_{\sigma_i}=p_-$ is odd. 
}

Two vertices $v, v'$ of a full-dimensional polytope $P\subset \RR^d$ are called neighbors if $\rank(A[\zZ_{v}\cap \zZ_{v'}]) = d-1$.


\Cor{
All neighbors of a PR box are deterministic distributions.
}    
\Proof{
A PR box $p$ corresponds to a non-degenerate vertex, meaning that the number $2N$ of tight inequalities is precisely the dimension of the polytope. 
One property of non-degenerate vertices is that, if $\zZ_{p}$ indexes the tight inequalities of a PR box $p$, and $\zZ$ is a set differing from $\zZ_{p}$ by one element, then $p' = A[\zZ]^{-1}b$ is also a vertex so long as it satisfies the remaining inequalities. 
In this case $p$ and $p'$ are neighbors.
A neighbor is obtained by replacing a tight inequality with another, which amounts to replacing one zero with another. 
Doing so will make one of $x_i$'s a  deterministic edge. Lemma \ref{lem:det-xi-non-contextual} implies that $p'$, if it is a vertex of $P_X$, is non-contextual, thus a deterministic distribution. 
}

  
\subsection{Conclusion}
 
{In this paper, we demonstrate novel techniques from the theory of simplicial distributions introduced in \cite{okay2022simplicial}.
We present topological proofs for the sufficiency of the {circle inequalities} for the non-contextuality of distributions on the cycle scenario.
This proof extends the topological proof of the CHSH scenario in \cite{okay2022simplicial}.
We go beyond the cycle scenarios and study the flower scenario depicted in Fig.(\ref{fig:flower}) that generalizes the bipartite Bell scenarios consisting of $2$ measurements for Alice and $m$ measurements for Bob.
Our main insight in the proof is the topological interpretation of Fourier--Motzkin elimination and the gluing and extension methods of distributions on spaces. 
We also explore two new features of scenarios available in the simplicial setting: (1) Collapsing measurement spaces to detect contextual vertices and (2) Applying the monoid structure of simplicial distributions to generate vertices.
}%
{An appealing feature of the 
collapsing technique
featured here is that previously unknown types of Bell inequalities can be discovered from those that are known; see Section~\ref{sec:application-vell-ineq}. These Bell inequalities may have desirable properties, such as having quantum violations that are more robust to noise, which may be of both theoretical and practical interest.
}



  
\paragraph{Acknowledgments.}
This work is supported by the US Air Force Office of Scientific Research under award
number FA9550-21-1-0002.  

\bibliography{bib.bib}
\bibliographystyle{ieeetr}

\appendix

\end{document}